\newtheorem{thm}{Theorem}
\newtheorem{lemma}{Lemma}
\newtheorem{corollary}{Corollary}
\newdefinition{definition}{Definition}
\newtheorem{proposition}{Proposition}
\newproof{decision}{Decision Problem}
\journal{a journal}
\begin{document}

\begin{frontmatter}


\title{Characterization of canonical systems with six types of coins \\
for the change-making problem}

\author[1]{Yuma Suzuki}
\author[2]{Ryuhei Miyashiro\corref{cor1}}
\ead{r-miya@cc.tuat.ac.jp}
\cortext[cor1]{Corresponding author}
\affiliation[1]{organization={Graduate School of Engineering, Tokyo University of Agriculture and Technology},
addressline={2-24-16 Naka-cho},
city={Koganei-shi},
postcode={Tokyo 184-8588},
country={Japan}}
\affiliation[2]{organization={Institute of Engineering, Tokyo University of Agriculture and Technology},
addressline={2-24-16 Naka-cho},
city={Koganei-shi},
postcode={Tokyo 184-8588},
country={Japan}}

\begin{abstract}
This paper analyzes a necessary and sufficient condition for the change-making problem to be solvable with a greedy algorithm.
The change-making problem is to minimize the number of coins used to pay a given value in a specified currency system.
This problem is NP-hard, and therefore the greedy algorithm does not always yield an optimal solution.
Yet for almost all real currency systems, the greedy algorithm outputs an optimal solution.
A currency system for which the greedy algorithm returns an optimal solution for any value of payment is called a canonical system.
Canonical systems with at most five types of coins have been characterized in previous studies.
In this paper, we give characterization of canonical systems with six types of coins, and we propose a partial generalization of characterization of canonical systems.
\end{abstract}



\begin{keyword}
Change-making problem \sep
Greedy algorithm \sep
Characterization \sep
Canonical system \sep
Knapsack problem

\end{keyword}

\end{frontmatter}


\section{Introduction} \label{Section1}

The change-making problem is to minimize the number of coins used to pay a given value~$v$ in a currency system (hereinafter, system) $C = (c_1, c_2, \ldots, c_n)$, where $v$ and $c_i \ (i=1,2,\ldots,n)$ are positive integers, $c_i$ is the value of the $i\/$th type of coin in $C$, and $c_1 < c_2 < \cdots < c_n$.
Throughout this paper, we fix $c_1=1$ so that any value $v$ is payable in $C$.

The change-making problem is a special case of the knapsack problem and is known to be NP-hard~\cite{KoZa94}.
Thus, a polynomial-time algorithm for this problem is unlikely to exist unless $\mbox{P} = \mbox{NP}$, whereas several pseudo polynomial-time algorithms based on dynamic programming have been proposed to date; see, for example, \cite{ChHe20b}.

A simple algorithm based on the greedy principle is to repeatedly pay the coin whose value is largest but less than or equal to the rest of the value unpaid.
This greedy algorithm, of course, does not necessarily produce an optimal solution.
For example, to pay the value $v = 6$ in the system $C = (1, 3, 4)$, the greedy algorithm returns three coins ($6 = 4 + 1 + 1$), whereas the optimal solution involves only two coins ($6 = 3 + 3$).
However, for almost all real systems, the greedy algorithm yields an optimal solution for any value of payment.

For a given system, we refer to a value such that the greedy algorithm does not yield an optimal solution as a counterexample to the system.
If a system has no counterexample, we say that the system is canonical.

This paper considers a necessary and sufficient condition for systems to be canonical, that is, characterization of canonical systems.
Characterization of canonical systems was obtained for systems with up to five types of coins in previous studies.
The contribution of the present study is to characterize canonical systems with six types of coins.
In addition, a partial generalization of the characterization of canonical systems is given.

The rest of this paper is organized as follows.
Section~\ref{Section2} formally defines the change-making problem and the decision problem for whether a given system is canonical and introduces related results.
Section~\ref{Section3} derives characterization of canonical systems with six types of coins.
Finally, Section~\ref{Section4} describes a partial generalization of the characterization of canonical systems and presents our conclusions.

\section{Change-making problem and canonical systems} \label{Section2}

\subsection{Definition and characterization of canonical systems} \label{Section2.1}

The change-making problem is to minimize the number of coins used to pay a given value~$v$ in a system $C = (c_1, c_2, \ldots, c_n)$, where $v$ and $c_i \ (i = 1, 2, \ldots, n)$ are positive integers, $c_i$ is the value of the $i\/$th type of coin in $C$, and $1 = c_1 < c_2 < \cdots < c_n$.
The problem can be naturally formulated as the following integer programing problem:
\begin{align}
\text{minimize}   & \quad \sum_{i=1}^n x_i \notag \\
\text{subject~to} & \quad \sum_{i=1}^n c_i x_i = v, \notag \\
                  & \quad x_i \in \mathbb{Z}_{\geq 0} \quad(i = 1, 2, \ldots, n), \notag
\end{align}
where the nonnegative integer variable $x_i \ (i = 1, 2, \ldots, n)$ corresponds to the number of coins whose value is $c_i$ involved in paying the value $v$.

For given $v$ and $C$, we refer to a vector $\bm{x} = (x_1, x_2, \ldots, x_n)$ of a feasible solution of the integer programming problem as a \emph{representation} of~$v$ in $C$.
Similarly, for given $v$ and $C = (c_1, c_2, \ldots, c_n)$, an \emph{optimal representation} is an optimal solution vector for~$v$, and the \emph{greedy representation} is the feasible solution vector for $v$ given by the greedy algorithm in Table~\ref{Tab:GRDALGO}.
Note that for some $v$ and $C$, there may be multiple optimal representations, whereas the greedy representation is unique for any $v$ and $C$.
For instance, the optimal representations for $v = 12$ in $C = (1, 4, 6, 8)$ are $\bm{x} = (0, 1, 0, 1)$ and $(0, 0, 2, 0)$.

\begin{table}
\caption{Greedy algorithm}\label{Tab:GRDALGO}
\centering
\begin{tabular}{l}
\hline
\noalign{\smallskip}
\textbf{Input} $v$ and $C = (c_1, c_2, \ldots, c_n)$. \\
\textbf{Set} $\bm{x}=(x_1, x_2, \ldots, x_n)$ := $(0, 0, \ldots, 0)$. \\
\textbf{For} $i := n$ \textbf{downto} $1$ \textbf{do}: \\
\qquad\textbf{While} $c_i \leq v$ \textbf{do}: \\
\qquad\qquad $v$ := $v - c_i$ and $x_i$ := $x_i + 1$. \\
\textbf{Output} $\bm{x}=(x_1, x_2, \ldots, x_n)$. \\
\noalign{\smallskip}
\hline
\end{tabular}
\end{table}

Denote the total number of coins used in an optimal representation for $v$ in~$C$ by $\text{opt}_{C}(v)$, that is, $\text{opt}_{C}(v) = \sum_{i=1}^n x_i^{*}$ where $\bm{x}^{*}=(x_1^{*}, x_2^{*}, \ldots, x_n^{*})$ is an optimal representation for $v$ in~$C$.
Similarly, denote the total number of coins used in the greedy representation for~$v$ in~$C$ by~$\text{grd}_{C}(v)$.

We call the value $w \in \mathbb{Z}_{>0}$ a \emph{counterexample} to $C$ if $\text{opt}_{C}(w)< \text{grd}_{C}(w)$.
A system $C$ is called \emph{noncanonical} if there exists a counterexample to $C$; otherwise, $C$ is said to be \emph{canonical}.\footnote{Some terms other than canonical are used in the literature, such as standard~\cite{HuLe76}, greedy~\cite{CoCoSt08}, and orderly~\cite{AdAd10}.}
In other words, if $C$ is canonical, $\text{opt}_{C}(v) = \text{grd}_{C}(v)$ holds for any $v \in \mathbb{Z}_{>0}$. 

The change-making problem is NP-hard in general, but the greedy algorithm produces optimal solutions for almost all practical systems.
Accordingly, the following decision problem has received research attention.

\begin{decision}~\\
Instance: A system $C = (1, c_2, \ldots, c_n)$. \\
Task: Decide whether $C$ is canonical. 
\end{decision}

\noindent
For this decision problem, Chang and Gill~\cite{ChGi70} proposed an $O(c_n^3 n)$ algorithm, and Kozen and Zaks~\cite{KoZa94} proposed an $O(c_n n)$ algorithm.
These methods are polynomial with respect to~$c_n$ but not polynomial with respect to the input size of an instance of the change-making problem.
Later, Pearson~\cite{Pe05} proposed an $O(n^3)$ algorithm by bounding the number of candidates of the minimum counterexample by~$O(n^2)$.

Meanwhile, previous studies have characterized canonical systems with up to five types of coins.
Systems with one or two types of coins are obviously canonical, and the characterization of (non)canonical systems with three types of coins was given by Kozen and Zaks~\cite{KoZa94}.

\begin{thm}[Kozen and Zaks~\cite{KoZa94}] \label{Thm:3coin}
A system $C = (1, c_2, c_3)$ is noncanonical if and only if $0 < r < c_2 - q$ where $c_3 = q c_2 + r$ for $0 \leq r < c_2$.
\end{thm}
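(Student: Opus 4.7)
The plan is to compare the greedy representation of an arbitrary value $w$ against the best competing representation, parametrised by how many fewer $c_3$-coins the competitor uses. This will yield a closed-form expression for the coin-count difference, from which I can exhibit an explicit counterexample in one direction and rule out all counterexamples in the other.

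For the set-up, let $(a_3, a_2, a_1)$ denote the greedy representation of $w$ and $(b_3, b_2, b_1)$ an arbitrary alternative; since greedy maximises $c_3$ first, $b_3 \le a_3$, and I set $k = a_3 - b_3$. Combining $b_3 c_3 + b_2 c_2 + b_1 = a_3 c_3 + a_2 c_2 + a_1$ with $c_3 = qc_2 + r$ gives $b_2 c_2 + b_1 = (a_2 + kq)c_2 + (a_1 + kr)$, and the alternative with the fewest coins for this $k$ is obtained by packing as much as possible into $c_2$-coins, namely $b_2 = a_2 + kq + m$ and $b_1 = \ell$, where $m = \lfloor (a_1 + kr)/c_2 \rfloor$ and $\ell = (a_1 + kr) \bmod c_2$. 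A routine substitution using $a_1 + kr = mc_2 + \ell$ then shows that this alternative uses exactly $k(q + r - 1) - m(c_2 - 1)$ more coins than greedy, so $C$ is noncanonical if and only if this quantity is strictly negative for some $w$ and some $k \ge 1$.

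For the forward direction (assuming $0 < r < c_2 - q$) I plan to exhibit $w = (q+1)c_2$. A short check using $0 < r < c_2$ gives $c_3 \le w < 2c_3$, so the greedy representation is $a_3 = 1$, $a_2 = 0$, $a_1 = c_2 - r$; the representation $(0, q+1, 0)$ then uses $q + 1$ coins versus greedy's $c_2 - r + 1$, and $q + 1 < c_2 - r + 1$ is just the hypothesis rewritten.

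For the converse I will show that $k(q + r - 1) - m(c_2 - 1) \ge 0$ for every $w$ and every $k \ge 1$ whenever $r = 0$ or $r + q \ge c_2$. The pivotal observation is that $a_1 + kr \le (c_2 - 1) + k(c_2 - 1) < (k+1)c_2$, which forces $m \le k$. If $r = 0$ this gives $m = 0$ and the difference reduces to $k(q-1) \ge 0$ (noting $q \ge 2$ here, since $c_3 > c_2$); if $r + q \ge c_2$ then $k(q + r - 1) \ge k(c_2 - 1) \ge m(c_2 - 1)$. I expect the main obstacle to be the bound $m \le k$ itself, since this is the one step where the slacks $a_1 < c_2$ and $r < c_2$ must combine; once it is in hand, both cases of the converse collapse to one-line algebra.
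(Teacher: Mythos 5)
Your proof is correct. The paper does not supply its own proof of this theorem (it is cited from Kozen and Zaks), so there is nothing in the text to compare against; your argument stands on its own. The parametrization by $k = a_3 - b_3$, the reduction of the competing representation to the greedy $(c_2,1)$-decomposition of $w - b_3 c_3$, the closed-form surplus $k(q+r-1) - m(c_2-1)$, the explicit witness $w=(q+1)c_2$ in the forward direction, and the bound $m\le k$ (using $a_1<c_2$ and $r<c_2$) in the converse are all sound, and the two converse subcases close cleanly.
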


\begin{corollary}\label{Cor:3coin}
A system $C = (1, c_2, c_3)$ is canonical if and only if $r = 0$ or $c_2 - q \leq r$ where $c_3 = q c_2 + r$ for $0 \leq r < c_2$.
\end{corollary}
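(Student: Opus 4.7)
The plan is to derive the corollary directly by negating the condition in Theorem~\ref{Thm:3coin}. By definition, a system is canonical precisely when it is not noncanonical, so $C = (1, c_2, c_3)$ is canonical if and only if the noncanonicity criterion $0 < r < c_2 - q$ of the theorem fails, where $c_3 = q c_2 + r$ with $0 \leq r < c_2$ is the unique quotient-remainder decomposition of $c_3$ by $c_2$.

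Concretely, I would fix this decomposition and then negate the compound inequality $0 < r < c_2 - q$. The conjunction fails precisely when $r \leq 0$ or $r \geq c_2 - q$; since $r \geq 0$ is built into the decomposition, the first disjunct collapses to $r = 0$, producing exactly the stated condition $r = 0$ or $c_2 - q \leq r$. The boundary case $c_2 - q \leq 0$ is absorbed automatically, since then every admissible $r$ satisfies $c_2 - q \leq r$, which matches the intuition that $C$ is canonical whenever $c_3$ is a multiple of $c_2$ or is sufficiently large relative to~$c_2$.

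Because the whole argument is a logical contraposition of an already-stated theorem, no real obstacle arises. The only care needed is to keep the range $0 \leq r < c_2$ fixed throughout, so that no spurious values of $r$ are introduced when negating and so that the two conditions of Theorem~\ref{Thm:3coin} and Corollary~\ref{Cor:3coin} partition the same parameter space.
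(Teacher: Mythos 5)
Your proposal is correct and follows the same (implicit) route as the paper: Corollary~\ref{Cor:3coin} is simply the logical negation of the noncanonicity criterion in Theorem~\ref{Thm:3coin}, with $r \leq 0$ collapsing to $r = 0$ under the fixed constraint $0 \leq r < c_2$. The paper offers no separate proof precisely because this contrapositive is immediate, and your handling of the degenerate case $c_2 - q \leq 0$ is correct as well.
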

\noindent
The characterization of canonical systems with four types of coins and that with five types of coins were given by Adamaszek and Adamaszek~\cite{AdAd10} and Cai~\cite{Cai09}.
\begin{thm}[Adamaszek and Adamaszek~\cite{AdAd10}, Cai~\cite{Cai09}] \label{Thm:4coin}
A system $C = (1, c_2, c_3, c_4)$ is canonical if and only if the subsystem $(1, c_2, c_3)$ is canonical and $\text{grd}_{C}(mc_3) \leq m$ for $m = \lceil c_4 / c_3 \rceil$.
\end{thm}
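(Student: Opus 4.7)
Assume $C$ is canonical. The representation $(0,0,m,0)$ shows $\text{opt}_C(mc_3)\le m$, and canonicity then gives $\text{grd}_C(mc_3)=\text{opt}_C(mc_3)\le m$. For canonicity of the subsystem $(1,c_2,c_3)$, I would argue the contrapositive using Theorem~\ref{Thm:3coin}: a noncanonical $(1,c_2,c_3)$ admits an explicit smallest counterexample $(q+1)c_2$, and a case analysis on whether this value is less than $c_4$ either transfers it directly to a counterexample in $C$ (since the greedy representations in $C$ and in $(1,c_2,c_3)$ agree on values below $c_4$) or produces one via a lifting argument using additional copies of $c_3$ or $c_4$.

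\textbf{Sufficiency.} Assume $(1,c_2,c_3)$ is canonical and $\text{grd}_C(mc_3)\le m$, and suppose for contradiction that $C$ admits a counterexample; let $w$ be the smallest one. The plan is a sequence of structural reductions. First, $w>c_4$: for $w<c_4$ the greedy algorithms on $C$ and on $(1,c_2,c_3)$ agree on $w$ and canonicity of the subsystem precludes a counterexample, while $w=c_4$ is ruled out by the single-coin representation. Second, the optimal representation of $w$ in $C$ uses no $c_4$: otherwise removing one such coin yields $\text{opt}_C(w-c_4)=\text{opt}_C(w)-1$, which by minimality of $w$ equals $\text{grd}_C(w-c_4)=\text{grd}_C(w)-1$, contradicting that $w$ is a counterexample. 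By canonicity of $(1,c_2,c_3)$ we may therefore take the optimal representation of $w$ in $C$ to be $(y_1,y_2,y_3,0)$, namely the greedy representation of $w$ in $(1,c_2,c_3)$; in particular $y_2c_2+y_1<c_3$, and $w>c_4$ forces $y_3c_3>c_4-c_3$, so $y_3\ge m-1$.

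\textbf{Contradiction and main obstacle.} In the generic case $y_3\ge m$, the contradiction is clean: let $(g_1,g_2,g_3,g_4)$ denote the greedy representation of $mc_3$ in $C$. The hypothesis $\sum_i g_i\le m$, together with the fact that $mc_3\ge c_4$ forces $g_4\ge 1$, lets me substitute this representation for $m$ of the $c_3$-coins in the optimal representation of $w$, producing a representation of $w$ of size $\text{opt}_C(w)-m+\sum_i g_i\le \text{opt}_C(w)$ that uses $c_4$, contradicting the second reduction. The delicate case, which I expect to be the main obstacle, is the boundary $y_3=m-1$ (which can occur only when $c_3\nmid c_4$), where the substitution trick does not directly apply. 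My plan is to ``pad'' $w$ up to $mc_3=w+(c_3-y_2c_2-y_1)$ and to transfer the inefficiency witnessed by $w$ to $mc_3$ itself, showing $\text{grd}_C(mc_3)>m$ in this regime as well and thereby closing the proof.
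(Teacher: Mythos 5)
The paper does not actually prove this theorem: it is stated as a cited result of Adamaszek--Adamaszek and Cai, and within the paper's own framework it is an immediate corollary of Theorem~\ref{Thm:OPT} (the one-point theorem, applied with $n=4$) together with Theorem~\ref{Thm:pre3} (which gives the necessity of $(1,c_2,c_3)$ being canonical). You are instead attempting a self-contained, minimal-counterexample proof, which is a genuinely different and more elementary route.

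Your argument, however, has two unresolved gaps. In the sufficiency direction you explicitly leave the boundary case $y_3 = m-1$ as a ``plan,'' not a proof, and this is the crux. The padding idea can in fact be made to work, but it is not routine: writing $w = (m-1)c_3 + s$ and $c_4 = (m-1)c_3 + t$ with $0 < t < s < c_3$, the counterexample hypothesis becomes $\text{grd}_{(1,c_2)}(s-t) \geq m-1+\text{grd}_{(1,c_2)}(s)$, and you must deduce $\text{grd}_{(1,c_2)}(c_3-t)\geq m$. Since $\text{grd}_{(1,c_2)}$ is not monotone, this does not follow from $s-t<c_3-t$; you need the modular identity $\text{grd}_{(1,c_2)}(a+b) = \text{grd}_{(1,c_2)}(a)+\text{grd}_{(1,c_2)}(b) - (c_2-1)\delta$, with $\delta\in\{0,1\}$ the carry in $a+b$ modulo $c_2$, applied to $c_3-t=(s-t)+(c_3-s)$, and in the $\delta=1$ case the further estimate $\text{grd}_{(1,c_2)}(s)+\text{grd}_{(1,c_2)}(c_3-s)\geq c_2$, which is exactly where the canonicity condition $r=0$ or $r\geq c_2-q$ of Corollary~\ref{Cor:3coin} must enter. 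Until this is written out, the sufficiency proof is incomplete. In the necessity direction you invoke an unspecified ``lifting argument'' to handle the case where the three-coin counterexample $(q+1)c_2$ is at least $c_4$; this is not trivial --- for $c_3 < c_4 \leq (q+1)c_2$ the value $(q+1)c_2$ may itself fail to be a counterexample to $C$ --- and is precisely the content of Theorem~\ref{Thm:pre3}, which the paper also cites without proof. You should either supply that argument or cite the prefix-three theorem directly.
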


\begin{thm}[Adamaszek and Adamaszek~\cite{AdAd10}, Cai~\cite{Cai09}] \label{Thm:5coin}
A system $C = (1, c_2, c_3, c_4, c_5)$ is canonical if and only if (a) or (b) holds:
\begin{itemize}
\item[(a)] the subsystem $(1, c_2, c_3, c_4)$ is canonical and $\text{grd}_{C}(mc_4) \leq m$ for $m = \lceil c_5 / c_4 \rceil;$
\item[(b)] $C = (1, 2, c_3, c_3 + 1, 2c_3)$ and $c_3 > 3$.
\end{itemize}
\end{thm}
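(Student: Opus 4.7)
I would split the analysis according to whether the four-coin subsystem $C' = (1, c_2, c_3, c_4)$ is canonical, since Theorem~\ref{Thm:4coin} already characterizes that property.

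\emph{Case A: $C'$ is canonical.} The direction ``(a) $\Rightarrow$ $C$ canonical'' is the substantive one. Suppose, for contradiction, that $C$ has a counterexample and let $w$ be the smallest one. Because $C'$ is canonical, any greedy-versus-optimal discrepancy on $w$ must actually involve $c_5$, so $w \geq c_5$. Writing $a = \lfloor w / c_5 \rfloor$ for the greedy count of $c_5$ and $b < a$ for the count in an optimal representation, canonicality of $C'$ reduces the question to a comparison of $\text{grd}_{C'}(w - a c_5)$ and $\text{grd}_{C'}(w - b c_5)$. A minimality argument in the style of Pearson~\cite{Pe05} pins the smallest such $w$ down to the single value $m c_4$ with $m = \lceil c_5 / c_4 \rceil$, so the assumption $\text{grd}_C(m c_4) \leq m$ disposes of every candidate. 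The reverse implication of~(a) is immediate, since $m$ coins of $c_4$ pay $m c_4$, hence $\text{grd}_C(m c_4) = \text{opt}_C(m c_4) \leq m$ whenever $C$ is canonical.

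\emph{Case B: $C'$ is noncanonical.} Here I would show that canonicality of $C$ forces the structure in~(b). Let $w'$ be a counterexample to $C'$. For $w'$ to cease being a counterexample of $C$, the coin $c_5$ must enable a strictly better representation of $w'$ (or of some smaller value inherited from the $C'$-analysis), which forces $c_5 \leq w'$ and a tight arithmetic relation among $w'$, $c_5$, and $c_4$. Applying Theorem~\ref{Thm:3coin} to the three-coin subsystems of $C$ and combining with that relation successively pins down $c_2 = 2$, then $c_4 = c_3 + 1$, and finally $c_5 = 2 c_3$. The lower bound $c_3 > 3$ then emerges from excluding small systems that either collapse into~(a) or admit a genuine counterexample.

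The main obstacle is the Pearson-type localization in Case~A: establishing that a single test value $m c_4$ is sufficient to certify canonicality once $C'$ is canonical. This demands a careful minimality analysis showing that no other candidate for the smallest counterexample survives. Case~B, by contrast, is essentially a finite-case analysis once the algebraic constraints on the denominations have been extracted, though some care is still needed to verify directly that every system in the family $(1, 2, c_3, c_3 + 1, 2 c_3)$ with $c_3 > 3$ is canonical, for instance by comparing greedy and optimal representations on each interval between consecutive denominations.
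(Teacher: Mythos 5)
The paper cites this theorem from Adamaszek--Adamaszek and Cai and does not supply a proof; it only remarks that part~(a) follows by induction from Theorem~\ref{Thm:OPT}, and Theorem~\ref{Thm:general} (with $n=5$) recovers Corollary~\ref{5coins+}, a fragment of part~(b). So the comparison must be against the six-coin machinery of Section~\ref{Section3}, specialized down to five coins.

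Your Case~A rests on a misattribution. You describe the reduction to the single test value $m c_4$ as a ``Pearson-type localization,'' but Pearson's result (Theorem~\ref{Thm:poly}) yields $O(n^2)$ candidates for the minimum counterexample, not one; the single-test-point fact is precisely the one-point theorem (Theorem~\ref{Thm:OPT}), which predates Pearson and has a different proof. Re-deriving it from Theorem~\ref{Thm:poly} alone would not, as written, pin $w$ to $m c_4$. Since the paper already states Theorem~\ref{Thm:OPT}, you should simply invoke it; as is, your Case~A has a real gap.

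Your Case~B is structurally aligned with the paper's six-coin argument, but it is under-specified in a way that would not close the proof. The working derivation (the five-coin translation of Lemmas~\ref{Lem:c6_kouho}, \ref{Lem:2c5-c4}, \ref{Lem:2c5-c2}) proceeds as follows: from Lemma~\ref{Lem:exCor1} and Theorem~\ref{Thm:w_range} one gets $c_5 \leq w' < c_3 + c_4 < 2c_4$; paying $2c_4$ and using $\text{opt}_C(2c_4) = 2$ forces $2c_4 - c_5 \in \{c_2, c_3\}$; the branch $c_5 = 2c_4 - c_3$ must then be \emph{eliminated} (one checks it would make $C' = (1, c_2, c_3, c_4)$ canonical via Theorem~\ref{Thm:OPT}, contradicting the hypothesis); then paying $c_3 + c_4$ forces $c_4 = c_2 + c_3 - 1$; finally, since $(1, c_2, c_3)$ is canonical and $(1, c_2, c_3, c_4)$ is not, Theorem~\ref{Thm:OPT} makes $2c_3$ a counterexample to $(1, c_2, c_3, c_4)$, so $c_5 \leq w' \leq 2c_3$, which gives $c_2 \leq 2$, hence $c_2 = 2$, $c_4 = c_3 + 1$, $c_5 = 2c_3$. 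Your sketch omits the elimination of $c_5 = 2c_4 - c_3$ entirely, reverses the natural derivation order, and credits Theorem~\ref{Thm:3coin} with work it cannot do: Theorem~\ref{Thm:3coin} only constrains the prefix $(1, c_2, c_3)$ and cannot by itself force $c_2 = 2$ or say anything about $c_4, c_5$. The heavy lifting is done by Lemma~\ref{Lem:exCor1}, Theorem~\ref{Thm:w_range}, and Theorem~\ref{Thm:OPT}, not Theorem~\ref{Thm:3coin}. Likewise, the sufficiency half of~(b) needs a tightness argument culminating in an application of Theorem~\ref{Thm:tight}, as in the sufficiency part of Theorem~\ref{Thm:general}; ``comparing greedy and optimal on each interval'' is a correct slogan for the tightness step but does not by itself dispose of potential counterexamples above $c_5$.
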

\noindent
Note that, in the case of Theorem~\ref{Thm:5coin}(b), the subsystem with the leading four types of coins of $C$, namely, $C'=(1, 2, c_3, c_3 + 1)$ for $c_3 > 3$, is noncanonical because $\text{opt}_{C'}(2c_3) = 2 < \text{grd}_{C'}(2c_3)$.

As above, characterization of canonical systems is known for at most five types of coins to date.
We propose the characterization of canonical systems with six types of coins as follows.

\begin{proposition} \label{Pro:6coins}
A system $C = (1, c_2, c_3, c_4, c_5, c_6)$ is canonical if and only if (a) or (b) holds.
\begin{itemize}
\item[(a)] The subsystem $(1, c_2, c_3, c_4, c_5)$ is canonical and $\text{grd}_{C}(mc_5) \leq m$ holds for $m = \lceil c_6 / c_5 \rceil$.
\item[(b)] The subsystem $(1, c_2, c_3, c_4, c_5)$ is noncanonical and $C$ satisfies (i), (ii), or (iii) for $\ell = \lceil c_5 / c_3 \rceil$.
In addition, $\text{grd}_C(\ell c_3) = \ell c_3 - c_5 + 1 - \lfloor (\ell c_3 - c_5) / c_2 \rfloor (c_2 - 1)$.
\begin{itemize}
\item[(i)]   $C = (1, 2, 3, c_4, c_4 + 1, 2 c_4)$ and $c_4 > 4;$
\item[(ii)]  $C = (1, c_2, 2 c_2 - 1, c_4, c_2 + c_4 - 1, 2 c_4 - 1)$, $c_4 \geq 3 c_2-1$, and $\text{grd}_C(\ell c_3) \leq \ell;$
\item[(iii)] $C = (1, c_2, 2 c_2, c_4, c_2 + c_4, 2 c_4)$, $c_4 \geq 3 c_2 - 1$, $c_4 \neq 3 c_2$, and $\text{grd}_C(\ell c_3) \leq \ell$.
\end{itemize}
\end{itemize}
\end{proposition}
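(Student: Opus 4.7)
The plan is to mirror the structure of the proofs of Theorems~\ref{Thm:4coin} and~\ref{Thm:5coin} while treating case (b) by a dedicated case analysis. I would first fix a working tool: Pearson's characterization of the minimum counterexample, which asserts that if $w$ is a minimum counterexample to $C$ then $w$ has the form $w = k c_{i+1} - c_j$ (equivalently $w$ lies in the greedy interval just above some $c_i$) for appropriate indices $i<j$. This restricts the set of candidate counterexamples to $O(n^2)$ values and makes the verification in both directions tractable.

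For the ``if'' direction, I would split on cases (a) and (b). In case (a), the subsystem $(1,c_2,c_3,c_4,c_5)$ is canonical, so any counterexample to $C$ must involve $c_6$ in its greedy representation. The Pearson-style analysis, combined with the inequality $\text{grd}_C(mc_5)\le m$ for $m=\lceil c_6/c_5\rceil$, rules out the critical candidate $mc_5$ (which is the canonical break point for introducing the new coin $c_6$), and the remaining candidates reduce to counterexamples to the 5-coin subsystem, a contradiction. Case (b) requires direct verification for each parametric family (i), (ii), (iii): I would enumerate Pearson's $O(1)$ candidate counterexamples for each family and show that greedy matches the optimum on each. The auxiliary identity for $\text{grd}_C(\ell c_3)$ is obtained by observing that after the greedy step $\ell c_3\mapsto \ell c_3-c_5$, the residue is smaller than $c_3$ (and a fortiori $c_4$), so greedy proceeds using only $c_2$ and $c_1=1$, giving the stated floor-division expression.

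For the ``only if'' direction, assume $C$ is canonical. If the 5-coin subsystem is canonical, condition (a) is forced: indeed $mc_5$ admits the representation using $m$ copies of $c_5$, so $\text{grd}_C(mc_5)>m$ would make $mc_5$ a counterexample to $C$. If the 5-coin subsystem $C'$ is noncanonical, I would classify the source of its noncanonicity by successive appeals to Theorems~\ref{Thm:3coin}, \ref{Thm:4coin}, and~\ref{Thm:5coin}, thereby identifying the minimum counterexample $w^\ast$ to $C'$. For $C$ to become canonical, $c_6$ must cover $w^\ast$ (i.e., $c_6\le w^\ast$ with the right arithmetic relationship) without creating fresh counterexamples strictly smaller than the next bad candidate. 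This places very rigid constraints that couple $c_2,c_3,c_4,c_5,c_6$: they force a small counterexample of the form $w^\ast = 2c_4$ or $w^\ast = 2c_4-1$, and back-substituting through the relations $c_5=c_3+c_4$-like identities pins down exactly the three families (i), (ii), (iii), along with the stated inequalities ($c_4>4$, $c_4\ge 3c_2-1$, $c_4\ne 3c_2$) that prevent degenerate overlap with case (a).

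The principal obstacle is the backward classification when $C'$ is noncanonical. Noncanonicity of $C'$ can arise at three nested levels (already in $(1,c_2,c_3)$, or only in $(1,c_2,c_3,c_4)$, or only at the 5-coin step), and the exceptional Theorem~\ref{Thm:5coin}(b) must be carefully excluded. Each branch yields different forms of $w^\ast$ via Pearson, and each form produces its own arithmetic system determining $c_6$. I would organize the case work by the index $i$ such that the greedy representation of $w^\ast$ in $C'$ uses $c_i$ as its leading coin, and within each branch verify that the constraint $c_6 > w^\ast$ needed to preserve the noncanonicity is inconsistent unless $(c_2,\ldots,c_6)$ matches one of (i), (ii), (iii). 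Keeping this enumeration exhaustive while suppressing redundant subcases is where the bulk of the technical care lies.
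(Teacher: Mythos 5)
Your outline is a reasonable skeleton, but there are concrete gaps in both directions, and the paper proceeds by noticeably different tools.

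First, the sufficiency of case (b) is the hardest part of the theorem, and your plan does not actually survive inspection there. You propose to ``enumerate Pearson's $O(1)$ candidate counterexamples'' for each of the families (i)--(iii) and verify greedy optimality on them; but each family is an \emph{infinite} parametric family (e.g.\ family (ii) carries two free parameters $c_2$ and $c_4$), so this is not a finite check. What the paper does instead is establish, by algebraic case analysis on $\text{grd}_C(v)$ for $c_4 < v < c_5$ and $c_5 < v < c_6$, that the greedy representation is optimal in those intervals (using the hypothesis $\text{grd}_C(\ell c_3)\le\ell$ in a crucial way), concluding that $C$ is tight; only then does Cai's theorem (Theorem~\ref{Thm:tight}) reduce the remaining candidates to the three values $c_4+c_4$, $c_4+c_5$, $c_5+c_5$, which are checked by hand. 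That tightness argument is exactly the algebraic content your plan skips, and it occupies Appendices~A and~B.

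Second, the necessity direction when $C'$ is noncanonical is where your proposed machinery and the paper diverge most. You want to classify the minimum counterexample $w^\ast$ of $C'$ via Theorems~\ref{Thm:3coin}, \ref{Thm:4coin}, \ref{Thm:5coin} and the Pearson form ``$w = kc_{i+1}-c_j$''; but those theorems characterize \emph{canonicality}, not the location of the minimum counterexample, and the Pearson form as you state it is not what Theorem~\ref{Thm:poly} gives (it constrains the lexicographically smallest optimal representation of $w$ and the related greedy representation of $c_{j+1}-1$, not a simple difference formula). Your asserted conclusion ``$w^\ast = 2c_4$ or $2c_4-1$'' is not established and is not in fact how the argument proceeds. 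The paper's actual mechanism is cleaner: from $C$ canonical and $C'$ noncanonical, Lemma~\ref{Lem:exCor1} gives $c_6 \le w'$ and Theorem~\ref{Thm:w_range} gives $w' < c_4+c_5$; hence $c_6 < 2c_5 < 2c_6$, so paying $2c_5$ in $C$ uses exactly one coin $c_6$ greedily, and canonicality forces the remainder $2c_5-c_6$ to be a single coin value. This pins $c_6$ to $\{2c_5-c_2,\,2c_5-c_3,\,2c_5-c_4\}$ immediately, and the rest of the classification is done by repeating this trick with $c_4+c_5$ and $c_3+c_5$, together with the {\tt +/-} class dichotomy ({\tt ++++-+} vs.\ {\tt +++--+}) and Cai's tight theorem to exclude branches. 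Without that ``pay $2c_5$'' observation or something playing the same role, the case decomposition you sketch will not close.

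Third, you should correctly note (and you do) that the exceptional case of Theorem~\ref{Thm:5coin}(b) needs excluding; in the paper this is absorbed into the {\tt +/-} class case split rather than treated as a separate branch. Your treatment of $\text{grd}_C(\ell c_3)$ is fine: the remainder $\ell c_3 - c_5 < c_3$ forces greedy to finish with only $c_2$ and $1$, yielding the floor formula.
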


\noindent
In Section~\ref{Section3}, we prove that Proposition~\ref{Pro:6coins} is true.

\subsection{Related results on characterization} \label{Section2.2}

More results related to the characterization of canonical systems have been reported other than those already mentioned in Section~\ref{Section2.1}.
Of these results, this subsection describes several definitions and theorems that we need in~Section~\ref{Section3}.

Firstly, we introduce a classic but strong theorem, which is called the ``one-point theorem'' in some papers.

\begin{thm}[Magazine, Nemhauser, and Trotter Jr.~\cite{MaNeTr75}, Hu and Lenard~\cite{HuLe76}, Cowen, Cowen, and Steinberg~\cite{CoCoSt08}] \label{Thm:OPT}
Let $C$ be a system $(1, c_2, \ldots, c_n)$ for $n \geq 2$.
Suppose that the subsystem $C' = (1, c_2, \ldots, c_{n-1})$ of $C$ is canonical.
The following statements are equivalent:
\begin{itemize}
\item[(a)] $C$ is canonical;
\item[(b)] $\text{grd}_{C}(mc_{n-1}) \leq m$ for $m = \lceil c_{n} / c_{n-1} \rceil;$
\item[(c)] $\text{opt}_{C}(mc_{n-1}) = \text{grd}_{C}(mc_{n-1})$ for $m = \lceil c_{n} / c_{n-1} \rceil$.
\end{itemize}
\end{thm}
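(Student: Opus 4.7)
The plan is to prove the equivalences via the cycle (a) $\Rightarrow$ (c) $\Rightarrow$ (b) $\Rightarrow$ (a). The first two implications are easy bookkeeping. Since canonicity of $C$ means $\text{opt}_C(v) = \text{grd}_C(v)$ for every positive integer $v$, (a) $\Rightarrow$ (c) is just specialization to $v = m c_{n-1}$. For (c) $\Rightarrow$ (b), the representation consisting of $m$ copies of $c_{n-1}$ shows $\text{opt}_C(m c_{n-1}) \leq m$, and (c) then transfers this bound to $\text{grd}_C(m c_{n-1})$.

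The substantive direction is (b) $\Rightarrow$ (a), which I would prove by contrapositive. Assume $C$ is noncanonical and let $w$ be the smallest counterexample. First I would extract three structural facts. (i) $w \geq c_n$, for otherwise no representation of $w$ in $C$ can use $c_n$, so $\text{opt}_C(w) = \text{opt}_{C'}(w) = \text{grd}_{C'}(w) = \text{grd}_C(w)$ by canonicity of $C'$. (ii) No optimal representation of $w$ uses the coin $c_n$: indeed, if one did, then using the identity $\text{grd}_C(v) = \text{grd}_C(v - c_n) + 1$ valid for $v \geq c_n$, the smaller value $w - c_n$ would itself be a counterexample, violating minimality of $w$. (iii) Consequently $\text{opt}_C(w) = \text{opt}_{C'}(w) = \text{grd}_{C'}(w)$.

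The remaining task is to translate the counterexample at $w$ into one at $m c_{n-1}$, which is where (b) is tested. Writing $w = q c_n + r$ with $0 \leq r < c_n$ and $q \geq 1$, fact~(iii) together with the counterexample condition yields $\text{grd}_{C'}(w) < q + \text{grd}_{C'}(r)$. On the other hand, because $(m-1) c_{n-1} < c_n \leq m c_{n-1}$, greedy on $m c_{n-1}$ in $C$ uses exactly one $c_n$, so $\text{grd}_C(m c_{n-1}) = 1 + \text{grd}_{C'}(m c_{n-1} - c_n)$, while $\text{opt}_C(m c_{n-1}) \leq m$. Thus (b) fails precisely when $\text{grd}_{C'}(m c_{n-1} - c_n) \geq m$, and the goal is to derive this from the existence of $w$.

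The hardest step is bridging these two inequalities. I plan to exploit the subadditivity of $\text{opt}_{C'} = \text{grd}_{C'}$ (which holds since $C'$ is canonical) together with the decomposition $m c_{n-1} = c_n + (m c_{n-1} - c_n)$, and to compare the greedy expansions of $w$ and $m c_{n-1}$ along the coin $c_{n-1}$. The delicate point is the bookkeeping: aligning the remainders of $w$ modulo $c_n$ and $c_{n-1}$, and using the minimality of $w$ to conclude that an obstruction at $w$ is actually reproducible at $m c_{n-1}$, where only one $c_n$ appears in the greedy representation. Once this alignment is secured, the local inequality at $w$ propagates to the required inequality at $m c_{n-1}$, completing the contradiction and establishing (b) $\Rightarrow$ (a).
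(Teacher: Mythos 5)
The easy implications (a) $\Rightarrow$ (c) $\Rightarrow$ (b) and the three preliminary structural facts you extract for the contrapositive of (b) $\Rightarrow$ (a) are all correct and cleanly argued. However, you stop short of actually proving (b) $\Rightarrow$ (a): after setting up the inequality $\text{grd}_{C'}(w) < q + \text{grd}_{C'}(r)$ at the minimal counterexample $w$ and identifying the target $\text{grd}_{C'}(mc_{n-1}-c_n) \geq m$, you describe only a ``plan'' to ``exploit subadditivity'' and ``compare greedy expansions.'' That is precisely the part of the theorem that carries all the content, and it is left as a sketch; the bridge between the obstruction at $w$ and the required obstruction at $mc_{n-1}$ is missing. (For context, the paper itself cites this theorem without proof, so there is no internal argument to compare against.)

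To close the gap, two concrete ingredients are needed. First, invoke Theorem~\ref{Thm:w_range} (Kozen--Zaks) to obtain $w < c_{n-1} + c_n < 2c_n$, which forces $q=1$ in your decomposition $w = qc_n + r$; leaving $q$ general makes the bookkeeping intractable. Writing $w = c_n + b$ with $0 < b < c_{n-1}$ and $s = mc_{n-1} - c_n \in [0, c_{n-1})$, the obstruction becomes $\text{grd}_{C'}(c_n+b) \leq \text{grd}_{C'}(b)$ and the goal is $\text{grd}_{C'}(s) \geq m$. Second, this requires a case split on $b \geq s$ versus $b < s$, and the subadditivity of $\text{opt}_{C'}=\text{grd}_{C'}$ must be applied to a \emph{different} decomposition in each case. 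When $b \geq s$ one has $\text{grd}_{C'}(c_n+b) = m + \text{grd}_{C'}(b-s)$, and subadditivity on $b = (b-s)+s$ yields the claim directly. When $b < s$ one instead has $\text{grd}_{C'}(c_n+b) = (m-1) + \text{grd}_{C'}(c_{n-1}-s+b)$; the naive subadditivity application only gives $\text{grd}_{C'}(c_{n-1}-s) \geq m-1$, which is not the goal. One must observe that $b + c_{n-1} = (c_{n-1}-s+b) + s$ and that $\text{grd}_{C'}(b+c_{n-1}) = 1 + \text{grd}_{C'}(b)$, and apply subadditivity to that decomposition to recover $\text{grd}_{C'}(s) \geq m$. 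Neither the case split nor the second (non-obvious) subadditivity decomposition appears in the proposal, so the hard direction is not established as written.
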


\noindent
This theorem gives a necessary and sufficient condition for a system $C = (1, c_2, \ldots, c_n)$ to be canonical when the subsystem $C' = (1, c_2, \ldots, c_{n-1})$ is canonical.
Note that the part (a) of Theorem~\ref{Thm:5coin} arises directly by induction from Theorem~\ref{Thm:OPT}.

The following theorem gives upper and lower bounds for the minimum counterexample to a noncanonical system with at least three types of coins; a system with one or two types of coins is always canonical.

\begin{thm}[Kozen and Zaks~\cite{KoZa94}] \label{Thm:w_range}
Assume that a system $C = (1,c_2,\ldots,c_n)$ is noncanonical.
Let $w$ be the minimum counterexample to $C$.
Then $c_3 + 1 < w < c_{n-1} + c_n$.
\end{thm}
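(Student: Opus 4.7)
The plan is to prove the two inequalities separately: $c_3 + 1 < w$ by direct verification on small values, and $w < c_{n-1} + c_n$ by a minimality-based contradiction argument.

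For the lower bound, I will show that no value $v \leq c_3 + 1$ can be a counterexample. If $v < c_3$, neither the greedy nor any optimal representation can use a coin larger than $c_2$, and since the two-coin system $(1, c_2)$ is trivially canonical, greedy equals optimal. For $v = c_3$ both algorithms output the single coin $c_3$. For $v = c_3 + 1$, greedy returns either one coin (if $c_3 + 1 = c_i$ for some $i$) or the two-coin representation $c_3 + c_1$; in the latter case, a single-coin optimal representation would require $c_3 + 1 \in C$, contradicting the sub-case. Hence $\text{opt}_C(v) = \text{grd}_C(v)$ for every $v \leq c_3 + 1$, giving $w \geq c_3 + 2$.

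For the upper bound, assume for contradiction that $w \geq c_{n-1} + c_n$. The first step is to show that every optimal representation $\bm{x}^* = (x_1^*, \ldots, x_n^*)$ of $w$ satisfies $x_n^* = 0$. If $x_n^* \geq 1$, then removing one $c_n$ from $\bm{x}^*$ gives a representation of $w - c_n \geq c_{n-1} > 0$ with $\text{opt}_C(w) - 1$ coins, while greedy on $w$ begins with $c_n$ (since $w \geq c_n$), so $\text{grd}_C(w - c_n) = \text{grd}_C(w) - 1$; combined with the fact that $w - c_n < w$ is not a counterexample, this yields $\text{opt}_C(w - c_n) \leq \text{opt}_C(w) - 1 < \text{grd}_C(w) - 1 = \text{grd}_C(w - c_n)$, contradicting the minimality of $w$. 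Consequently the optimal representation of $w$ uses only coins of value at most $c_{n-1}$.

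To close the contradiction I iterate this coin-removal argument: for each $i$ with $x_i^* \geq 1$, the value $w - c_i$ is positive and strictly less than $w$, hence not a counterexample, so $\text{grd}_C(w - c_i) = \text{opt}_C(w - c_i) \leq \text{opt}_C(w) - 1 < \text{grd}_C(w) - 1$, giving $\text{grd}_C(w) - \text{grd}_C(w - c_i) \geq 2$. Taking $c_i = 1$ contradicts the Lipschitz bound $\text{grd}_C(w) \leq \text{grd}_C(w - 1) + 1$, which is proved by a short induction on $w$ using the recursion defining the greedy algorithm (either the same largest coin is chosen at both values and one recurses, or $w$ itself equals a coin value and $\text{grd}_C(w) = 1$); this forces $x_1^* = 0$, so the optimal representation uses only coins from $\{c_2, \ldots, c_{n-1}\}$ and in particular $\text{opt}_C(w) \geq \lceil w / c_{n-1} \rceil \geq 3$. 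The hardest step is to combine these accumulated constraints with the identity $\text{grd}_C(w) = 1 + \text{opt}_C(w - c_n)$ and a careful subset-sum analysis of $\bm{x}^*$ into a definitive contradiction: the main obstacle is that no subset of $\bm{x}^*$ need sum to exactly $c_n$, so one cannot directly extract a smaller counterexample by removing coins; the argument must instead exploit the canonicality of every value below $w$ to control the greedy counts $\text{grd}_C(w - c_i)$ simultaneously for several $i$ and compare them back to $\text{grd}_C(w)$, thereby turning the accumulated ``$\geq 2$'' inequalities into an inequality that contradicts $\text{opt}_C(w) < \text{grd}_C(w)$.
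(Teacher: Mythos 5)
Your lower-bound verification ($w \ge c_3+2$) is correct, and so is the first step of your upper-bound argument, showing that $x_n^*=0$ in every optimal representation $\bm{x}^*$ of $w$; this is consistent with $\overline{\text{opt}}\,{}_{C}^{c_n}(w)=0$ in Theorem~\ref{Thm:poly}. Your second step, deducing $x_1^*=0$ from a greedy Lipschitz bound, is also valid as stated. However, you explicitly concede that you do not complete the final step, and that concession marks a genuine gap: the Lipschitz bound $\text{grd}_C(w) \le \text{grd}_C(w-c_j)+1$ that powers your $c_1$-removal fails for $c_j>1$ (take $C=(1,3,4)$, $w=6$, $c_j=3$: $\text{grd}_C(6)=3>2=\text{grd}_C(3)+1$), so the same trick cannot be pushed to the other coins and the proof does not close as written.

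The missing step is short once the assumption $w\ge c_{n-1}+c_n$ is used more aggressively. Adopt the convention $\text{grd}_C(0)=\text{opt}_C(0)=0$. Pick any index $j$ with $x_j^*\ge 1$; by your Step~1, $j\le n-1$, so $c_j\le c_{n-1}$ and hence $w-c_j\ge c_n$. Therefore greedy on $w-c_j$ takes a $c_n$ first, and since $0\le w-c_j-c_n<w-c_j<w$, both $w-c_j$ and $w-c_j-c_n$ lie below the minimum counterexample. This gives
\begin{align*}
\text{opt}_C(w)-1 \ge \text{opt}_C(w-c_j) = \text{grd}_C(w-c_j) = 1+\text{grd}_C(w-c_j-c_n) = 1+\text{opt}_C(w-c_j-c_n),
\end{align*}
while greedy on $w$ also begins with $c_n$ and $w$ is a counterexample, so
\begin{align*}
\text{opt}_C(w) < \text{grd}_C(w) = 1+\text{grd}_C(w-c_n) = 1+\text{opt}_C(w-c_n).
\end{align*}
Combining the two displays yields $\text{opt}_C(w-c_n) \ge \text{opt}_C(w-c_j-c_n)+2$, which is impossible: appending one coin $c_j$ to an optimal representation of $w-c_j-c_n$ already shows $\text{opt}_C(w-c_n)\le \text{opt}_C(w-c_j-c_n)+1$. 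This contradiction establishes $w<c_{n-1}+c_n$, and note that neither the Lipschitz lemma nor the fact $x_1^*=0$ is actually needed.
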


Two helpful concepts, a {\tt +/-} class and tight, used in the next section are defined below; the former was developed by Adamaszek and Adamaszek~\cite{AdAd10} and the latter was introduced by Cai~\cite{Cai09}.

\begin{definition}[Adamaszek and Adamaszek~\cite{AdAd10}] \label{Def:PMclass}
For a system $C = (1, c_2, \ldots, c_n)$, a \emph{{\tt +/-} class} of $C$ is a string of length $n$ such that its $i\/$th symbol $(i = 1, 2, \ldots, n)$ is {\tt +} if the subsystem $(1, c_2, \ldots, c_i)$ is canonical, otherwise {\tt -}.
\end{definition}

\begin{definition}[Cai~\cite{Cai09}] \label{Def:tight}
A system $(1, c_2, \ldots, c_n)$ is said to be \emph{tight} if there is no counterexample smaller than $c_n$.
\end{definition}

\begin{corollary} \label{Cor:tight_canonical}
A canonical system is tight. A system that is not tight is noncanonical.
\end{corollary}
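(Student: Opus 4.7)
The plan is to unfold the two definitions and observe that the two sentences of the corollary are the contrapositives of each other, so only one direction needs a direct argument while the other is logically equivalent. Recall that, by Definition~\ref{Def:tight}, $C = (1, c_2, \ldots, c_n)$ is tight precisely when no positive integer $w < c_n$ is a counterexample, while by the definitions in Section~\ref{Section2.1}, $C$ is canonical precisely when no positive integer at all is a counterexample, and noncanonical means that at least one counterexample exists.

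First I would prove the first sentence. Suppose $C$ is canonical. Then $\text{opt}_C(v) = \text{grd}_C(v)$ for every $v \in \mathbb{Z}_{>0}$, so there is no counterexample to $C$ whatsoever; in particular there is no counterexample $w$ with $w < c_n$, hence $C$ is tight.

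Next I would derive the second sentence as the contrapositive. Assume $C$ is not tight. By the negation of Definition~\ref{Def:tight}, there exists some $w \in \mathbb{Z}_{>0}$ with $w < c_n$ satisfying $\text{opt}_C(w) < \text{grd}_C(w)$, i.e., a counterexample to $C$. The existence of a counterexample means, by definition, that $C$ is noncanonical.

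There is no real obstacle here: the statement is an immediate unpacking of the definitions, with the observation that the second sentence is the contrapositive of the first. The only thing worth being explicit about is that ``tight'' bounds the search for counterexamples by $c_n$, so ruling out every counterexample (canonical) automatically rules out the small ones (tight), which is all the proof requires.
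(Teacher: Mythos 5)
Your argument is correct and coincides with the straightforward unpacking of Definitions~\ref{Def:tight} and of canonicity that the paper leaves implicit (it states Corollary~\ref{Cor:tight_canonical} without proof). Noting that the second sentence is the contrapositive of the first is exactly the right observation, so there is nothing to add.
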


\begin{thm}[Cai~\cite{Cai09}] \label{Thm:tight}
Let $n \geq 5$.
Assume that three systems $C_3 = (1, c_2, c_3)$, $C_{n-1} = (1, c_2, \ldots, c_{n-1})$, and $C_n = (1, c_2, \ldots, c_n)$ are tight, $C_3$ is canonical, and $C_{n-1}$ is noncanonical.
Then, if $C_n$ is noncanonical, there exist $i$ and $j$ such that $1 < i \leq j \leq n-1$, $c_i + c_j > c_n$, and $c_i + c_j$ is a counterexample to~$C_n$.
\end{thm}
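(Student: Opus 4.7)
My plan is to extract the required pair from the minimum counterexample~$w$ of~$C_n$. First I would pin down a narrow range for~$w$: tightness of~$C_n$ gives $w \geq c_n$, and Theorem~\ref{Thm:w_range} gives $w < c_{n-1} + c_n$; since $w = c_n$ is trivially representable by one coin, in fact $c_n < w < c_{n-1} + c_n$. In particular $w < 2 c_n$, so the greedy representation~$\bm{g}$ of~$w$ in~$C_n$ uses exactly one $c_n$, and because $w - c_n < c_{n-1}$ its remainder uses only $c_1, \ldots, c_{n-2}$. Any optimal representation~$\bm{x}^*$ of~$w$ in~$C_n$ must therefore satisfy $x_n^* = 0$: otherwise, subtracting one $c_n$ from both $\bm{x}^*$ and~$\bm{g}$ (the latter subtraction being exactly the greedy of $w - c_n$) would exhibit $w - c_n$ as a counterexample to~$C_n$ strictly smaller than~$w$, contradicting minimality.

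Next I would prove the structural lemma that, for every~$k$ with $x_k^* \geq 1$, we have $c_k > w - c_n$. The argument: removing one $c_k$ from~$\bm{x}^*$ yields an optimal representation of $w - c_k$ of size $\text{opt}_{C_n}(w) - 1$, and minimality of~$w$ forces $\text{grd}_{C_n}(w - c_k) = \text{opt}_{C_n}(w - c_k)$. Then the greedy representation of $w - c_k$ augmented by a single~$c_k$ is another optimal representation of~$w$, which again must avoid~$c_n$; hence the greedy of $w - c_k$ in~$C_n$ uses no $c_n$, i.e.\ $w - c_k < c_n$. This forces every coin used by~$\bm{x}^*$ to lie in $(w - c_n,\, c_{n-1}]$ and, since $w - c_n \geq 1$, to have index at least~$2$. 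Summing over the coins of~$\bm{x}^*$ yields $\text{opt}_{C_n}(w) \cdot (w - c_n + 1) \leq w$.

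The remaining step is to show $\text{opt}_{C_n}(w) = 2$; once this holds, writing $w = c_i + c_j$ with $2 \leq i \leq j \leq n - 1$ gives $c_i + c_j = w > c_n$, which is precisely the desired pair. Ruling out three or more coins in~$\bm{x}^*$ is the main obstacle. Concretely, if $\bm{x}^*$ used three coins $c_a, c_b, c_c$ in $(w - c_n,\, c_{n-1}]$, then each of the pair-sums $c_a + c_b$, $c_a + c_c$, $c_b + c_c$ would fall strictly below~$c_n$ and hence, by minimality of~$w$, have coincident greedy and optimal in~$C_n$; invoking tightness of~$C_{n-1}$ (which forbids counterexamples below~$c_{n-1}$) extends the coincidence to~$C_{n-1}$. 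A careful case analysis of how these coincidences propagate---leveraging canonicality of~$C_3$ to control the greedy of $w - c_n$ and noncanonicality of~$C_{n-1}$ to force a contradiction with either the minimality of~$w$ or the tightness of~$C_n$---is where the bulk of the work lies.
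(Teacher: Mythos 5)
The paper does not prove Theorem~\ref{Thm:tight}: it is cited from Cai~\cite{Cai09} and used as a black box, so there is no in-paper argument to compare against. That said, the three preliminary steps you carry out are correct and worth having. The bracket $c_n < w < c_{n-1} + c_n$ follows from tightness of $C_n$ together with Theorem~\ref{Thm:w_range}; your argument that \emph{every} optimal representation of $w$ avoids $c_n$ (a slight strengthening of what Theorem~\ref{Thm:poly} gives for the lexicographically smallest one) via reduction to $w - c_n$ is clean; and the structural lemma that every coin $c_k$ appearing in an optimal representation of $w$ satisfies $c_k > w - c_n$ --- obtained by replacing the remainder with the greedy representation of $w - c_k$ and noting the result is again optimal, hence $c_n$-free --- is a nice and correct observation.

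The proposal stops, however, precisely where the theorem's content lies. Steps 1--3 hold for \emph{any} tight noncanonical system $C_n$; none of them uses the hypotheses that $C_3$ is canonical or that $C_{n-1}$ is tight and noncanonical, yet those hypotheses are exactly what must force the existence of a two-coin counterexample above $c_n$. Your numerical bound $\text{opt}_{C_n}(w)\cdot(w - c_n + 1) \leq w$ only yields $\text{opt}_{C_n}(w) \leq w/(w - c_n + 1)$, which is unbounded as $w$ approaches $c_n$ from above, so it does not by itself cap $\text{opt}_{C_n}(w)$ at $2$. The ``three-coin'' sketch correctly shows that all pair-sums of the coins in $\bm{x}^*$ fall below $c_n$ and so have coincident greedy and optimal values, but you never derive a contradiction from this; the sentence ``a careful case analysis \ldots is where the bulk of the work lies'' is an acknowledgement that the essential step is missing, not a proof of it. As a minor point, the appeal to tightness of $C_{n-1}$ when extending the coincidence from $C_n$ to $C_{n-1}$ is unnecessary: for any $v < c_n$ one has $\text{grd}_{C_n}(v) = \text{grd}_{C_{n-1}}(v)$ and $\text{opt}_{C_n}(v) \leq \text{opt}_{C_{n-1}}(v) \leq \text{grd}_{C_{n-1}}(v)$, so non-counterexample status transfers automatically. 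Until the case analysis pinning down $\text{opt}_{C_n}(w) = 2$ (or otherwise exhibiting a pair $c_i + c_j > c_n$ that is a counterexample) is actually written, the proposal establishes useful localization of $w$ but does not prove the theorem.
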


In addition to the notations $\text{grd}_C (v)$ and $\text{opt}_C (v)$ introduced in Section~\ref{Section2.1}, we define two more.
Let $\text{grd}_{C}^{c_i}(v)$ be the number of coins whose value is $c_i$ used in the greedy representation for $v$ in~$C$, and let $\overline{\text{opt}}\,{}_{C}^{c_i}(v)$ be that used in the lexicographically smallest optimal representation, which is defined below, for $v$ in $C$.

A representation $\bm{x} =(x_1, x_2, \ldots, x_n)$ for $v$ in $C$ is said to be \emph{lexicographically smaller} than a representation $\bm{x}' =(x_1', x_2', \ldots, x_n')$ for $v$ in $C$ if there exists $k \in \{1, 2, \ldots, n-1\}$ such that $x_k < x_k'$ and $x_i = x_i'$ for all $1 \leq i < k$.
As mentioned earlier, there are multiple optimal representations for some $v$ and~$C$.
The \emph{lexicographically smallest} optimal representation for $v$ in~$C$ is the lexicographically smallest representation among optimal representations for~$v$ in~$C$.
For example, for $v = 12$ in $C = (1, 4, 6, 8)$, the optimal representations are $(0, 1, 0, 1)$ and $(0, 0, 2, 0)$, and the latter is the lexicographically smallest optimal representation.

\begin{thm}[Pearson~\cite{Pe05}]\label{Thm:poly}
Assume a system $C = (1, c_2, \ldots, c_n)$ is noncanonical and let $w$ be the minimum counterexample to~$C$.
Then $\overline{\text{opt}}\,{}_{C}^{c_n}(w) = 0$.
Let the lexicographically smallest optimal representation for $w$ in $C$ be
\[
(\overbrace{0, 0, \ldots, 0}^{i-1~0\text{'s}}, x_i, x_{i+1}, \ldots, x_j, \overbrace{0, 0, \ldots, 0}^{n-j~0\text{'s}})
\]
where $1 \leq i \leq j < n$, $x_i > 0$, and $x_j > 0$.
Then the greedy representation for $c_{j+1} - 1$ in $C$ is
\[
(y_1, y_2, \ldots, y_{i-1}, x_i-1, x_{i+1}, \ldots, x_j, \overbrace{0, 0, \ldots, 0}^{n-j~0\text{'s}})
\]
where $y_1, y_2, \ldots, y_{i-1} \in \mathbb{Z}_{\geq 0}$.
\end{thm}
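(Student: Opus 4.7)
The argument separates into two independent claims.

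For $\overline{\text{opt}}\,{}_{C}^{c_n}(w) = 0$, I argue by contradiction using the minimality of $w$. If the lex-smallest optimal representation uses $x_n \geq 1$ copies of $c_n$, then $w \geq c_n$ and the greedy algorithm takes a $c_n$ as its first step, so $\text{grd}_C(w) = 1 + \text{grd}_C(w - c_n)$. Deleting one $c_n$ from the optimal representation of $w$ yields a representation of $w - c_n$ with $\text{opt}_C(w) - 1$ coins; hence $\text{opt}_C(w - c_n) \leq \text{opt}_C(w) - 1 < \text{grd}_C(w) - 1 = \text{grd}_C(w - c_n)$, making $w - c_n$ a smaller counterexample and contradicting the minimality of $w$.

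For the second claim, set $v := c_{j+1} - 1$; since $v < c_{j+1}$, greedy uses no copy of $c_{j+1}, \ldots, c_n$ on input $v$. The plan is to establish two structural inequalities,
\[
c_{j+1} \leq w \quad \text{and} \quad c_{j+1} + c_i > w,
\]
from which $R := c_{j+1} - 1 - w + c_i$ lies in $[0, c_i)$. Letting $(y_1, \ldots, y_{i-1})$ be the greedy representation of $R$ in the subsystem $(1, c_2, \ldots, c_{i-1})$, the candidate tuple $(y_1, \ldots, y_{i-1}, x_i - 1, x_{i+1}, \ldots, x_j, 0, \ldots, 0)$ sums to $v$. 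I verify that it is the greedy representation by checking the standard characterization that each residual $s_k := v - \sum_{\ell \geq k} b_\ell c_\ell$ lies in $[0, c_k)$: the upper block ($k > j$) is immediate, the $c_i$-level condition is exactly $0 \leq R < c_i$, and the bottom block ($k < i$) is valid by construction of the greedy of $R$. The intermediate conditions ($i < k \leq j$) reduce to $\sum_{\ell = k}^{j} x_\ell c_\ell < c_{j+1} \leq c_k + \sum_{\ell = k}^{j} x_\ell c_\ell$, which I derive by applying the principle ``greedy equals optimal below $w$'' to the partial sums $w - \sum_{\ell < k} x_\ell c_\ell$.

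The technical heart of the argument is the establishment of the two structural inequalities. For $c_{j+1} \leq w$, let $K$ be the largest index with $c_K \leq w$; then greedy of $w$ takes $c_K$ first, yielding $\text{grd}_C(w) = 1 + \text{opt}_C(w - c_K)$ since $w - c_K < w$ is not a counterexample. Combining with $\text{opt}_C(w) < \text{grd}_C(w)$ gives $\text{opt}_C(w - c_K) \geq \text{opt}_C(w)$, which is incompatible with $x_K \geq 1$ in the lex-smallest optimal representation (because removing one $c_K$ would force $\text{opt}_C(w - c_K) \leq \text{opt}_C(w) - 1$). Hence $x_K = 0$, so $j \neq K$; combined with $c_j \leq w$ (so $j \leq K$), this yields $j < K$ and therefore $c_{j+1} \leq c_K \leq w$. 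For $c_{j+1} + c_i > w$, I suppose $w - c_i \geq c_{j+1}$; the greedy of $w - c_i$ (equal to its optimum since $w - c_i < w$) then takes some coin $c_L$ with $L \geq j+1$ as its first step, so $\bm{x}^B := (\text{greedy of } w - c_i) + e_i$ is an optimal representation of $w$ with $\bm{x}^B_L \geq 1$. Because $\bm{x}^A := (0, \ldots, 0, x_i, x_{i+1}, \ldots, x_j, 0, \ldots, 0)$ vanishes at position $L$ and $\sum \bm{x}^A = \sum \bm{x}^B = \text{opt}_C(w)$, the vector $\bm{x}^A$ carries strictly more mass in positions $< L$ than $\bm{x}^B$. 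A careful coordinate-by-coordinate comparison, using the greedy structure of $\bm{x}^B$ to pin down the first position of disagreement, then yields a contradiction with the lex-minimality of $\bm{x}^A$. This last lex-comparison step is the main obstacle of the proof.
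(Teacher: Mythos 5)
The paper does not prove this theorem; it is quoted from Pearson~\cite{Pe05} and used as a black box, so I am evaluating your argument on its own merits rather than comparing it to a paper proof.

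Your proof of the first claim ($\overline{\text{opt}}\,{}_C^{c_n}(w)=0$) is correct, and your argument for the lower boundary $c_{j+1}\leq w$ (via the largest $c_K\leq w$ and the inequality $\text{opt}_C(w-c_K)\geq\text{opt}_C(w)$) is also correct. The overall plan of reducing the conclusion to a residue check at each level $k$ is the right framework. However, there are two genuine gaps. The first you acknowledge yourself: the upper boundary $c_{j+1}+c_i>w$ rests on a lex-comparison between $\bm{x}^A$ and $\bm{x}^B:=G(w-c_i)+e_i$, and the fact that $\bm{x}^A$ carries more weight below position $L$ while $\bm{x}^B_L>0$ does not by itself force $\bm{x}^B<_{\text{lex}}\bm{x}^A$; with the stated coordinate convention (minimize $x_1$, then $x_2$, \ldots) the first disagreement could perfectly well have $\bm{x}^A_p<\bm{x}^B_p$ at some $p<i$ where $G(w-c_i)$ places a small coin, which is consistent with $\bm{x}^A$ being lex-smallest and yields no contradiction. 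Some additional structural information about $G(w-c_i)$ must be injected here, and it is not present.

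The second gap is in the intermediate residue conditions, and it is not acknowledged. For $i<k\leq j$ the required inequality is $c_{j+1}\leq c_k+\sum_{\ell=k}^{j}x_\ell c_\ell$. You claim this follows by ``applying greedy equals optimal below $w$'' to $w_k:=\sum_{\ell\geq k}x_\ell c_\ell$. But that principle only tells you $\text{grd}_C(w_k)=\text{opt}_C(w_k)$, i.e.\ that the \emph{number} of coins agrees; it does not tell you that the greedy representation of $w_k$ is $(0,\ldots,0,x_k,\ldots,x_j,0,\ldots,0)$, nor does it bound $\sum_{\ell=k}^{j}x_\ell c_\ell$ from below by $c_{j+1}-c_k$. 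Optimal representations need not be unique below $w$, and they need not coincide with the greedy one coordinatewise. In fact the lower part of the intermediate condition ($\sum_{\ell=k}^{j}x_\ell c_\ell<c_{j+1}$) already follows elementarily from $w<c_{j+1}+c_i$ and $x_i\geq 1$, but the upper part genuinely requires a further argument tying $\bm{x}^A$ to the greedy structure (e.g.\ showing $\sum_{\ell<k}x_\ell c_\ell<c_k$ for all $k\leq j$, i.e.\ that $\bm{x}^A$ restricted to $(1,c_2,\ldots,c_j)$ is itself a greedy vector). Nothing in your proposal delivers that, so as written the claim that the greedy of $c_{j+1}-1$ takes exactly $x_\ell$ coins of $c_\ell$ for $i<\ell\leq j$ is unsupported.
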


\noindent
Theorem~\ref{Thm:poly} plays a central role in an $O(n^3)$ algorithm for deciding whether a given system is canonical~\cite{Pe05}, but we omit the details in this paper.

The following theorem shows that for a canonical system, the subsystem with the leading three types of coins of the system is also canonical.

\begin{thm}[Adamaszek and Adamaszek~\cite{AdAd10}, Cai~\cite{Cai09}] \label{Thm:pre3}
For $n \geq 3$, if a system $C = (1, c_2, \ldots, c_n)$ is canonical, the subsystem $(1, c_2, c_3)$ is canonical.
\end{thm}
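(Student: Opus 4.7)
The plan is to prove the contrapositive by strong induction on $n$: assuming $(1, c_2, c_3)$ is noncanonical, show that $(1, c_2, \ldots, c_n)$ is noncanonical for every $n \ge 3$. The base case $n = 3$ is immediate.

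For the inductive step $n \ge 4$, I would apply the inductive hypothesis to the $(n-1)$-coin prefix $C' = (1, c_2, \ldots, c_{n-1})$, which still contains the noncanonical 3-coin subsystem and is therefore noncanonical. Let $w^*$ be its minimum counterexample; Theorem~\ref{Thm:w_range} gives $c_3 + 1 < w^* < c_{n-2} + c_{n-1}$. In the easy subcase $w^* < c_n$, neither the greedy nor any optimal representation in $C$ can invoke $c_n$ at the value $w^*$, so $\text{grd}_C(w^*) = \text{grd}_{C'}(w^*) > \text{opt}_{C'}(w^*) \ge \text{opt}_C(w^*)$, and $w^*$ transfers as a counterexample to $C$.

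The subtle subcase is $c_n \le w^* < c_{n-2} + c_{n-1}$, where the greedy on $w^*$ in $C$ uses exactly one copy of $c_n$ (since $w^* < 2 c_n$) and then descends to coins of index at most $n-3$ on the remainder $w^* - c_n < c_{n-2}$. Here I would invoke Pearson's Theorem~\ref{Thm:poly} on the noncanonical $C'$: its lex-smallest optimal representation of $w^*$ has the form $(0, \ldots, 0, x_i, \ldots, x_j, 0, \ldots, 0)$ with $j \le n - 2$, whence $\text{opt}_{C'}(w^*) = \text{opt}_{(1, c_2, \ldots, c_{n-2})}(w^*)$, call this common value $s$; in particular $\text{opt}_C(w^*) \le s$. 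Decomposing both greedies yields $\text{grd}_C(w^*) = 1 + \text{grd}_{(1, c_2, \ldots, c_{n-3})}(w^* - c_n)$ and $\text{grd}_{C'}(w^*) = 1 + \text{grd}_{(1, c_2, \ldots, c_{n-3})}(w^* - c_{n-1})$, and the strict inequality $\text{grd}_{C'}(w^*) > s$ (noncanonicity of $C'$ at $w^*$) must be lifted to $\text{grd}_C(w^*) > s$.

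The principal obstacle is exactly this lifting, since the greedy function is not monotone in its argument and the two remainders $w^* - c_n$ and $w^* - c_{n-1}$ differ by the positive gap $c_n - c_{n-1}$. I expect to resolve it by exploiting the Pearson structural identity (that the optimum for $w^*$ uses only the low-index coins $c_i, \ldots, c_j$) together with a case analysis on the gap $c_n - c_{n-1}$ and on whether it can be absorbed efficiently by the subsystem $(1, c_2, \ldots, c_{n-3})$; as a fallback, the Pearson-predicted value $c_{j+1} - 1$ satisfies $c_{j+1} - 1 < c_{n-1} < c_n$, so a below-$c_n$ counterexample extracted from that value (via direct comparison of its greedy expansion against representations obtained by shifting the optimum of $w^*$ by one copy of $c_i$) would reduce this subtle subcase to the easy subcase.
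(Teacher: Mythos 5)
This theorem is quoted by the paper from Adamaszek--Adamaszek~\cite{AdAd10} and Cai~\cite{Cai09} without an internal proof, so there is no in-paper argument to compare against; I evaluate your proposal on its own terms. The inductive framing, the base case $n=3$, and the easy subcase $w^* < c_n$ are all correct. However, the ``subtle subcase'' $c_n \le w^* < c_{n-2} + c_{n-1}$ is a genuine gap that your write-up acknowledges but does not close, and neither of the two ideas you float can close it. The hoped-for lifting from $\text{grd}_{C'}(w^*) > s$ to $\text{grd}_{C}(w^*) > s$ fails because $\text{grd}$ is not monotone in the residue, and in fact $w^*$ need not be a counterexample for $C$ at all. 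Take $C = (1,3,4,5)$ and $C' = (1,3,4)$: here $w^* = 6$, $c_4 = 5 \le 6 < 7 = c_2 + c_3$, so you are squarely in the subtle subcase, yet $\text{grd}_{C}(6) = 2 = \text{opt}_{C}(6)$ (via $5+1$ and $3+3$), so $w^*$ does not transfer. The true minimum counterexample of $C$ is $7$, which is \emph{larger} than $w^*$.

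The fallback of locating ``a below-$c_n$ counterexample'' is impossible in principle in this subcase: any counterexample to $C$ strictly below $c_n$ would also be a counterexample to $C'$ (neither representation can touch $c_n$), contradicting the minimality $w^* \ge c_n$. In the example above there is no counterexample below $5$ for either system, and the Pearson value $c_{j+1}-1 = 3$ gives nothing. What is actually needed in the subtle subcase is a mechanism for producing a counterexample to $C$ that is at least $c_n$ and typically exceeds $w^*$ (such as $c_2 + c_3 = 7$ above); nothing in your sketch produces such a value, so as it stands the induction does not go through.
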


Besides the decision problem for canonical systems, there have been various studies of the change-making problem, and we close this section by simply listing them below.
The change-making problem has some practical applications, such as network design~\cite{GeKl77}, cutting-stock, and capital allocation~\cite{NeUl69}.
From a theoretical perspective, Magazine, Nemhauser, and Trotter Jr.~\cite{MaNeTr75} analyzed conditions for the knapsack problem to be solvable with a greedy method,
Tien and Hu~\cite{TiHu77} studied the gap between greedy and optimal solutions of the change-making problem,
Adamaszek and Adamaszek~\cite{AdAd10} revealed some relationships between canonical systems and their subsystems,
Goebbels, Gurski, Rethmann, and Yilmaz~\cite{GoGuReYi17} considered approximation algorithms and fixed-parameter tractability for the change-making problem,
and Chan and He~\cite{ChHe20b} recently proposed faster dynamic programming-based algorithms for the change-making and related problems.

\section{Characterization of canonical systems with six types of coins} \label{Section3}

This section is devoted to proving Proposition~\ref{Pro:6coins}, which describes characterization of canonical systems with six types of coins at the end of Section~\ref{Section2.1}.
Part (a) of Proposition~\ref{Pro:6coins} immediately follows from Theorem~\ref{Thm:OPT}, and we concentrate on the proof of Proposition~\ref{Pro:6coins}(b), when the subsystem with the leading five types of coins is noncanonical.

Firstly, we give four easy lemmas that are referred to frequently in this section.

\begin{lemma} \label{Lem:prefix}
Assume that a system $C = (1,c_2,\ldots,c_n)$ is canonical.
Then, for each $1 \leq i \leq n$, the subsystem $(1,c_2, \ldots, c_i)$ is tight.
\end{lemma}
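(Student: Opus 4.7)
The plan is to argue by contradiction. Suppose, contrary to the conclusion, that for some index $i$ with $1 \leq i \leq n$ the subsystem $C_i := (1, c_2, \ldots, c_i)$ fails to be tight. By the definition of tightness, this means there exists a counterexample $w$ to $C_i$ with $w < c_i$; I will show that this same $w$ is a counterexample to the full system $C$, contradicting the assumed canonicity of $C$.

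The crux of the argument is an elementary monotonicity observation: since $w < c_i \leq c_{i+1} \leq \cdots \leq c_n$, no coin of value $c_j$ with $j \geq i$ can appear (with positive multiplicity) in any nonnegative-integer representation of $w$, whether in $C_i$ or in $C$. Applying this to the greedy algorithm of Table~\ref{Tab:GRDALGO}, the outer loop on input $w$ in $C$ simply passes over $c_n, c_{n-1}, \ldots, c_i$ without ever entering the inner \textbf{While}, and then proceeds identically to the greedy algorithm on $C_i$; hence $\mathrm{grd}_C(w) = \mathrm{grd}_{C_i}(w)$. Similarly, every feasible representation of $w$ in either system must put $x_j = 0$ for $j \geq i$, so the feasible sets coincide and therefore $\mathrm{opt}_C(w) = \mathrm{opt}_{C_i}(w)$.

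Combining these two equalities with the hypothesis $\mathrm{opt}_{C_i}(w) < \mathrm{grd}_{C_i}(w)$ gives $\mathrm{opt}_C(w) < \mathrm{grd}_C(w)$, so $w$ is a counterexample to $C$, contradicting the canonicity of $C$. No step of this argument poses a genuine obstacle; the only point that warrants explicit mention is the trivial-sounding but essential fact that a payment value strictly smaller than a coin's denomination cannot use that coin. The rest is mere bookkeeping between the full system and its prefix.
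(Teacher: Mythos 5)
Your proof is correct and follows essentially the same contradiction argument as the paper: a counterexample $w < c_i$ to the prefix $(1, c_2, \ldots, c_i)$ transfers verbatim to $C$ because no coin $c_j$ with $j \geq i$ can appear in any representation of $w$. You spell out the transfer step (matching greedy runs and matching feasible sets) in more detail than the paper does, and you absorb the $i = n$ case into the same argument rather than handling it separately, but the underlying idea is identical.
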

\begin{proof}
Suppose that there exists an index $i \ (1 \leq i < n)$ such that the subsystem $(1,c_2,\ldots,c_i)$ is not tight.
Then, there is a counterexample $w_i < c_i$ for the subsystem $(1,c_2,\ldots,c_i)$.
The value $w_i$ is also a counterexample to $C$ because $w_i < c_{i+1} < \cdots < c_n$.
This contradicts the assumption.

When $i=n$, the system $(1,c_2, \ldots, c_i)$ is $C$, and thus it is canonical and tight.
\end{proof}
\noindent
We note that Cai~\cite{Cai09} gave the contraposition of Lemma~\ref{Lem:prefix} without proof.

\begin{lemma} \label{Lem:exCor1}
Suppose that a system $C = (1, c_2, \ldots, c_n)$ is canonical and the subsystem $C' = (1, c_2, \ldots, c_{n-1})$ is noncanonical.
Let $w'$ be the minimum counterexample to $C'$.
Then $c_n \leq w'$ holds.
\end{lemma}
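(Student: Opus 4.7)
The plan is to argue by contradiction: suppose $w' < c_n$, and derive that $w'$ itself would be a counterexample to $C$, contradicting that $C$ is canonical.

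First I would note two size-monotonicity facts. Because $c_n > w'$ (under the contradiction hypothesis), coin $c_n$ cannot appear in any representation of $w'$, since all coins have positive value. This gives two equalities simultaneously. On the greedy side, when the algorithm in Table~\ref{Tab:GRDALGO} is run on $w'$ in $C$, the outer loop at $i=n$ contributes $0$ (because $c_n > v = w'$ throughout) and from $i=n-1$ downward the execution is identical to running the greedy algorithm on $w'$ in $C'$; hence $\text{grd}_C(w') = \text{grd}_{C'}(w')$. On the optimal side, every representation of $w'$ in $C$ must have $x_n = 0$, so it is a representation in $C'$, and every representation in $C'$ is trivially a representation in $C$ with $x_n = 0$; hence $\text{opt}_C(w') = \text{opt}_{C'}(w')$.

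Then I would combine these with the counterexample property of $w'$ for $C'$, namely $\text{opt}_{C'}(w') < \text{grd}_{C'}(w')$, to conclude $\text{opt}_C(w') < \text{grd}_C(w')$. This says $w'$ is a counterexample to $C$, contradicting the hypothesis that $C$ is canonical. The contradiction shows $c_n \leq w'$.

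There is essentially no obstacle: the argument is a straightforward observation that coins strictly larger than the value being paid are invisible to both the greedy and the optimal procedures, so a counterexample surviving below $c_n$ must persist when passing from $C'$ to $C$. The only thing to be careful about is to justify the equalities $\text{grd}_C(w') = \text{grd}_{C'}(w')$ and $\text{opt}_C(w') = \text{opt}_{C'}(w')$ cleanly, which follows directly from $c_n > w' > 0$.
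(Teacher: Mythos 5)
Your proof is correct and follows the same approach as the paper: both argue by contradiction that if $w' < c_n$, then $w'$ is also a counterexample to $C$. The paper states this in a single sentence, while you spell out the underlying reason (coins of value $c_n > w'$ cannot appear in any greedy or optimal representation of $w'$), but the idea is identical.
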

\begin{proof} 
If $w' < c_n$, then $w'$ is also a counterexample to~$C$, which contradicts the assumption that $C$ is canonical.
\end{proof}

\begin{lemma}\label{Lem:PMclass}
If a system $C = (1, c_2, c_3, c_4, c_5, c_6)$ is canonical and the subsystem $(1, c_2, c_3, c_4, c_5)$ is noncanonical, the {\tt +/-} class of $C$ is either {\tt ++++-+} or {\tt +++--+}.
\end{lemma}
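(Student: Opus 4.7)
The plan is to determine each of the six symbols of the {\tt +/-} class of $C$ individually, where by Definition~\ref{Def:PMclass} the $i\/$th symbol records whether the prefix subsystem $(1,c_2,\ldots,c_i)$ is canonical. Four of the six symbols are immediate: positions~1 and~2 are always {\tt +} because every one- or two-coin system is trivially canonical (as noted in Section~\ref{Section2.1}), position~5 is {\tt -} by the hypothesis that $(1,c_2,c_3,c_4,c_5)$ is noncanonical, and position~6 is {\tt +} by the hypothesis that $C$ itself is canonical.

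For the third symbol, I would invoke Theorem~\ref{Thm:pre3}: since $C$ is canonical and has $n = 6 \geq 3$ types of coins, the three-coin prefix $(1, c_2, c_3)$ is canonical, so position~3 is {\tt +}. The fourth symbol, by contrast, is not forced one way or the other by the hypotheses, since $(1,c_2,c_3,c_4)$ could either still be canonical or have a counterexample that disappears when $c_5$ and $c_6$ are appended. Combining the determined entries leaves exactly the two allowed strings {\tt ++++-+} and {\tt +++--+}, matching the lemma.

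There is essentially no obstacle to this proof; it is a direct bookkeeping argument resting on Theorem~\ref{Thm:pre3} together with the triviality of prefixes of length at most two. The value of the lemma lies not in its difficulty but in the way it will later organize the case analysis under hypothesis~(b) of Proposition~\ref{Pro:6coins}: the subsequent arguments can split on whether the four-coin prefix is canonical or not, knowing that the three-coin prefix is always canonical in this setting.
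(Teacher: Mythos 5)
Your proof is correct and follows essentially the same route as the paper's: positions 1 and 2 are {\tt +} because one- and two-coin systems are always canonical, position 3 is {\tt +} by Theorem~\ref{Thm:pre3}, positions 5 and 6 are {\tt -} and {\tt +} directly from the hypotheses, and position 4 is unconstrained, giving exactly the two stated strings.
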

\begin{proof}
The first and second symbols of the {\tt +/-} class of $C$ are~{\tt +} because systems with one or two types of coins are canonical.
The third one is also~{\tt +} because the subsystem with the leading three types of coins of a canonical system is canonical (Theorem~\ref{Thm:pre3}).
The fifth and sixth symbols are {\tt -} and~{\tt +}, respectively, from the assumption.
\end{proof}

\begin{lemma} \label{Lem:coin1}
Let $C$ and $w$ be a noncanonical system and its minimum counterexample, respectively.
Then $\overline{\text{opt}}\,{}_{C}^{c_1}(w) = 0$.
\end{lemma}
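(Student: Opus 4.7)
The plan is to assume $\overline{\text{opt}}\,{}_C^{c_1}(w) \geq 1$ for contradiction and invoke Pearson's Theorem~\ref{Thm:poly}. Denote the lexicographically smallest optimal representation of the minimum counterexample~$w$ by $\bm{x}^{*} = (x_1^{*}, x_2^{*}, \ldots, x_n^{*})$, and let $j$ be the largest index with $x_j^{*} > 0$. Theorem~\ref{Thm:poly} already guarantees $j < n$, and the assumption $x_1^{*} \geq 1$ forces the smallest nonzero index $i$ in the theorem's statement to equal~$1$. Hence the prefix $(y_1, \ldots, y_{i-1})$ in the theorem is empty, and Theorem~\ref{Thm:poly} tells me that the greedy representation of $c_{j+1} - 1$ in $C$ is precisely $(x_1^{*} - 1, x_2^{*}, \ldots, x_j^{*}, 0, \ldots, 0)$.

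The contradiction is then a one-line comparison of the values represented by these two vectors. The optimal representation of $w$ sums in value to $w = x_1^{*} + x_2^{*} c_2 + \cdots + x_j^{*} c_j$, while the greedy vector above is by construction a representation of $c_{j+1} - 1$; subtracting the two equalities immediately yields $w = c_{j+1}$. But any coin value satisfies $\text{grd}_C(c_{j+1}) = \text{opt}_C(c_{j+1}) = 1$ and therefore cannot be a counterexample to~$C$, contradicting the hypothesis on~$w$. Consequently $\overline{\text{opt}}\,{}_C^{c_1}(w) = 0$. I anticipate no real obstacle, as the reasoning is a direct application of Theorem~\ref{Thm:poly}; the only point that requires care is confirming that in the case $i = 1$ the theorem's formula legitimately reduces to the clean form above with no preceding $y_k$'s.
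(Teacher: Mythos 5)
Your proposal is correct and follows essentially the same approach as the paper's own proof: assume the $c_1$-count is positive, apply Theorem~\ref{Thm:poly} with $i=1$ to get the greedy representation of $c_{j+1}-1$, subtract to deduce $w=c_{j+1}$, and conclude from $\text{opt}_C(c_{j+1})=\text{grd}_C(c_{j+1})=1$ that $w$ is not a counterexample. The only difference is that you spell out the value subtraction step that the paper leaves implicit, which is a harmless elaboration.
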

\begin{proof}
Assume that $\overline{\text{opt}}\,{}_{C}^{c_1}(w)$, namely, $\overline{\text{opt}}\,{}_{C}^{1}(w)$, is greater than zero.
Let the lexicographically smallest optimal representation for $w$ in $C$ be $(x_1, x_2, \ldots, x_j, 0, 0, \ldots, 0)$ where $x_1 > 0$ and $x_j > 0$.
Then, from Theorem~\ref{Thm:poly}, the greedy representation for $c_{j+1} - 1$ in $C$ is $(x_1 -1, x_2, \ldots, x_j, 0, 0, \ldots, 0)$.
Thus, $w = c_{j+1}$ holds.
However, $\text{opt}_{C}(c_{j+1}) = \text{grd}_{C}(c_{j+1}) = 1$, which contradicts $w$ being a counterexample to~$C$.
\end{proof}

The proof of Proposition~\ref{Pro:6coins} proceeds as follows.
Lemmas~\ref{Lem:c6_kouho} and~\ref{Lem:2c5-c4} show that $c_6$ is equal to $2c_5 - c_2$ or $2c_5 - c_3$ if $C = (1, c_2, c_3, c_4, c_5, c_6)$ is canonical and the subsystem $C' = (1, c_2, c_3, c_4, c_5)$ is noncanonical.
Lemma~\ref{Lem:2c5-c2} analyzes the case of $c_6 = 2 c_5 - c_2$, and Lemmas~\ref{Lem:2c5-c3_1}, \ref{Lem:2c5-c3_2}, and~\ref{Lem:2c5-c3_3} handle that of $c_6 = 2 c_5 - c_3$.
Then, based on these lemmas, Theorem~\ref{Thm:need} states a necessary condition that $C$ is canonical and $C'$ is noncanonical.
Theorem~\ref{Thm:sufficiency}, supported by Lemmas~\ref{Lem:g_and_m-1}, \ref{Lem:g_and_m-2}, and \ref{Lem:g_and_m-3}, shows the converse of Theorem~\ref{Thm:need}.
Finally, Theorem~\ref{Crl:iff2} concludes that Proposition~\ref{Pro:6coins} is true.

\begin{lemma} \label{Lem:c6_kouho}
Suppose that a system $C = (1, c_2, c_3, c_4, c_5, c_6)$ is canonical and the subsystem $C' = (1, c_2, c_3, c_4, c_5)$ is noncanonical.
Then $c_6$ is equal to $2c_5 - c_2$, $2c_5 - c_3$, or~$2c_5 - c_4$.
\end{lemma}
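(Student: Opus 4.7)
The plan is to test canonicality of $C$ at the single value $v = 2c_5$ and read off a one-coin constraint on $2c_5 - c_6$.

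First I would sharpen the range of $c_6$. Since $C$ is canonical while $C'$ is not, Lemma~\ref{Lem:exCor1} gives $c_6 \le w'$ where $w'$ is the minimum counterexample to $C'$, and Theorem~\ref{Thm:w_range} applied to $C'$ gives $w' < c_4 + c_5$. Using $c_4 \le c_5 - 1$, these combine to $c_5 < c_6 \le 2c_5 - 2$, which in particular implies $c_6 \le 2c_5 < 2c_6$.

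Next I would evaluate $\text{grd}_C(2c_5)$ two ways. The representation $2c_5 = c_5 + c_5$ yields $\text{opt}_C(2c_5) \le 2$, and canonicality of $C$ then gives $\text{grd}_C(2c_5) \le 2$. On the other hand, $c_6 \le 2c_5 < 2c_6$ guarantees the greedy algorithm in $C$ takes exactly one $c_6$ on input $2c_5$ and then continues on $2c_5 - c_6 < c_6$ entirely within $C'$, so $\text{grd}_C(2c_5) = 1 + \text{grd}_{C'}(2c_5 - c_6)$. Combining, $\text{grd}_{C'}(2c_5 - c_6) \le 1$, which forces $2c_5 - c_6 \in \{0, 1, c_2, c_3, c_4, c_5\}$.

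Finally, the bounds from step one eliminate three candidates: $2c_5 - c_6 = 0$ contradicts $c_6 < 2c_5$, $2c_5 - c_6 = 1$ contradicts $c_6 \le 2c_5 - 2$, and $2c_5 - c_6 = c_5$ contradicts $c_6 > c_5$. The three survivors are exactly the three conclusions of the lemma. There is no genuine obstacle here; the only small care needed is to justify that greedy takes $c_6$ exactly once on $2c_5$, which is immediate from the two-sided bound $c_6 \le 2c_5 < 2c_6$ established in step one.
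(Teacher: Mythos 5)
Your proposal is correct and follows essentially the same route as the paper: test canonicality at $v = 2c_5$, use Lemma~\ref{Lem:exCor1} and Theorem~\ref{Thm:w_range} to pin $c_6$ strictly between $c_5$ and $2c_5-1$ so that greedy takes exactly one $c_6$, conclude from $\text{opt}_C(2c_5)=\text{grd}_C(2c_5)=2$ that the remainder $2c_5-c_6$ is a single coin, and then discard $1$ and $c_5$ using the same bounds. The only cosmetic difference is that you phrase the middle step as $\text{grd}_C(2c_5)\le 2$ and separately rule out the remainder being $0$, whereas the paper notes directly that $c_6<2c_5$ forces $\text{grd}_C(2c_5)=2$; the substance is identical.
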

\begin{proof}
To prove this lemma, consider paying $2 c_5$ in~$C$.
Let $w'$ be the minimum counterexample to $C'$.
From Lemma~\ref{Lem:exCor1} and Theorem~\ref{Thm:w_range}, we have $c_6 \leq w'$ and $w' < c_4 + c_5 < 2 c_5 < 2 c_6$, respectively.
Thus, we have $c_6 < 2 c_5 < 2 c_6$, which leads to $\text{grd}_{C}^{c_6}(2 c_5) = 1$ and $\text{grd}_{C}(2 c_5) > 1$.
As $C$ is canonical, $\text{opt}_{C}(2 c_5) = \text{grd}_{C}(2 c_5) = 2$.
Since $\text{grd}_{C}^{c_6}(2 c_5) = 1$ holds, $2 c_5 - c_6$ is equal to $1, c_2, c_3, c_4$, or~$c_5$.

Since $c_5 < c_6$ holds, $2 c_5 - c_6 \neq c_5$.
From $c_6 \leq w' < c_4 + c_5 \leq 2 c_5 - 1$, we have $2 c_5 - 1 \neq c_6$, which completes the~proof.
\end{proof}

Lemma~\ref{Lem:c6_kouho} claimed that $c_6$ is equal to $2c_5 - c_2$, $2c_5 - c_3$, or~$2c_5 - c_4$.
However, Lemma~\ref{Lem:2c5-c4} reveals that, in fact, $c_6 \neq 2c_5 - c_4$.

\begin{lemma} \label{Lem:2c5-c4}
If $C = (1,c_2,c_3,c_4,c_5,c_6)$ is canonical and the subsystem $C' = (1,c_2,c_3,c_4,c_5)$ is noncanonical, $c_6 \neq 2c_5-c_4$.
\end{lemma}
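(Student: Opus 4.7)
The plan is to assume for contradiction that $c_6 = 2c_5 - c_4$ and derive a contradiction by examining the greedy algorithm on several carefully chosen test values.

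First, let $w'$ be the minimum counterexample to $C'$. Combining Lemma~\ref{Lem:exCor1} ($c_6 \leq w'$) and Theorem~\ref{Thm:w_range} ($w' < c_4 + c_5$) with $c_6 = 2c_5 - c_4$ yields the key inequality $c_5 < 2c_4$. Next, I test $v = c_4 + c_5$: from $c_5 < 2c_4$ I get $c_6 \leq v < 2c_6$, so greedy in $C$ uses exactly one $c_6$, leaving remainder $2c_4 - c_5 \in (0, c_4)$, and thus $\text{grd}_C(v) = 1 + \text{grd}_{(1,c_2,c_3)}(2c_4 - c_5)$. Since $C$ is canonical and $\text{opt}_C(v) \leq 2$, I conclude $\text{grd}_{(1,c_2,c_3)}(2c_4 - c_5) \leq 1$, i.e., $2c_4 - c_5 \in \{1, c_2, c_3\}$. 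This splits the proof into three cases.

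If $c_5 = 2c_4 - 1$, I verify using Theorem~\ref{Thm:5coin} that $C'$ is not of form (b) and that the one-point condition of Theorem~\ref{Thm:OPT} holds for $C'$, forcing $(1, c_2, c_3, c_4)$ to be noncanonical. Its minimum counterexample $\tilde{w}$ satisfies $\tilde{w} < c_3 + c_4$ by Theorem~\ref{Thm:w_range}, and $\tilde{w} \geq c_5$ since otherwise $\tilde{w}$ would also be a counterexample to $C$. Together these give $2c_4 - 1 < c_3 + c_4$, i.e., $c_4 \leq c_3$, contradicting $c_3 < c_4$. If $c_5 = 2c_4 - c_3$, a parallel application of Theorem~\ref{Thm:5coin} shows $(1, c_2, c_3, c_4)$ noncanonical and the same $\tilde{w}$ bound gives $c_4 < 2c_3$. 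Theorem~\ref{Thm:4coin} then implies $\text{grd}_{(1,c_2,c_3,c_4)}(2c_3) > 2$, which unwinds to $2c_3 - c_4 \notin \{0, 1, c_2\}$. On the other hand, testing $v = 2c_4$ in $C$ (greedy uses one $c_6$, leaving remainder $2c_3 - c_4 \in (0, c_3)$) combined with canonicity of $C$ forces $2c_3 - c_4 \in \{1, c_2\}$, a contradiction.

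The case $c_5 = 2c_4 - c_2$ is the main obstacle, since the remainder analysis at $v = c_4 + c_5$ alone does not contradict canonicity. I first show $(1, c_2, c_3, c_4)$ noncanonical (again via Theorem~\ref{Thm:5coin}) and derive $c_4 < c_2 + c_3$ from the $\tilde{w}$ bound. Then I introduce a second test: $v = c_3 + c_5$, where greedy uses one $c_6$ with remainder $c_2 + c_3 - c_4 \in [1, c_3)$; canonicity forces this remainder into $\{1, c_2\}$, and since $c_4 > c_3$ rules out $c_2 + c_3 - c_4 = c_2$, I deduce $c_4 = c_2 + c_3 - 1$. The six-coin system is now pinned down in terms of $c_2, c_3$. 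I finish by splitting on $c_2$: if $c_2 = 2$, then $C' = (1, 2, c_3, c_3 + 1, 2c_3)$, which is canonical either by Theorem~\ref{Thm:5coin}(b) when $c_3 > 3$ or by a direct check via Theorem~\ref{Thm:OPT} when $c_3 = 3$, contradicting $C'$ noncanonical; if $c_2 \geq 3$, then $2c_3 < c_5$, and testing $v = 2c_3$ gives $\text{grd}_C(2c_3) = 1 + \text{grd}_{(1,c_2)}(c_3 - c_2 + 1) \geq 3$ from noncanonicity of $(1, c_2, c_3, c_4)$, while $\text{opt}_C(2c_3) \leq 2$, contradicting canonicity of $C$.

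The main obstacle is this third case: a single test value does not resolve it, so I must use $v = c_3 + c_5$ to fix $c_4 = c_2 + c_3 - 1$ and then further split on $c_2$, with the $c_2 = 2$ subcase handled by the special form in Theorem~\ref{Thm:5coin}(b) and the $c_2 \geq 3$ subcase by an argument analogous to the $c_5 = 2c_4 - c_3$ case.
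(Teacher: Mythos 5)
Your proof is correct but organizes the argument differently from the paper. Both proofs start identically (using Lemma~\ref{Lem:exCor1} and Theorem~\ref{Thm:w_range} at $v = c_4+c_5$ to reduce to $2c_4 - c_5 \in \{1, c_2, c_3\}$), but the paper then splits on the {\tt +/-} class of $C$: in the {\tt ++++-+} case it invokes the one-point theorem to show $C'$ canonical, and in the {\tt +++--+} case it leans on Cai's tightness theorem (Theorem~\ref{Thm:tight}) to bound $w'$ by $c_3 + c_4$ and then examines $2c_4 - c_6$. You instead split directly on the value of $2c_4 - c_5$ and, in every branch, use the contrapositive of the one-point theorem (Theorem~\ref{Thm:OPT}) to force $(1,c_2,c_3,c_4)$ noncanonical, which absorbs the paper's {\tt ++++-+} branch into each case. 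You then extract the inequality $c_5 \leq \tilde{w} < c_3 + c_4$ from the counterexample $\tilde{w}$ of the four-coin subsystem and work from there, entirely avoiding Cai's Theorem~\ref{Thm:tight}. The cost is visible in the $c_5 = 2c_4 - c_2$ branch, where one test value no longer suffices: you need the second probe $v = c_3 + c_5$ to pin down $c_4 = c_2 + c_3 - 1$ and a subsequent split on $c_2$. The paper's route buys a shorter argument in that branch by applying Theorem~\ref{Thm:tight} once to limit the counterexample to a sum $c_i + c_j$; your route buys independence from Cai's theorem and a uniform template across branches (show the four-coin prefix noncanonical, derive an inequality, test a small value), at the price of heavier elementary casework. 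Both lines are sound and arrive at the same contradictions.
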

\begin{proof}
Assume that $c_6 = 2c_5-c_4$.
Let $w'$ be the minimum counterexample to $C'$.
From Lemma~\ref{Lem:exCor1} and Theorem~\ref{Thm:w_range}, $c_6 \leq w' < c_4+c_5$.
Thus, $\text{grd}_{C}^{c_6}(c_4 + c_5) = 1$ and $\text{grd}_{C}(c_4 + c_5) > 1$.
Since $C$ is canonical, $\text{opt}_{C}(c_4 + c_5) = \text{grd}_{C}(c_4 + c_5) = 2$.
Therefore $c_4+c_5-c_6 = 2c_4-c_5$ is equal to 1, $c_2$, or $c_3$.
Thus, we have
\begin{itemize}
\item[\textbullet] $2c_4-c_5=1 \quad\Leftrightarrow\quad C = (1,c_2,c_3,c_4,2c_4-1,3c_4-2)$,
\item[\textbullet] $2c_4-c_5=c_2 \quad\Leftrightarrow\quad C = (1,c_2,c_3,c_4,2c_4-c_2,3c_4-2c_2)$,
\item[\textbullet] $2c_4-c_5=c_3 \quad\Leftrightarrow\quad C = (1,c_2,c_3,c_4,2c_4-c_3,3c_4-2c_3)$.
\end{itemize}

From Lemma~\ref{Lem:PMclass}, the {\tt +/-} class of $C$ is {\tt ++++-+} or {\tt +++--+}.
Assume that the {\tt +/-} class of $C$ is {\tt ++++-+}.
Then, $C''=(1,c_2,c_3,c_4)$ is canonical.
Applying Theorem~\ref{Thm:5coin}(a) to $C''$, we have that $C'$ is canonical for any of the above three cases, which contradicts the assumption $c_6 = 2c_5-c_4$.

Assume that the {\tt +/-} class of $C$ is {\tt +++--+}.
From Lemma~\ref{Lem:prefix}, both $C'$ and $C'' = (1,c_2,c_3,c_4)$ are tight.
Let $w''$ be the minimum counterexample to $C''$.
If $w'' < c_5$, $w''$ is also a counterexample to $C'$, which contradicts $C'$ being tight.
Therefore $c_5 \leq w''$ holds.
In addition, we have $w'' < c_3+c_4 \leq 2c_4-1$ from Theorem~\ref{Thm:w_range}.
Thus, we have $c_5 \leq w'' < c_3+c_4 \leq 2c_4-1$ and hence $2c_4-c_5 \neq 1$.
From Theorem~\ref{Thm:tight}, there exist $i$ and $j$ such that $1 < i \leq j \leq 4$ and $c_i + c_j$ is a counterexample to~$C'$.
The value $2c_4$ is not a counterexample to $C'$ because now we have that $2c_4$ is equal to $c_2 + c_5$ or $c_3 + c_5$, and thus $\text{grd}_{C'}(2c_4)=2 = \text{opt}_{C'}(2c_4)$. 
Therefore $w' \leq c_3 + c_4$ holds.
With Lemma~\ref{Lem:exCor1}, we have $c_6 \leq w' \leq c_3 + c_4 < 2c_4$.
Hence, $\text{grd}_{C}^{c_6}(2c_4) = 1$ and $\text{grd}_{C}(2c_4) > 1$ hold.
\begin{itemize}
\item[\textbullet]
The case of $2c_4 - c_5 = c_2 \ \Leftrightarrow\ C = (1,c_2,c_3,c_4,2c_4-c_2,3c_4-2c_2)$. \\
Since $C$ is canonical, $2c_4$ is not a counterexample to $C$.
Thus, one of $c_1, c_2, \ldots, c_5$ is equal to $2c_4-c_6 = 2c_2-c_4$.
For $2 \leq i \leq 5$, $c_i = 2c_2-c_4$ leads to $2c_2 = c_i+c_4$, which is a contradiction.
For $i=1$, we have $2c_2 = c_4+1$, which yields $c_3 \leq 2c_2-2$ combined with $c_3 \leq c_4 -1$.
Then, from Theorem~\ref{Thm:3coin}, the system $(1,c_2,c_3)$ is noncanonical, which contradicts the {\tt +/-} class of $C$ being {\tt +++--+}.
\item[\textbullet]
The case of $2c_4-c_5=c_3 \ \Leftrightarrow\  C = (1,c_2,c_3,c_4,2c_4-c_3,3c_4-2c_3)$. \\
Since $C$ is canonical, $2c_4$ is not a counterexample to $C$.
Thus one of $c_1, c_2, \ldots, c_5$ is equal to $2c_4-c_6 = 2c_3-c_4$.
Let $c_i$ be equal to $2c_3-c_4$.
Then we have $C = (1,c_2,c_3,2c_3-c_i,3c_3-2c_i,4c_3-3c_i)$.
Consider paying $2c_3$ in $C'' = (1,c_2,c_3,2c_3-c_i)$.
Clearly $\text{grd}_{C''}(2c_3) = \text{opt}_{C''}(2c_3) =2$, and $2c_3$ is not a counterexample to $C''$.
From Theorem~\ref{Thm:OPT}, $C''$ is canonical, which contradicts the {\tt +/-} class of $C$ being {\tt +++--+}.
\end{itemize}
\end{proof}

At this point, we conclude that if a system $C=(1,c_2,c_3,c_4,c_5,c_6)$ is canonical and the subsystem $C'=(1,c_2,c_3,c_4,c_5)$ is noncanonical, then $c_6$ is equal to $2c_5-c_2$ or $2c_5-c_3$.
Lemmas~\ref{Lem:2c5-c2} and~\ref{Lem:2c5-c3_1} analyze the cases of $c_6=2c_5-c_2$ and $c_6=2c_5-c_3$, respectively.

\begin{lemma} \label{Lem:2c5-c2}
If $C = (1,c_2,c_3,c_4,c_5,c_6)=(1,c_2,c_3,c_4,c_5,2c_5-c_2)$ is canonical and the subsystem $C'=(1,c_2,c_3,c_4,c_5)$ is noncanonical, $C = (1,2,3,c_4,c_4+1,2c_4)$ and $c_4 > 4$.
\end{lemma}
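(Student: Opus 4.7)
The plan is to pin down the parameters of $C$ by analyzing greedy versus optimal for three specific payment values ---  $c_4+c_2$, $c_3+c_4$, and $2c_4$ --- in the canonical system $C$, and then split on the {\tt +/-} class of $C$ from Lemma \ref{Lem:PMclass}. Throughout I set $d := c_5 - c_4 \geq 1$.

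First, from $c_6 = 2c_5 - c_2$, Lemma \ref{Lem:exCor1}, and Theorem \ref{Thm:w_range}, I obtain $c_6 \leq w' < c_4 + c_5$, so $d \leq c_2 - 1$. Next, I analyze $\text{grd}_C(c_4 + c_2)$: in the range $c_4 > 2(c_2 - d)$ one has $c_5 \leq c_4 + c_2 < c_6$, so the greedy takes one $c_5$ and then pays the residue $c_2 - d$ with ones, giving $\text{grd}_C(c_4 + c_2) = 1 + (c_2 - d)$, while $\text{opt}_C(c_4 + c_2) = 2$. Canonicity of $C$ therefore forces $c_2 = d + 1$. The boundary range $c_4 \leq 2(c_2 - d)$ can be excluded by combining the canonicity of $(1, c_2, c_3)$ --- which holds in both subcases of Lemma \ref{Lem:PMclass} --- via Corollary \ref{Cor:3coin} with $c_2 < c_3 < c_4 \leq 2(c_2 - d) \leq 2c_2$, which admits no valid $c_3$.

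Then, analyzing $\text{grd}_C(c_3 + c_4)$ under $c_5 \leq c_3 + c_4 < c_6$, the greedy picks $c_5$ and leaves $c_3 - d$; since $\text{opt}_C(c_3 + c_4) = 2$, canonicity demands $\text{grd}_{(1, c_2, c_3)}(c_3 - d) = 1$, so $c_3 - d$ must be a single coin of the 3-coin subsystem. Combined with $c_2 = d + 1$ and $c_3 > c_2$, this forces $c_3 - d = c_2$, i.e., $c_3 = 2 c_2 - 1$.

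Now I split by Lemma \ref{Lem:PMclass}. In Case {\tt ++++-+}, $C''$ is canonical and $C'$ is not, so Theorem \ref{Thm:OPT} gives that $2c_4$ is a counterexample to $C'$; then $w' \leq 2 c_4$ and hence $d \leq c_2 / 2$. Combined with $c_2 = d + 1$, we get $c_2 - 1 \leq c_2/2$, so $c_2 \leq 2$, whence $c_2 = 2$, $d = 1$, $c_3 = 3$, $c_5 = c_4 + 1$. In Case {\tt +++--+}, necessarily $c_2 \geq 3$ (for if $c_2 = 2$, then $c_3 = 3$ and $(1, 2, 3, c_4)$ is canonical for every $c_4 \geq 4$, contradicting the class). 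Then $c_6 = 2c_4 + c_2 - 2 > 2c_4$, so the greedy for $2c_4$ in $C$ uses $c_5$ (not $c_6$), leaving $c_4 - d$; canonicity demands that $c_4 - d$ be a coin of $C''$, and the only admissible choice (given $c_4 > c_3 > c_2$) is $c_4 - d = c_3$, yielding $c_4 = 3 c_2 - 2$. But then $2 c_3 - c_4 = c_2$, so $\text{grd}_{C''}(2 c_3) = 2$, which by Theorem \ref{Thm:4coin} makes $C''$ canonical, contradicting {\tt +++--+}.

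Finally, $c_4 > 4$ follows because $c_4 = 4$ would make $C' = (1, 2, 3, 4, 5)$ canonical, contradicting the hypothesis that $C'$ is noncanonical. The main obstacle will be the clean handling of the boundary ranges in the $c_4 + c_2$ argument and the resolution of Case {\tt +++--+}, both of which rely on careful use of the canonicity of the 3-coin subsystem and of Theorem \ref{Thm:4coin} in both directions.
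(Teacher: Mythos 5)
Your proof is correct, but it takes a genuinely different route from the paper's. The paper derives $c_5 = c_2 + c_4 - 1$ by analyzing $\text{grd}_C(c_4 + c_5)$ and observing that $c_4 + c_5 - c_6 = c_2 + c_4 - c_5 < c_2$, which immediately forces that quantity to equal $1$; this sidesteps your boundary case at $c_4 \leq 2(c_2 - d)$ entirely. The paper then assumes $c_2 > 2$ and derives $w' \leq 2c_4 < c_6$ in both {\tt +/-} subcases --- for {\tt ++++-+} via Theorem~\ref{Thm:OPT} exactly as you do, and for {\tt +++--+} via Theorem~\ref{Thm:tight}, which gives a counterexample of the form $c_i + c_j$ with $i,j \leq 4$, hence bounded by $2c_4$ --- yielding a single uniform contradiction. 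It then pins down $c_3 = 3$ by exhibiting $c_3 + c_4$ as a counterexample to $C'$ when $c_3 > 3$. Your approach instead derives $c_3 = 2c_2 - 1$ before the class split (by analyzing $c_3 + c_4$ in $C$ rather than in $C'$), and resolves the {\tt +++--+} case not via Theorem~\ref{Thm:tight} but by squeezing $c_4 = 3c_2 - 2$ out of $\text{grd}_C(2c_4)$ and then invoking Theorem~\ref{Thm:4coin} to show $C'' = (1, c_2, 2c_2 - 1, 3c_2 - 2)$ is canonical, contradicting the class. Both routes work; the paper's is somewhat shorter because the Theorem~\ref{Thm:tight} bound kills the {\tt +++--+} case without needing to identify $c_3$ and $c_4$ exactly, while your argument has the appeal of avoiding Theorem~\ref{Thm:tight} altogether at the cost of an extra case split and an explicit Theorem~\ref{Thm:4coin} computation.
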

\begin{proof}
Let $w'$ be the minimum counterexample to $C'$.
From Lemma~\ref{Lem:exCor1} and Theorem~\ref{Thm:w_range}, $c_6 \leq w' < c_4+c_5$.
Thus $\text{grd}_{C}^{c_6}(c_4+c_5)=1$ and $\text{grd}_{C}(c_4+c_5) > 1$.
Since $C$ is canonical, $\text{opt}_{C}(c_4+c_5) = \text{grd}_{C}(c_4+c_5) = 2$.
Thus $c_4+c_5-c_6 = c_2+c_4-c_5$ is equal to $1$, $c_2$, or $c_3$.
According to $c_4 < c_5$, $c_2+c_4-c_5 = 1$, which leads to $C = (1,c_2,c_3,c_4,c_2+c_4-1,c_2+2c_4-2)$.

Assume $c_2 > 2$.
From Lemma~\ref{Lem:PMclass}, the {\tt +/-} class of $C$ is {\tt ++++-+} or {\tt +++--+}.

\begin{itemize}
\item[\textbullet] When the {\tt +/-} class of $C$ is {\tt ++++-+}. \\
From Theorem~\ref{Thm:OPT}, if $C'$ is noncanonical, $\text{grd}_{C'}(2c_4) > \text{opt}_{C'}(2c_4)$ holds.
Thus $2c_4$ is a counterexample to~$C'$, which leads to $w' \leq 2c_4$.
\item[\textbullet] When the {\tt +/-} class of $C$ is {\tt +++--+}. \\
Since $C$ is canonical, the systems $(1,c_2,c_3)$, $(1,c_2,c_3,c_4)$, and $C' = (1,c_2,c_3,c_4,c_5)$ are tight from Lemma~\ref{Lem:prefix}.
From the assumption regarding the {\tt +/-} class, $(1,c_2,c_3)$, $(1,c_2,c_3,c_4)$, and $C'$ are canonical, noncanonical, and noncanonical, respectively.
Hence there exists a counterexample $c_i + c_j$ to $C'$ such that $1 < i \leq j \leq 4$ from Theorem~\ref{Thm:tight}.
Thus we have $w' \leq 2c_4$.
\end{itemize}
As above, $w' \leq 2c_4 < 2c_4+c_2 -2 = c_6$ and this contradicts $c_6 \leq w'$.
Hence, we have that $c_2=2$ and $C = (1,2,c_3,c_4,c_4+1,2c_4)$.

Assume $c_3 > 3$.
Consider paying $c_3+c_4$ in $C'$.
Clearly $\text{opt}_{C'}(c_3+c_4) \leq 2$.
In addition, $\text{grd}_{C'}^{c_5}(c_3+c_4) = 1$ because $c_3+c_4 - c_5 = c_3-1$.
Since $c_3+c_4-c_5 = c_3-1 > 2$ and $c_2=2$, we have $\text{grd}_{C'}(c_3+c_4) \geq 3 > \text{opt}_{C'}(c_3+c_4)$ and $c_3 +c_4$ is a counterexample to $C'$.
Therefore $w' \leq c_3+c_4 < 2c_4 = c_6$; however, we already have $c_6 \leq w'$.
Hence, $c_3=3$ and $C = (1,2,3,c_4,c_4+1,2c_4)$.

Assume $c_4 = 4$.
Then $C$ becomes $(1,2,3,4,5,8)$.
Applying Theorem~\ref{Thm:OPT} repeatedly, we have that $C'=(1,2,3,4,5)$ is canonical, which contradicts the assumption, and therefore $c_4 > 4$.
\end{proof}

\begin{lemma} \label{Lem:2c5-c3_1}
If $C = (1,c_2,c_3,c_4,c_5,c_6) = (1,c_2,c_3,c_4,c_5,2c_5-c_3)$ is canonical and the subsystem $C'=(1,c_2,c_3,c_4,c_5)$ is noncanonical, $C$ is $(1,c_2,2c_2-1,c_4,c_2+c_4-1,2c_4-1)$ or $(1,c_2,2c_2,c_4,c_2+c_4,2c_4)$.
\end{lemma}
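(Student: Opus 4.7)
The plan is to mirror the structure of Lemma~\ref{Lem:2c5-c2}, using the value $c_4 + c_5$ as the main probe. Since $C$ is canonical and $C'$ is noncanonical, Lemma~\ref{Lem:exCor1} and Theorem~\ref{Thm:w_range} give $c_6 \leq w' < c_4 + c_5$, where $w'$ denotes the minimum counterexample to $C'$. This forces $\text{grd}_C^{c_6}(c_4+c_5) = 1$, so $c_4 + c_5 - c_6 = c_3 + c_4 - c_5$ must be strictly positive, and the canonicity of~$C$ (combined with the trivial bound $\text{opt}_C(c_4+c_5) \leq 2$) yields $\text{opt}_C(c_4+c_5) = \text{grd}_C(c_4+c_5) = 2$. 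Hence $c_3 + c_4 - c_5$ must be a single coin value; since $c_5 > c_4$ gives $0 < c_3 + c_4 - c_5 < c_3$, the only possibilities are $c_3 + c_4 - c_5 \in \{1, c_2\}$.

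The next step is to obtain the uniform bound $w' \leq 2c_4$, regardless of which of the two sub-cases occurs. Lemma~\ref{Lem:PMclass} restricts the {\tt +/-} class of $C$ to either {\tt ++++-+} or {\tt +++--+}. In the {\tt ++++-+} case, $(1,c_2,c_3,c_4)$ is canonical and $C'$ is not, so Theorem~\ref{Thm:OPT} applied with $m = \lceil c_5 / c_4 \rceil$ (which equals $2$ in either sub-case, as will be checked using $c_3 < c_4$) produces $2c_4$ as a counterexample to~$C'$. In the {\tt +++--+} case, Lemma~\ref{Lem:prefix} makes all leading subsystems tight, and Theorem~\ref{Thm:tight} delivers a counterexample of the form $c_i + c_j$ with $1 < i \leq j \leq 4$, hence at most $2c_4$. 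Either way $w' \leq 2c_4$.

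I can then dispose of the sub-case $c_3 + c_4 - c_5 = 1$: substituting $c_5 = c_3 + c_4 - 1$ into $c_6 = 2c_5 - c_3$ gives $c_6 = c_3 + 2c_4 - 2$, so the bound $c_6 \leq 2c_4$ forces $c_3 \leq 2$, contradicting $c_3 > c_2 \geq 2$. In the remaining sub-case $c_5 = c_3 + c_4 - c_2$, the same bound yields $c_6 = c_3 + 2c_4 - 2c_2 \leq 2c_4$, i.e.\ $c_3 \leq 2c_2$. Lemma~\ref{Lem:PMclass} guarantees that $(1,c_2,c_3)$ is canonical, so Corollary~\ref{Cor:3coin}, applied to $c_2 < c_3 \leq 2c_2$ (writing $c_3 = qc_2 + r$ with $q \in \{1,2\}$), leaves only $c_3 = 2c_2 - 1$ or $c_3 = 2c_2$. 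Substituting each into $c_5 = c_3 + c_4 - c_2$ and $c_6 = 2c_5 - c_3$ gives exactly the two systems claimed.

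I expect the main obstacle to be the verification of $m = 2$ when invoking Theorem~\ref{Thm:OPT}, and making sure the {\tt +/-} class dichotomy handles both sub-cases $c_3 + c_4 - c_5 \in \{1, c_2\}$ in a unified way; once that step is secure, the algebra that pins down $c_3$ is immediate from Corollary~\ref{Cor:3coin}.
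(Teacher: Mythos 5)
Your proposal is correct and follows essentially the same route as the paper: probe with $c_4+c_5$ to get $c_3+c_4-c_5\in\{1,c_2\}$, use the {\tt +/-}-class dichotomy together with Theorem~\ref{Thm:OPT} and Theorem~\ref{Thm:tight} to obtain $w'\le 2c_4$, and then pin down $c_3$ via Corollary~\ref{Cor:3coin}. The only stylistic difference is that you establish the bound $w'\le 2c_4$ once uniformly, whereas the paper rederives it inside each of the two sub-cases.
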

\begin{proof}
Let $w'$ be the minimum counterexample to $C'$.
From Lemma~\ref{Lem:exCor1} and Theorem~\ref{Thm:w_range}, $c_6 \leq w' < c_4+c_5$ holds.
Since $c_6 < c_4+c_5$, we have $\text{opt}_C(c_4+c_5)=2$.
As $C$ is canonical, $\text{grd}_C(c_4+c_5)=2$ and $\text{grd}_C^{c_6}(c_4+c_5)=1$.
Since $c_4+c_5-c_6 = c_3+c_4 - c_5 < c_3$, we have that $c_3+c_4 - c_5$ is equal to $1$ or $c_2$.

If $c_3+c_4-c_5=1$, then $C=(1,c_2,c_3,c_4,c_3+c_4-1,c_3+2c_4-2)$.
From Lemma~\ref{Lem:PMclass}, the {\tt +/-} class of $C$ is {\tt ++++-+} or {\tt +++--+}.
Firstly, assume that the {\tt +/-} class of $C$ is {\tt ++++-+}.
From Theorem~\ref{Thm:OPT}, if $C'$ is noncanonical, $\text{grd}_{C'}(2c_4) > \text{opt}_{C'}(2c_4)$ holds.
Thus $2c_4$ is a counterexample to~$C'$.
Hence, $w' \leq 2c_4 < 2c_4 + c_3 -2 =c_6$, which is a contradiction.
Next, assume that the {\tt +/-} class of $C$ is {\tt +++--+}.
From Theorem~\ref{Thm:tight}, there exist $i$ and $j$ such that $1 < i \leq j \leq 4$ and $c_i + c_j$ is a counterexample to $C'$.
Thus we have $w' \leq c_i + c_j \leq 2c_4 < c_3+2c_4-2 = c_6$, which is a contradiction.
Therefore $c_3+c_4-c_5 \neq 1$.

If $c_3+c_4-c_5=c_2$, then $C = (1,c_2,c_3,c_4,c_3+c_4-c_2,c_3+2c_4-2c_2)$.
From Theorem~\ref{Thm:3coin}, if $c_3 < 2c_2-1$ then $(1,c_2,c_3)$ is noncanonical, which contradicts the fact that the subsystem $(1,c_2,c_3)$ of a canonical system is canonical.
Thus we have $2c_2-1 \leq c_3$.
From Lemma~\ref{Lem:PMclass}, the {\tt +/-} class of $C$ is {\tt ++++-+} or {\tt +++--+}.
Firstly, assume that the {\tt +/-} class of $C$ is {\tt ++++-+}.
From Theorem~\ref{Thm:OPT}, if $C'$ is noncanonical, $\text{grd}_{C'}(2c_4) > \text{opt}_{C'}(2c_4)$.
Thus $2c_4$ is a counterexample to~$C'$ and we have $w' \leq 2c_4$.
If $c_3>2c_2$, $w' \leq 2c_4 < 2c_4+c_3-2c_2 = c_6$, which contradicts $c_6 \leq w'$.
Therefore $c_3 \leq 2c_2$.
Next, assume that the {\tt +/-} class of $C$ is {\tt +++--+}.
From Theorem~\ref{Thm:tight}, there exist $i$ and $j \ (1 < i \leq j \leq 4)$ such that $c_i+c_j$ is a counterexample to~$C'$.
If $c_3>2c_2$, we have $w' \leq c_i + c_j \leq 2c_4 < c_3+2c_4-2c_2 = c_6$, which contradicts $c_6 \leq w'$.
Hence $c_3 \leq 2c_2$.

As above, we have $2c_2-1 \leq c_3 \leq 2c_2$ and conclude that $C = (1,c_2,2c_2-1,c_4,c_2+c_4-1,2c_4-1)$ or $C = (1,c_2,2c_2,c_4,c_2+c_4,2c_4)$.
\end{proof}

Lemmas~\ref{Lem:2c5-c3_2} and~\ref{Lem:2c5-c3_3} analyze necessary conditions when $C=(1,c_2,2c_2-1,c_4,c_2+c_4-1,2c_4-1)$ and $C=(1,c_2,2c_2,c_4,c_2+c_4,2c_4)$, respectively.

\begin{lemma} \label{Lem:2c5-c3_2}
If $C=(1,c_2,c_3,c_4,c_5,c_6)=(1,c_2,2c_2-1,c_4,c_2+c_4-1,2c_4-1)$ is canonical and the subsystem $C'=(1,c_2,c_3,c_4,c_5)=(1,c_2,2c_2-1,c_4,c_2+c_4-1)$ is noncanonical, then $c_4 \geq 3c_2-1$, $\text{grd}_C(\ell c_3) \leq \ell$, and $\text{grd}_C(\ell c_3) = \ell c_3-c_5+1-\lfloor (\ell c_3-c_5)/c_2 \rfloor (c_2-1)$ for $\ell = \lceil c_5/c_3 \rceil$.
\end{lemma}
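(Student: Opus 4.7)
My plan is to prove Part~(1) by contradiction and then derive Parts~(3) and~(2) by direct greedy computation. Assume for contradiction $c_4 \leq 3c_2 - 2$; combined with $c_4 > c_3 = 2c_2 - 1$ this forces $2c_2 \leq c_4 \leq 3c_2 - 2$, and I split into two subcases.

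In the boundary subcase $c_4 = 3c_2 - 2$, I argue $C'$ is canonical, contradicting the hypothesis. Theorem~\ref{Thm:4coin} applied to $(1, c_2, 2c_2-1, 3c_2-2)$ with $m = \lceil c_4/c_3 \rceil = 2$ gives $\text{grd}(4c_2-2) = 2$ (paid as $c_4 + c_2$), so that subsystem is canonical. Then Theorem~\ref{Thm:5coin} applied to $C'$ with $m' = \lceil c_5/c_4 \rceil = 2$ gives $\text{grd}_{C'}(6c_2-4) = 2$ (paid as $c_5 + c_3$), so condition~(a) of that theorem is satisfied and $C'$ is canonical.

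In the remaining subcase $c_4 \leq 3c_2 - 3$ (which forces $c_2 \geq 3$), I show that $2c_3 = 4c_2 - 2$ is a counterexample to $C$, contradicting $C$ canonical. From $c_4 \geq 2c_2$ we get $c_6 = 2c_4 - 1 > 2c_3$, so greedy skips $c_6$; from $c_5 = c_2 + c_4 - 1 \leq 4c_2 - 2$ greedy takes exactly one $c_5$, leaving remainder $3c_2 - 1 - c_4 \in [2, c_2 - 1]$; this remainder lies strictly below $c_2$, so greedy completes with $3c_2 - 1 - c_4$ coins of~$1$. Therefore $\text{grd}_C(2c_3) = 3c_2 - c_4 \geq 3$ while $\text{opt}_C(2c_3) \leq 2$ (two coins of~$c_3$).

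Given $c_4 \geq 3c_2 - 1$, Part~(3) follows by direct computation. The definition of $\ell$ gives $\ell c_3 \in [c_5, c_5 + c_3 - 1]$, and $c_5 + c_3 = 3c_2 + c_4 - 2 \leq 2c_4 - 1 = c_6$, hence $\ell c_3 < c_6$ and greedy takes exactly one $c_5$. The remainder $v' := \ell c_3 - c_5$ satisfies $0 \leq v' < c_3 \leq c_4$, so greedy finishes with $\lfloor v'/c_2 \rfloor$ coins of $c_2$ and $v' - c_2 \lfloor v'/c_2 \rfloor$ coins of~$1$, summing to $\text{grd}_C(\ell c_3) = 1 + v' - (c_2 - 1)\lfloor v'/c_2 \rfloor$, which is the claimed formula. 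Part~(2) is then immediate: $\ell$ coins of $c_3$ represent $\ell c_3$, giving $\text{opt}_C(\ell c_3) \leq \ell$, and canonicality of $C$ yields $\text{grd}_C(\ell c_3) = \text{opt}_C(\ell c_3) \leq \ell$. The main obstacle is the boundary case $c_4 = 3c_2 - 2$: there $\text{grd}_C(2c_3) = 2 = \text{opt}_C(2c_3)$, so $2c_3$ fails as a counterexample to $C$, and this boundary must instead be eliminated from the other direction by verifying that $C'$ itself is canonical through the full two-step chain of Theorems~\ref{Thm:4coin} and~\ref{Thm:5coin}.
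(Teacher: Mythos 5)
Your proof is correct, and your derivations of the greedy formula and the bound $\text{grd}_C(\ell c_3) \leq \ell$ once $c_4 \geq 3c_2-1$ is established coincide with the paper's. Where you genuinely diverge is in establishing $c_4 \geq 3c_2-1$ itself. The paper works with the single value $c_3 + c_5$: under $c_4 < 3c_2-1$ one has $c_3+c_5 > c_6$, so canonicity forces $c_3+c_5-c_6 = 3c_2-c_4-1$ to be a coin value, and each of the four candidates $1, c_2, c_3, c_4$ leads to a contradiction in one unified case split (the candidate $1$ is exactly your boundary case $c_4 = 3c_2-2$, handled identically by showing $C'$ is canonical via repeated application of the one-point theorem; your appeal to Theorems~\ref{Thm:4coin} and~\ref{Thm:5coin} is the same argument with the theorems pre-packaged). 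Your route instead splits on the range of $c_4$ explicitly and, for $c_4 \leq 3c_2-3$, exhibits $2c_3 = 4c_2-2$ directly as a counterexample to $C$ by computing $\text{grd}_C(2c_3) = 3c_2-c_4 \geq 3$ while $\text{opt}_C(2c_3) \leq 2$. That computation is verified correctly (one $c_5$ plus a remainder in $[2,c_2-1]$ paid in ones), and you rightly flag that $2c_3$ fails as a witness precisely at the boundary $c_4 = 3c_2-2$, which is why the boundary must be dispatched separately. The paper's approach is a bit slicker because the boundary case falls out as ``remainder equals coin $1$'' rather than being carved off by hand, but your $2c_3$ argument is more concrete and arguably easier to check. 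Both are sound.
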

\begin{proof}
Assume $c_4 < 3c_2-1$, that is, $c_3+c_5 > c_6$. 
Since $C$ is canonical, $c_3 + c_5$ is not a counterexample to $C$.
Thus, $c_3+c_5-c_6 = 3c_2-c_4-1$ is equal to $1$, $c_2$, $c_3$, or $c_4$.
If $3c_2-c_4-1$ is equal to $c_2$, $c_3$, or $c_4$, then we induce that $c_3 \geq c_4$, which is a contradiction.
If $3c_2-c_4-1$ is equal to $1$, then $C'=(1,c_2,2c_2-1,3c_2-2,4c_2-3)$ and we find that $C'$ is canonical by applying Theorem~\ref{Thm:OPT} repeatedly.
Therefore we have $c_3+c_5 \leq c_6$, namely, $c_4 \geq 3c_2-1$.

Consider paying $\ell c_3$ in~$C$.
Since $c_5 = (c_5 / c_3)\cdot c_3 \leq \lceil c_5/c_3 \rceil \cdot c_3 = \ell c_3$ and $\ell c_3 = \lceil c_5/c_3 \rceil \cdot c_3< c_3+c_5 \leq c_6$, $\text{grd}_{C}^{c_6}(\ell c_3) = 0$ and $\text{grd}_{C}^{c_5}(\ell c_3) = 1$ hold.
In addition, $\text{grd}_{C}^{c_4}(\ell c_3-c_5) = \text{grd}_{C}^{c_3}(\ell c_3-c_5) = 0$ follows from $\ell c_3-c_5 = \lceil c_5/c_3\rceil \cdot c_3 - c_5 < c_3$.
If $\ell c_3-c_5 < c_2$, $\text{grd}_{C}(\ell c_3-c_5) = \text{grd}_{C}^{c_1}(\ell c_3-c_5) = \ell c_3-c_5$ holds, and if $c_2 \leq \ell c_3-c_5 < c_3 = 2c_2-1$, $\text{grd}_{C}^{c_2}(\ell c_3-c_5) = 1$ and $\text{grd}_{C}(\ell c_3-c_5) = 1 + \ell c_3-c_5-c_2$ hold.
Thus we have $\text{grd}_{C}(\ell c_3) = \ell c_3 -c_5+1- \lfloor (\ell c_3-c_5)/c_2 \rfloor (c_2-1)$.
Since $C$ is canonical, $\text{grd}_{C}(\ell c_3) = \text{opt}_{C}(\ell c_3) \leq \ell$.
\end{proof}

\begin{lemma} \label{Lem:2c5-c3_3}
If $C = (1,c_2,c_3,c_4,c_5,c_6) = (1,c_2,2c_2,c_4,c_2+c_4,2c_4)$ is canonical and the subsystem $C'= (1,c_2,c_3,c_4,c_5) = (1,c_2,2c_2,c_4,c_2+c_4)$ is noncanonical, then $c_4 \geq 3c_2-1$, $c_4 \neq 3c_2$, $\text{grd}_{C}(\ell c_3) \leq \ell$, and $\text{grd}_{C}(\ell c_3) = \ell c_3-c_5+1-\lfloor (\ell c_3-c_5)/c_2 \rfloor (c_2-1)$ for $\ell = \lceil c_5/c_3 \rceil$.
\end{lemma}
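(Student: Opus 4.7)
The plan is to mirror the proof of Lemma~\ref{Lem:2c5-c3_2}, which handled the companion case $c_3 = 2c_2-1$, and to add one extra step ruling out $c_4 = 3c_2$. First I would establish $c_4 \geq 3c_2 - 1$ by contradiction: suppose $c_4 \leq 3c_2-2$, so that $c_3 + c_5 = 3c_2 + c_4 > 2c_4 = c_6$. Since $C$ is canonical and $c_3+c_5$ has the obvious two-coin representation using $c_3$ and $c_5$, greedy must achieve two coins as well, which forces $c_3 + c_5 - c_6 = 3c_2 - c_4$ to equal one of $1, c_2, c_3, c_4, c_5$. The cases $c_2, c_3, c_4, c_5$ each collapse to an inequality inconsistent with $c_2 < c_3 < c_4 < c_5$, and the case $3c_2 - c_4 = 1$ contradicts the standing assumption $c_4 \leq 3c_2-2$.

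Next, to obtain $c_4 \neq 3c_2$, I would substitute $c_4 = 3c_2$ into $C'$ to get $(1, c_2, 2c_2, 3c_2, 4c_2)$ and verify this subsystem is canonical by applying Theorem~\ref{Thm:OPT} twice: the three-coin prefix $(1,c_2,2c_2)$ is canonical by Corollary~\ref{Cor:3coin}, and the one-point tests at $m=2$ for the fourth and fifth coins both yield $\text{grd}=2$ (using the obvious decompositions $4c_2 = 3c_2 + c_2$ and $6c_2 = 4c_2 + 2c_2$). This canonicality of $C'$ contradicts the hypothesis.

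For the greedy evaluation at $\ell c_3$, I would first confirm $\ell c_3 < c_6$ in the two surviving subcases. When $c_4 \geq 3c_2+1$ we have $c_3 + c_5 \leq c_6$, and together with the standard bound $\ell c_3 < c_5 + c_3$ this gives $\ell c_3 < c_6$ immediately. When $c_4 = 3c_2 - 1$ we only get $\ell c_3 < c_5 + c_3 = c_6 + 1$, so I would additionally rule out $\ell c_3 = c_6$ by noting $\ell c_3 = c_6 = 2c_4 = 6c_2 - 2$ together with $c_3 = 2c_2$ would force $2c_2 \mid 2$, contradicting $c_2 \geq 2$. Once $\ell c_3 < c_6$ is secured, I would run greedy: it takes zero $c_6$, exactly one $c_5$ (since $c_5 \leq \ell c_3 < c_5 + c_3$), zero $c_4$ and zero $c_3$ (since $\ell c_3 - c_5 < c_3 < c_4$), and finally reduces the residual $\ell c_3 - c_5 \in [0, 2c_2)$ using at most one $c_2$ plus the appropriate number of unit coins. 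Summing yields exactly the claimed closed form $\ell c_3 - c_5 + 1 - \lfloor(\ell c_3 - c_5)/c_2\rfloor (c_2-1)$, and the bound $\text{grd}_C(\ell c_3) \leq \ell$ then follows because $C$ is canonical and $\ell c_3$ admits the $\ell$-coin representation consisting of $\ell$ copies of $c_3$.

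The main obstacle I anticipate is the edge subcase $c_4 = 3c_2 - 1$, where $\ell c_3$ lies just below $c_6$ rather than safely below $c_3 + c_5$; the divisibility argument above (using $c_2 \geq 2$) is what keeps the greedy $c_6$-count at zero and makes the closed-form computation go through uniformly with the other subcase.
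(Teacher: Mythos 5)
Your proposal is correct and follows essentially the same route as the paper: rule out $c_4 < 3c_2-1$ by examining $c_3+c_5$ against canonicality, rule out $c_4=3c_2$ by showing $C'=(1,c_2,2c_2,3c_2,4c_2)$ would be canonical via Theorem~\ref{Thm:OPT}, and then compute $\text{grd}_C(\ell c_3)$ directly after establishing $\ell c_3 < c_6$. The paper simply organizes the cases as $c_4=3c_2$, $c_4>3c_2$, $c_4<3c_2$ and, in the edge case $c_4=3c_2-1$, evaluates $\ell=2$ and $\text{grd}_C(\ell c_3)=2$ numerically rather than via your divisibility argument for $\ell c_3\neq c_6$, but these are cosmetic differences.
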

\begin{proof}
First, assume $c_4 = 3c_2$.
Then $C = (1, c_2, 2c_2, 3c_2, 4c_2, 6c_2)$ holds.
Applying Theorem~\ref{Thm:OPT}, we have that $C'=(1, c_2, 2c_2, 3c_2, 4c_2)$ is canonical, which contradicts the assumption.

Secondly, assume $c_4 > 3c_2$, which is equivalent to $c_6 > c_3+c_5$.
Consider paying $\ell c_3$ in~$C$.
Since $c_5 = (c_5 / c_3)\cdot c_3 \leq \lceil c_5/c_3 \rceil \cdot c_3 = \ell c_3$ and $\ell c_3 = \lceil c_5/c_3 \rceil \cdot c_3< c_3+c_5 < c_6$, $\text{grd}_{C}^{c_6}(\ell c_3) = 0$ and $\text{grd}_{C}^{c_5}(\ell c_3) = 1$ hold.
The remainder of the proof that $\text{grd}_{C}(\ell c_3) \leq \ell$ where $\ell = \lceil c_5/c_3 \rceil$ and $\text{grd}_{C}(\ell c_3) = \ell c_3-c_5+1-\lfloor (\ell c_3-c_5)/c_2 \rfloor (c_2-1)$ for $c_4 > 3c_2$ proceeds in the same way as that of Lemma~\ref{Lem:2c5-c3_2}.

Finally, assume $c_4 < 3c_2$, which is equivalent to $c_6 < c_3+c_5$.
Since $C$ is canonical, $c_3 + c_5$ is not a counterexample to~$C$.
Thus, $c_3 + c_5 - c_6 = 3c_2 - c_4$ is equal to $1$, $c_2$, $c_3$, or~$c_4$.
If $3c_2-c_4$ is equal to $c_2$, $c_3$, or $c_4$, then we induce that $c_3 \geq c_4$, which is a contradiction.
Hence, we have $c_4 = 3c_2-1$ and $C = (1, c_2, 2c_2, 3c_2 -1, 4c_2 -1, 6c_2 -1)$.
Then $\ell$ is $\lceil c_5/c_3 \rceil = 2$ and $\text{grd}_{C}(\ell c_3)$ is $\ell c_3-c_5+1-\lfloor (\ell c_3-c_5)/c_2 \rfloor (c_2-1) = 2$.
Therefore $\text{grd}_{C}(\ell c_3) \leq \ell$ holds.

From the above, we have $c_4 \geq 3c_2 -1$, $c_4 \neq 3c_2$, and $\text{grd}_{C}(\ell c_3) \leq \ell$ where $\ell = \lceil c_5/c_3 \rceil$ and $\text{grd}_{C}(\ell c_3) = \ell c_3-c_5+1-\lfloor (\ell c_3-c_5)/c_2 \rfloor (c_2-1)$.
\end{proof}

Here, we have a necessary condition for a system with six types of coins being canonical and the subsystem with the leading five types of coins being noncanonical.

\begin{thm} \label{Thm:need}
Assume a system $C= (1,c_2,c_3,c_4,c_5,c_6)$ is canonical and the subsystem $C'= (1,c_2,c_3,c_4,c_5)$ is noncanonical.
Then $C$ satisfies (a), (b), or (c) for $\ell = \lceil c_5/c_3 \rceil:$
\begin{itemize}
\item[(a)] $C=(1,2,3,c_4,c_4+1,2c_4)$ and $c_4 > 4;$
\item[(b)] $C=(1,c_2,2c_2-1,c_4,c_2+c_4-1,2c_4-1)$, $c_4 \geq 3c_2-1$, $\text{grd}_C(\ell c_3) = \ell c_3-c_5+1-\lfloor (\ell c_3-c_5)/c_2 \rfloor (c_2-1)$, and $\text{grd}_C(\ell c_3) \leq \ell;$
\item[(c)] $C=(1,c_2,2c_2,c_4,c_2+c_4,2c_4)$, $c_4 \geq 3c_2-1$, $c_4 \neq 3c_2$, $\text{grd}_C(\ell c_3) = \ell c_3-c_5+1-\lfloor (\ell c_3-c_5)/c_2 \rfloor (c_2-1)$, and $\text{grd}_C(\ell c_3) \leq \ell$.
\end{itemize}
\end{thm}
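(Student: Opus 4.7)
The plan is to assemble Theorem~\ref{Thm:need} as a direct synthesis of Lemmas~\ref{Lem:c6_kouho} through~\ref{Lem:2c5-c3_3}, since each lemma has already isolated one case of the structure of $c_6$ or one of the two candidate forms when $c_6 = 2c_5 - c_3$. The theorem itself contributes no new inequality; it merely packages the conclusions of those lemmas into a single tripartite necessary condition.

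First I would invoke Lemma~\ref{Lem:c6_kouho} to restrict $c_6$ to one of the three values $2c_5 - c_2$, $2c_5 - c_3$, or $2c_5 - c_4$. Then I would apply Lemma~\ref{Lem:2c5-c4} to eliminate the case $c_6 = 2c_5 - c_4$, leaving only $c_6 \in \{2c_5 - c_2,\ 2c_5 - c_3\}$. This bifurcates the argument into exactly the two generative cases handled by Lemmas~\ref{Lem:2c5-c2} and~\ref{Lem:2c5-c3_1}.

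Next, in the branch $c_6 = 2c_5 - c_2$, Lemma~\ref{Lem:2c5-c2} directly forces $C = (1,2,3,c_4,c_4+1,2c_4)$ with $c_4 > 4$, which is exactly case~(a) of the theorem. In the branch $c_6 = 2c_5 - c_3$, Lemma~\ref{Lem:2c5-c3_1} splits $C$ into one of the two sub-forms $(1,c_2,2c_2-1,c_4,c_2+c_4-1,2c_4-1)$ or $(1,c_2,2c_2,c_4,c_2+c_4,2c_4)$. Applying Lemma~\ref{Lem:2c5-c3_2} to the first sub-form yields $c_4 \geq 3c_2 - 1$ together with the equality $\text{grd}_C(\ell c_3) = \ell c_3 - c_5 + 1 - \lfloor (\ell c_3 - c_5)/c_2 \rfloor (c_2 - 1)$ and the inequality $\text{grd}_C(\ell c_3) \leq \ell$ for $\ell = \lceil c_5/c_3 \rceil$; this is case~(b). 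Similarly, applying Lemma~\ref{Lem:2c5-c3_3} to the second sub-form produces $c_4 \geq 3c_2 - 1$, $c_4 \neq 3c_2$, together with the same identity and inequality, giving case~(c).

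Because every piece of the conclusion is already proved in the preceding lemmas, there is essentially no new obstacle; the only care required is bookkeeping, namely verifying that the three cases (a), (b), (c) are exhaustive under the hypothesis that $C$ is canonical and $C'$ is noncanonical. This exhaustiveness is automatic: Lemmas~\ref{Lem:c6_kouho} and~\ref{Lem:2c5-c4} together certify that exactly one of $c_6 = 2c_5 - c_2$ or $c_6 = 2c_5 - c_3$ holds, and Lemma~\ref{Lem:2c5-c3_1} certifies that the latter splits into the two sub-forms treated by Lemmas~\ref{Lem:2c5-c3_2} and~\ref{Lem:2c5-c3_3}. Hence the proof reduces to a short paragraph of case analysis citing the six lemmas in the order above.
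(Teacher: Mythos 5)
Your proposal is correct and is essentially the paper's own argument: Theorem~\ref{Thm:need} is proved there by citing Lemmas~\ref{Lem:c6_kouho}, \ref{Lem:2c5-c4}, \ref{Lem:2c5-c2}, \ref{Lem:2c5-c3_1}, \ref{Lem:2c5-c3_2}, and~\ref{Lem:2c5-c3_3} in exactly the order and for exactly the purposes you describe. Your write-up simply makes the case exhaustion explicit, which the paper leaves implicit.
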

\begin{proof}
This proposition follows from Lemmas~\ref{Lem:c6_kouho}, \ref{Lem:2c5-c4}, \ref{Lem:2c5-c2}, \ref{Lem:2c5-c3_1}, \ref{Lem:2c5-c3_2}, and~\ref{Lem:2c5-c3_3}.
\end{proof}

We now prove the converse of Theorem~\ref{Thm:need}.
The converses of (a), (b), and (c) of Theorem~\ref{Thm:need} correspond to Lemmas~\ref{Lem:g_and_m-1}, \ref{Lem:g_and_m-2}, and \ref{Lem:g_and_m-3}, respectively.

\begin{lemma} \label{Lem:g_and_m-1}
Assume $C=(1,c_2,c_3,c_4,c_5,c_6)=(1,2,3,c_4,c_4+1,2c_4)$ and $c_4 > 4$.
Then $C$ is canonical and the subsystem $C'=(1,c_2,c_3,c_4,c_5)=(1,2,3,c_4,c_4+1)$ is noncanonical.
\end{lemma}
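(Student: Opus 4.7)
The plan is to handle the two claims of the lemma separately; the work for $C$ being canonical relies on a strong induction combined with a small exchange argument.

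For the noncanonicity of $C' = (1, 2, 3, c_4, c_4+1)$, I would exhibit $2c_4$ as a counterexample. Two coins of value $c_4$ give $\text{opt}_{C'}(2c_4) \leq 2$, whereas the greedy algorithm in $C'$ picks $c_5 = c_4 + 1$ first and is then left with $c_4 - 1 \geq 4$ to be paid in the canonical subsystem $(1, 2, 3)$, costing $\lceil (c_4 - 1)/3 \rceil \geq 2$ further coins. Thus $\text{grd}_{C'}(2c_4) \geq 3 > 2 \geq \text{opt}_{C'}(2c_4)$.

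For the canonicity of $C = (1, 2, 3, c_4, c_4+1, 2c_4)$, I would prove $\text{opt}_C(v) = \text{grd}_C(v)$ for every $v \geq 1$ by strong induction on $v$. When $v < c_4$ only $\{1, 2, 3\}$ is usable, which is canonical. When $c_4 \leq v < 2c_4$, any two coins from $\{c_4, c_4+1, 2c_4\}$ sum to at least $2c_4 > v$, so every representation uses at most one such coin; among those options, greedy's choice of $c_5 = c_4 + 1$ leaves the smallest remainder in $(1, 2, 3)$ and thus matches opt. When $v \geq 2c_4$, greedy picks one $2c_4$ and recurses, so by the inductive hypothesis $\text{grd}_C(v) = 1 + \text{opt}_C(v - 2c_4)$ and $\text{opt}_C(v) \leq \text{grd}_C(v)$ follows; the remaining task is the matching lower bound $\text{opt}_C(v) \geq 1 + \text{opt}_C(v - 2c_4)$.

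To obtain that, I would show any optimal representation of $v \geq 2c_4$ can be assumed to use at least one $2c_4$-coin. Starting from an optimum with $a$ copies of $2c_4$, $b$ of $c_4+1$, $c$ of $c_4$, I apply local exchanges: (E1) two $c_4$s become one $2c_4$ (saves a coin, forcing $c \leq 1$ at an optimum); (E2) two $c_4+1$s become one $2c_4$ plus one $2$ (count unchanged); (E3) a pair $c_4 + (c_4 + 1)$ becomes one $2c_4$ plus one $1$ (count unchanged). After iterated exchange, either the representation already contains a $2c_4$-coin (removing it yields a representation of $v - 2c_4$ with $\text{opt}_C(v) - 1$ coins, closing the induction) or it satisfies $a = 0$ with $b + c \leq 1$.

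The main obstacle will be ruling out this residual configuration. There the large coins cover at most $c_4 + 1$, forcing the small-coin portion to sum to at least $v - c_4 - 1 \geq c_4 - 1$ and contribute at least $\lceil (v - c_4 - 1)/3 \rceil$ coins, so the total is at least $1 + \lceil (v - c_4 - 1)/3 \rceil$. I would establish, by an auxiliary induction on $u$, the estimate $\lceil (u + c_4 - 1)/3 \rceil > \text{grd}_C(u)$ for every $u \geq 0$, splitting on the range of $u$ relative to $c_4$ and $2c_4$ and using $\lceil c_4/3 \rceil \leq c_4 - 1$ (valid for $c_4 \geq 2$). Applied with $u = v - 2c_4$, this yields $1 + \lceil (v - c_4 - 1)/3 \rceil > 1 + \text{grd}_C(v - 2c_4) = \text{grd}_C(v)$, so the residual representation uses strictly more coins than the greedy---contradicting optimality---and the induction step closes.
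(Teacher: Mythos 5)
Your proof is correct, but it takes a genuinely different route from the paper. The paper first shows $C$ is \emph{tight} (no counterexample below $c_6=2c_4$) by checking $v<c_4$ and $c_5<v<c_6$ directly, observes that tightness of $C$ immediately gives tightness of $C'$, and then invokes Cai's theorem (Theorem~\ref{Thm:tight}) to conclude: if $C$ were noncanonical there would have to be a counterexample of the form $c_i+c_j>c_6$ with $1<i\le j\le 5$, and the only such sums, $c_4+c_5=c_6+1$ and $c_5+c_5=c_6+c_2$, are both greedily represented with two coins. This is short because Theorem~\ref{Thm:tight} does all the work of constraining potential counterexamples above $c_6$. You instead run a self-contained strong induction over all $v$, and for $v\ge 2c_4$ you force a $2c_4$-coin into an optimal representation by the exchanges (E1)--(E3), with the auxiliary estimate $\lceil(u+c_4-1)/3\rceil>\text{grd}_C(u)$ ruling out the residual configuration $a=0,\,b+c\le1$. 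The pieces all check out: $c\le1$ at an optimum by (E1); one application of (E2) or (E3) already creates a $2c_4$-coin, so the residual case really is $b+c\le1$; the lower bound $1+\lceil(v-c_4-1)/3\rceil$ on any such representation is valid (using $c_4+1\ge6$ for the $b=c=0$ subcase); and the auxiliary induction closes because $\lceil(u+c_4-1)/3\rceil-\lceil(u-c_4-1)/3\rceil\ge\lfloor 2c_4/3\rfloor\ge3$ for $c_4\ge5$. The trade-off: your argument avoids the machinery of Theorem~\ref{Thm:tight} entirely and is in that sense more elementary and self-contained, at the cost of the extra exchange bookkeeping and the auxiliary estimate; the paper's argument is shorter and also sets up the pattern reused verbatim for the harder Lemmas~\ref{Lem:g_and_m-2} and~\ref{Lem:g_and_m-3}, where the coin structure is less friendly and a direct exchange approach would be considerably more painful. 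One small gloss worth tightening: in your range $c_4\le v<2c_4$ the greedy coin is $c_4$ when $v=c_4$, not $c_5$; the remainder comparison you give applies only for $v\ge c_5$, with $v=c_4$ handled trivially.
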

\begin{proof}
The value $2 c_4$ is a counterexample to $C'$ because $\text{opt}_{C'}(2c_4) = 2$ and $\text{grd}_{C'}(2c_4) > 2$, which follows from $2c_4-(c_4+1) = c_4-1 > 3$.
Thus, $C'$ is noncanonical.

We show that $C$ is tight; that is, no counterexample to $C$ exists that is less than or equal to~$c_6$.
Let $C_3$ be the subsystem of $C$ with the leading three types of coins.
Since $C_3 = (1,c_2,c_3)=(1,2,3)$, $C_3$ is canonical.

Consider paying $v$ in $C$ and analyze $\text{grd}_C(v)$.
When $v < c_4$, the equality $\text{grd}_C(v)= \text{grd}_{C_3}(v)$ holds because $v < c_4$.
In addition, $\text{grd}_{C_3}(v)= \text{opt}_{C_3}(v)$ because $C_3$ is canonical.
Hence, if $v < c_4$, $\text{grd}_C(v)= \text{opt}_{C}(v)$ holds and $v$ is not a counterexample to $C$.

Suppose $c_5 < v < c_6$.
Since $v- c_4 < c_4$ and $v-c_5 < c_4-1$, $\text{opt}_C(v)$ is equal to $\text{grd}_{C_3}(v-c_5)+1$, $\text{grd}_{C_3}(v-c_4)+1$, or $\text{grd}_{C_3}(v)$.
As $\text{grd}_{C_3}(v) = \lceil v/3 \rceil$, $\text{grd}_{C_3}(v)$ is monotonically nondecreasing with respect to $v$.
Since $c_4 > 4$, we have $\text{grd}_{C_3}(v-c_5)+1 \leq \text{grd}_{C_3}(v-c_4)+1 \leq \text{grd}_{C_3}(v)$.
Hence, for $c_5 < v < c_6$, $\text{opt}_C(v) = \text{grd}_{C_3}(v-c_5)+1$ holds, which means the greedy algorithm is optimal for $c_5 < v < c_6$.
Therefore, $v$ such that $c_5< v < c_6$ is not a counterexample to~$C$.
Thus, $C$ is tight and accordingly $C'$ is also tight.

From Theorem~\ref{Thm:tight}, there exists a counterexample $w$ to $C$ such that $w = c_i+c_j > c_6 \ (1 < i \leq j \leq 5)$ if $C$ is noncanonical.
Such $w$ can be only $c_4+c_5=c_6+1$ or $c_5 + c_5 = c_6+c_2$, but both of them are not counterexamples to~$C$ because $\text{opt}_C(c_4+c_5) = 2 = \text{grd}_C(c_6+1)$ and $\text{opt}_C(c_5+c_5) = 2 = \text{grd}_C(c_6+c_2)$, and thus $C$ is canonical.
\end{proof}

\begin{lemma} \label{Lem:g_and_m-2}
Assume $C=(1,c_2,c_3,c_4,c_5,c_6)=(1,c_2,2c_2-1,c_4,c_2+c_4-1,2c_4-1)$, $c_4 \geq 3c_2-1$, $\text{grd}_{C}(\ell c_3) = \ell c_3-c_5+1-\lfloor (\ell c_3-c_5)/c_2 \rfloor (c_2-1)$, and $\text{grd}_{C}(\ell c_3) \leq \ell$ for $\ell = \lceil c_5/c_3 \rceil$. 
Then $C$ is canonical and the subsystem $C'=(1,c_2,c_3,c_4,c_5)=(1,c_2,2c_2-1,c_4, c_2+c_4-1)$ is noncanonical.
\end{lemma}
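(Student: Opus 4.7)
The plan is to mirror Lemma~\ref{Lem:g_and_m-1}: first exhibit a counterexample to $C'$, then establish that $C$ is tight, and finally apply Theorem~\ref{Thm:tight} to rule out counterexamples to $C$ of size at least $c_6$. Throughout I will use that the subsystem $C_3 = (1, c_2, 2c_2-1)$ is canonical by Theorem~\ref{Thm:3coin}.

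For $C'$ noncanonical, I would show that $2c_4$ is a counterexample. The representation $c_4 + c_4$ gives $\text{opt}_{C'}(2c_4) \leq 2$, while greedy takes one $c_5$ (since $c_4 > c_2 - 1$ forces $2c_4 > c_5$), leaving the residue $c_4 - c_2 + 1$, which is at least $2c_2 > c_3$ by the hypothesis $c_4 \geq 3c_2 - 1$. Thus greedy needs at least two further coins after the $c_5$, giving $\text{grd}_{C'}(2c_4) \geq 3$. To prove $C$ is tight I use that when $v < c_6$ no $c_6$-coin can appear in either the greedy or an optimal representation, so counterexamples to $C$ and to $C'$ below $c_6$ coincide; it therefore suffices to bound the minimum counterexample $w'$ of $C'$ from below by $c_6 = 2c_4 - 1$. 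Values $w' < c_4$ are excluded by canonicity of $C_3$. For $w' \in [c_4, c_5)$ we have $w' - c_4 < c_2$, so $\text{grd}_{C'}(w') = 1 + (w' - c_4)$; a putative $C_3$-only representation uses $q = \lfloor w'/c_3 \rfloor$ coins of $c_3$ plus the greedy handling of the residue (which saves $c_2 - 1$ coins precisely when $w' \bmod c_3 \geq c_2$), and the counterexample inequality reduces to a lower bound on $q(c_2 - 1)$ that, combined with $w' < c_5$ (forcing $q \leq \ell - 1$), contradicts the hypothesis $\text{grd}_C(\ell c_3) \leq \ell$ once this is unpacked into an upper bound on $\ell$ via the explicit formula. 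The range $w' \in [c_5, c_6 - 1]$ is treated analogously: greedy takes one $c_5$ and reduces to $C_3$, and any better representation either swaps the $c_5$ for a $c_4$ (reducing to the previous case) or avoids both, again violating the hypothesis.

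Having established tightness, Theorem~\ref{Thm:tight} gives that any counterexample to $C$ must be of the form $c_i + c_j > c_6$ with $1 < i \leq j \leq 5$. Using $c_4 \geq 3c_2 - 1$, the inequalities $c_3 + c_4 \leq c_6$ and $c_3 + c_5 \leq c_6$ hold, so the only candidates are $c_4 + c_4 = c_6 + 1$, $c_4 + c_5 = c_6 + c_2$, and $c_5 + c_5 = c_6 + c_3$. In each case greedy uses one $c_6$ plus one coin of value $1$, $c_2$, or $c_3$ respectively, matching the obvious two-coin optimum, so none is a counterexample and $C$ is canonical. I expect the main obstacle to be the tightness step, where the hypothesis $\text{grd}_C(\ell c_3) \leq \ell$ must be carefully balanced against the two sub-cases for $w' \bmod c_3$ in order to rule out every would-be counterexample $w'$ in $[c_4, c_6 - 1]$.
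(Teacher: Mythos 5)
Your outline follows the same overall strategy as the paper's Appendix~A proof: show $2c_4$ is a counterexample to $C'$, show $C$ is tight, and then invoke Theorem~\ref{Thm:tight} together with $c_4\geq 3c_2-1$ to reduce to checking $c_4+c_4$, $c_4+c_5$, $c_5+c_5$, which you correctly identify as $c_6+1$, $c_6+c_2$, $c_6+c_3$ respectively. Those first and last steps are sound and match the paper.

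However, the tightness step --- which is the bulk of the paper's proof --- is only gestured at. For $v\in(c_4,c_5)$ you correctly note that a putative better representation must be $C_3$-only and that the question reduces to comparing $\text{grd}_{C_3}(v)$ with $1+(v-c_4)$, but the assertion that this ``contradicts the hypothesis $\text{grd}_C(\ell c_3)\leq\ell$ once unpacked'' is not demonstrated; the paper parametrizes $c_4=2c_2+sc_3+t$ and $v=c_4+u$ and works through three residue sub-cases to derive the inequality from the hypothesis, and it is not obvious that the hypothesis implies the bound without actually doing that bookkeeping. The claim that the range $v\in(c_5,c_6)$ is ``treated analogously'' is where the gap is most serious: replacing the single $c_5$ by a $c_4$ does \emph{not} reduce to the previous case (now $v-c_4\in[c_2-1,c_4-1)$, a different regime), and one must separately prove both $\text{grd}_{C_3}(v-c_5)+1\leq\text{grd}_{C_3}(v-c_4)+1$ (which splits into the three sub-cases $v-c_5<c_2$, $c_2\leq v-c_5<c_3$, $c_3\leq v-c_5<c_4-c_2$) and $\text{grd}_{C_3}(v-c_5)+1\leq\text{grd}_{C_3}(v)$. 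For the latter the paper introduces $D(v)=\text{grd}_{C_3}(v)-(\text{grd}_{C_3}(v-c_5)+1)$, exploits periodicity $D(v)=D(v+c_3)$ to restrict to a window of width $c_3$, and then proves five sub-claims (a)--(e) to reduce everything to checking $D(c_5+c_2-1)\geq 0$, which is where the hypothesis $\text{grd}_C(\ell c_3)\leq\ell$ finally enters. None of this is visible in your sketch. In short: the architecture is right, but the central estimates that make the architecture work are left unproved, and they are not trivial.
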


\begin{lemma} \label{Lem:g_and_m-3}
Assume $C = (1,c_2,c_3,c_4,c_5,c_6) = (1,c_2,2c_2,c_4,c_2+c_4,2c_4)$, $c_4 \geq 3c_2-1$, $c_4 \neq 3c_2$, $\text{grd}_{C}(\ell c_3) = \ell c_3-c_5+1-\lfloor (\ell c_3-c_5)/c_2 \rfloor (c_2-1)$, and $\text{grd}_{C}(\ell c_3)\leq \ell$ for $\ell = \lceil c_5/c_3 \rceil$.
Then $C$ is canonical and the subsystem $C' = (1,c_2,c_3,c_4,c_5) = (1,c_2,2c_2,c_4, c_2+c_4)$ is noncanonical.
\end{lemma}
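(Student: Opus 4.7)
The plan divides into two halves: establishing that $C'$ is noncanonical, and establishing that $C$ is canonical. For the first half, I would exhibit $2c_4$ as a counterexample to $C'$: the optimal representation uses two $c_4$ coins, so $\text{opt}_{C'}(2c_4) = 2$, while greedy takes one $c_5 = c_2 + c_4$, leaving residual $c_4 - c_2$ to be paid in the canonical subsystem $(1, c_2, 2c_2)$. The hypotheses $c_4 \geq 3c_2 - 1$ and $c_4 \neq 3c_2$ preclude $c_4 - c_2 \in \{1, c_2, 2c_2\}$ (which would force $c_4 \in \{c_2 + 1, 2c_2, 3c_2\}$, all impossible since $c_4 > c_3 = 2c_2$ and $c_4 \neq 3c_2$), so paying $c_4 - c_2$ needs at least two coins. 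Thus $\text{grd}_{C'}(2c_4) \geq 3 > 2$.

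For the canonicity of $C$, I would follow the scheme of Lemma~\ref{Lem:g_and_m-1}: first show $C$ is tight, then apply Theorem~\ref{Thm:tight} to exclude counterexamples at or above $c_6$. Tightness of $C$ reduces to the absence of counterexamples to $C'$ below $c_6 = 2c_4$ (since $v < c_6$ forbids the coin $c_6$ in both greedy and optimal representations). I would split $v \in [1, 2c_4)$ into three ranges: for $v < c_4$ the canonicity of $(1, c_2, 2c_2)$ applies directly; for $c_4 \leq v < c_5$ greedy yields $1 + (v - c_4)$ coins, which is optimal by a swap argument (a second $c_4$ overshoots, $c_5$ is too large, and replacing $c_4$ by coins from $(1, c_2, 2c_2)$ increases the count since $c_4 > 2c_2$); and for $c_5 \leq v < 2c_4$ greedy yields $1 + f(v - c_5)$ where $f$ denotes the optimum in the canonical $(1, c_2, 2c_2)$. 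The last range is the crux: one must compare greedy with alternatives that use $c_4$ in place of $c_5$ (giving $1 + f(v - c_5 + c_2)$, at least as large by the monotonicity $f(r + c_2) \geq f(r)$) and with representations avoiding $c_4$ and $c_5$ entirely. The latter is precisely controlled by the hypothesis $\text{grd}_C(\ell c_3) \leq \ell$, which certifies that the unique candidate value $\ell c_3 \in [c_5, c_5 + c_3)$ at which an all-$c_3$ representation could beat greedy is itself not a counterexample.

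Once tightness of $C$ (and hence of $C'$, since any counterexample to $C'$ below $c_5$ would also be one to $C$) is established, I would invoke Theorem~\ref{Thm:tight} with $C_3 = (1, c_2, 2c_2)$ canonical, $C_{n-1} = C'$ noncanonical, and $C_n = C$ tight: were $C$ noncanonical, some $c_i + c_j > c_6$ with $1 < i \leq j \leq 5$ would be a counterexample. Using $c_5 = c_2 + c_4$ and $c_6 = 2c_4$, the only candidate pairs exceeding $c_6$ are $c_5 + c_5 = c_3 + c_6$, $c_5 + c_4 = c_2 + c_6$, and, only when $c_4 = 3c_2 - 1$, also $c_5 + c_3 = c_6 + 1$. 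In each case the right-hand expression provides an explicit two-coin representation matching greedy, so none is a counterexample, and $C$ must be canonical. The hardest step in this plan is the third subrange in the tightness argument, where greedy's choice of $c_5$ must be shown optimal against both a $c_4$-based representation and any lower-coin representation; the hypothesis on $\text{grd}_C(\ell c_3)$ is tailored to foreclose precisely the unique subtle case.
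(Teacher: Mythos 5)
Your argument for the noncanonicity of $C'$ is fine, and your final application of Theorem~\ref{Thm:tight} is both correct and slightly more thorough than the paper's: you notice that when $c_4 = 3c_2 - 1$ the pair $c_3 + c_5 = c_6 + 1$ also exceeds $c_6$ and must be dispatched, whereas the paper's proof lists only $(4,5)$ and $(5,5)$ as candidates (the conclusion is unaffected since $\text{opt}_C(c_6+1)=\text{grd}_C(c_6+1)=2$). Your third-range sketch also has the right ingredients: monotonicity of $f$ under adding $c_2$, period-$c_3$ structure, and the special role of $\ell c_3 \in [c_5, c_5 + c_3)$, even if it would need to be fleshed out along the lines of the paper's explicit case analysis (Appendix~B) or the extremum argument of Appendix~A.

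However, there is a genuine gap in the range $c_4 < v < c_5$. You assert that greedy's $1 + (v - c_4)$ coins is optimal "by a swap argument" because "replacing $c_4$ by coins from $(1,c_2,2c_2)$ increases the count since $c_4 > 2c_2$." This does not follow: a representation of $v$ that avoids $c_4$ is not obtained by decomposing the $c_4$ coin in place, and it can in fact be strictly shorter. Concretely, take $c_2 = 5$, $c_3 = 10$, $c_4 = 16$ (so $c_5 = 21$, $c_6 = 32$) and $v = 20$: greedy gives $c_4 + 4\cdot 1$, i.e.\ $5$ coins, while $2c_3$ gives $2$ coins, so $v=20$ is a counterexample lying in $(c_4, c_5)$. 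Your swap argument would incorrectly certify greedy as optimal here. In that example the hypothesis $\text{grd}_C(\ell c_3) \leq \ell$ fails ($\ell = 3$, $\ell c_3 = 30$, $\text{grd}_C(30) = 6$), so the lemma's conclusion is not contradicted — but precisely for that reason, any correct argument in the range $c_4 < v < c_5$ must use the hypothesis, whereas yours does not. Your closing remark that "the hypothesis on $\text{grd}_C(\ell c_3)$ is tailored to foreclose precisely the unique subtle case" in the third range is therefore also off: the hypothesis is load-bearing in the middle range as well. The paper handles this by invoking Lemma~\ref{Lem:coin1} to restrict the candidate minimum counterexample to the form $v = pc_2$, then computing $\text{grd}_{C_3}(v) - \bigl(\text{grd}_{C_3}(v-c_4)+1\bigr) = \ell - \text{grd}_C(\ell c_3)$, from which nonnegativity follows by the hypothesis. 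This step cannot be replaced by a local swap argument and must be repaired before your proof can stand.
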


\noindent
The proofs of Lemmas~\ref{Lem:g_and_m-2} and~\ref{Lem:g_and_m-3} are rather long and so can be found in Appendices A and~B, respectively.

We conclude this section with the following theorems, the latter of which coincides with Proposition~\ref{Pro:6coins}.

\begin{thm} \label{Thm:sufficiency}
Let $C=(1,c_2,c_3,c_4,c_5,c_6)$ be a system that satisfies (a), (b), or (c) for $\ell = \lceil c_5/c_3 \rceil$.
Then $C$ is canonical and the subsystem $C'=(1,c_2,c_3,c_4,c_5)$ is noncanonical.
\begin{itemize}
\item[(a)] $C=(1,2,3,c_4,c_4+1,2c_4)$ and $c_4 > 4;$
\item[(b)] $C=(1,c_2,2c_2-1,c_4,c_2+c_4-1,2c_4-1)$, $c_4 \geq 3c_2-1$, $\text{grd}_C(\ell c_3) \leq \ell$, and $\text{grd}_C(\ell c_3) = \ell c_3-c_5+1-\lfloor (\ell c_3-c_5)/c_2 \rfloor (c_2-1);$
\item[(c)] $C=(1,c_2,2c_2,c_4,c_2+c_4,2c_4)$, $c_4 \geq 3c_2-1$, $c_4 \neq 3c_2$, $\text{grd}_C(\ell c_3) \leq \ell$, and $\text{grd}_C(\ell c_3) = \ell c_3-c_5+1-\lfloor (\ell c_3-c_5)/c_2 \rfloor (c_2-1)$.
\end{itemize}
\end{thm}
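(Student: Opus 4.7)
The plan is to prove Theorem~\ref{Thm:sufficiency} by direct case analysis, since the three hypotheses (a), (b), and (c) correspond one-to-one with the three preceding Lemmas~\ref{Lem:g_and_m-1}, \ref{Lem:g_and_m-2}, and~\ref{Lem:g_and_m-3}, and the conclusions of those lemmas are exactly the two assertions (canonicity of $C$ and noncanonicity of $C'$) made here.

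First I would note that case~(a), namely $C=(1,2,3,c_4,c_4+1,2c_4)$ with $c_4>4$, is literally the hypothesis of Lemma~\ref{Lem:g_and_m-1}, whose conclusion gives both that $C$ is canonical and that $C'$ is noncanonical. Next, case~(b) matches Lemma~\ref{Lem:g_and_m-2}: the shape of $C$, the inequality $c_4\ge 3c_2-1$, the identity for $\text{grd}_C(\ell c_3)$, and the bound $\text{grd}_C(\ell c_3)\le\ell$ are exactly the hypotheses of that lemma. Finally, case~(c) matches Lemma~\ref{Lem:g_and_m-3} in the same fashion, with the additional condition $c_4\neq 3c_2$ appearing in both. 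In each case, a one-line invocation of the matching lemma delivers both conclusions simultaneously.

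The main obstacle is not in Theorem~\ref{Thm:sufficiency} itself but in the three supporting lemmas. Lemma~\ref{Lem:g_and_m-1} is already handled in the body of the paper: noncanonicity of $C'$ follows because $2c_4$ is a counterexample to $C'$, and canonicity of $C$ is obtained by first establishing that $C$ is tight (via a case split on the range of the payment $v$ relative to $c_4$, $c_5$, $c_6$) and then applying Theorem~\ref{Thm:tight} to exclude the handful of surviving candidates $c_i+c_j>c_6$. The substantive work sits in Lemmas~\ref{Lem:g_and_m-2} and~\ref{Lem:g_and_m-3}, which require a careful evaluation of $\text{grd}_C(v)$ and $\text{opt}_C(v)$ across many subranges of $v\le c_6$ and a reduction of the hypothesis $\text{grd}_C(\ell c_3)\le\ell$ to the absence of a counterexample at the only value that the simpler range arguments cannot rule out; that is precisely why those proofs are deferred to the appendices.

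Combining Theorem~\ref{Thm:sufficiency} with Theorem~\ref{Thm:need} (which gives the converse implication) and with part~(a) of Proposition~\ref{Pro:6coins} (which is immediate from the one-point theorem, Theorem~\ref{Thm:OPT}) will then yield the full biconditional characterization of Proposition~\ref{Pro:6coins}, as formalized in the concluding Theorem~\ref{Crl:iff2}.
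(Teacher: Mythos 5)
Your proof is correct and takes exactly the same approach as the paper: Theorem~\ref{Thm:sufficiency} is proved by observing that its three cases (a), (b), (c) are precisely the hypotheses of Lemmas~\ref{Lem:g_and_m-1}, \ref{Lem:g_and_m-2}, and \ref{Lem:g_and_m-3}, whose shared conclusion is exactly the desired statement. Your surrounding commentary about the supporting lemmas and the overall structure of Proposition~\ref{Pro:6coins} accurately reflects the paper's organization.
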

\begin{proof}
The proposition holds from Lemmas~\ref{Lem:g_and_m-1}, \ref{Lem:g_and_m-2}, and \ref{Lem:g_and_m-3}.
\end{proof}

\begin{thm} \label{Crl:iff2}
A system $C= (1,c_2,c_3,c_4,c_5,c_6)$ is canonical if and only if (a) or (b) holds:
\begin{enumerate}
\item[(a)] the subsystem $(1,c_2,c_3,c_4,c_5)$ is canonical and $\text{grd}_{C}(mc_5) \leq m$ holds for $m = \lceil c_{6} / c_5 \rceil;$
\item[(b)] the subsystem $(1,c_2,c_3,c_4,c_5)$ is noncanonical and $C$ satisfies (i), (ii), or (iii) for $\ell = \lceil c_5/c_3 \rceil$.
In addition, $\text{grd}_C(\ell c_3) = \ell c_3-c_5+1-\lfloor (\ell c_3-c_5)/c_2 \rfloor (c_2-1)$.
\begin{itemize}
\item[(i)] $C=(1,2,3,c_4,c_4+1,2c_4)$ and $c_4 > 4;$
\item[(ii)] $C=(1,c_2,2c_2-1,c_4,c_2+c_4-1,2c_4-1)$, $c_4 \geq 3c_2-1$, and $\text{grd}_C(\ell c_3) \leq \ell;$
\item[(iii)] $C=(1,c_2,2c_2,c_4,c_2+c_4,2c_4)$, $c_4 \geq 3c_2-1$, $c_4 \neq 3c_2$, and $\text{grd}_C(\ell c_3) \leq \ell$.
\end{itemize}
\end{enumerate}
\end{thm}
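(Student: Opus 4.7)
The plan is to split into the two cases appearing in the statement according to whether the subsystem $C' = (1, c_2, c_3, c_4, c_5)$ is canonical or noncanonical, and to show that in each case the stated conditions are equivalent to $C$ being canonical by simply assembling results that have already been proved.

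First, suppose $C'$ is canonical. Then I would invoke the one-point theorem (Theorem~\ref{Thm:OPT}) with $n=6$: under the hypothesis that the five-coin subsystem is canonical, $C$ is canonical if and only if $\text{grd}_{C}(m c_5) \leq m$ for $m = \lceil c_6 / c_5 \rceil$. This is exactly condition~(a), so this case of the equivalence is immediate. Note that in this case condition~(b) is incompatible with the hypothesis, so there is no overlap to worry about.

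Next, suppose $C'$ is noncanonical. For the forward direction (necessity), if $C$ is canonical then Theorem~\ref{Thm:need} already tells us that $C$ must take one of the three specific forms (a), (b), (c) described there, which is precisely the list (i), (ii), (iii) in condition~(b) of the statement. The auxiliary equality $\text{grd}_C(\ell c_3) = \ell c_3 - c_5 + 1 - \lfloor (\ell c_3 - c_5)/c_2\rfloor (c_2 - 1)$ for $\ell = \lceil c_5 / c_3 \rceil$, included in condition~(b), is also guaranteed by Theorem~\ref{Thm:need} (via Lemmas~\ref{Lem:2c5-c3_2} and~\ref{Lem:2c5-c3_3}; for form (i), the value of $\text{grd}_C(\ell c_3)$ is likewise computable directly from $C = (1,2,3,c_4,c_4+1,2c_4)$). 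For the backward direction (sufficiency), Theorem~\ref{Thm:sufficiency} states exactly that a system of any of the three forms (i), (ii), (iii) is canonical (and moreover its leading five-coin subsystem is noncanonical, so we remain consistently in case~(b)).

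Because every ingredient is already proved, the only nontrivial step is the bookkeeping: verifying that the hypotheses listed in the statement's (b) match those of Theorems~\ref{Thm:need} and~\ref{Thm:sufficiency} line by line, and that the two cases (a) and (b) of the statement are mutually exclusive and jointly exhaustive given the dichotomy "$C'$ canonical vs.\ noncanonical." I do not expect a real obstacle here; the main thing to watch is that in form~(i) I also have to note that the formula $\text{grd}_C(\ell c_3) = \ell c_3 - c_5 + 1 - \lfloor(\ell c_3 - c_5)/c_2\rfloor(c_2-1)$ is consistent, which follows by direct inspection of $C = (1,2,3,c_4,c_4+1,2c_4)$ since $\ell = \lceil (c_4+1)/3\rceil$ and the greedy algorithm on $\ell c_3 < c_5$ only uses the coins $1, 2, 3$.

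\begin{proof}
If the subsystem $C' = (1, c_2, c_3, c_4, c_5)$ is canonical, then by Theorem~\ref{Thm:OPT} applied with $n = 6$, $C$ is canonical if and only if $\text{grd}_{C}(m c_5) \leq m$ for $m = \lceil c_6 / c_5 \rceil$, which is condition~(a). If instead $C'$ is noncanonical, the forward implication is Theorem~\ref{Thm:need} and the reverse implication is Theorem~\ref{Thm:sufficiency}; the auxiliary equality on $\text{grd}_C(\ell c_3)$ in condition~(b) is included in both theorems, and for form~(i) it also holds because on $\ell c_3 < c_5$ the greedy algorithm in $C = (1,2,3,c_4,c_4+1,2c_4)$ only uses the first three coins.
\end{proof}
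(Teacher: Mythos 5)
Your overall structure is exactly the paper's proof: split on whether $C'=(1,c_2,c_3,c_4,c_5)$ is canonical; apply Theorem~\ref{Thm:OPT} when it is; cite Theorem~\ref{Thm:need} for necessity and Theorem~\ref{Thm:sufficiency} for sufficiency when it is not; and verify the auxiliary formula for $\text{grd}_C(\ell c_3)$ in form~(i) by direct calculation (it is delivered by Lemmas~\ref{Lem:2c5-c3_2} and~\ref{Lem:2c5-c3_3} for forms (ii) and (iii)). That is precisely how the paper dispatches the theorem.

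There is one concrete error in your verification for form~(i): you assert $\ell c_3 < c_5$ and conclude that the greedy uses only the coins $1,2,3$. Both claims are false. By definition $\ell = \lceil c_5/c_3\rceil$, so $\ell c_3 \geq c_5$ always; the greedy representation of $\ell c_3$ in $C=(1,2,3,c_4,c_4+1,2c_4)$ therefore uses $c_5$ exactly once (note $\ell c_3 \leq c_4+3 < 2c_4 = c_6$ and $\ell c_3 < 2c_5$). If the greedy really used only coins $1,2,3$ you would get $\text{grd}_C(\ell c_3)=\ell$, which does not match the formula (for example, when $c_3 \mid c_5$ the formula gives $1$, not $\ell$). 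The correct calculation is: after paying one $c_5$, the residue $\ell c_3 - c_5 \in \{0,1,2\}$ is less than $c_3=3$, so it is paid with coins $1$ and $2$ only, giving $\text{grd}_C(\ell c_3) = 1 + \bigl(\ell c_3 - c_5 - \lfloor(\ell c_3 - c_5)/2\rfloor\bigr)$, which is exactly $\ell c_3 - c_5 + 1 - \lfloor(\ell c_3 - c_5)/c_2\rfloor(c_2-1)$ with $c_2=2$. With that one fix your proof is complete and matches the paper's.
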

\begin{proof}
Part (a) comes from Theorem~\ref{Thm:OPT}, and part (b) follows from Theorems \ref{Thm:need} and~\ref{Thm:sufficiency}.
Note that the equation $\text{grd}_C(\ell c_3) = \ell c_3-c_5+1-\lfloor (\ell c_3-c_5)/c_2 \rfloor (c_2-1)$ also holds for $C=(1,2,3,c_4,c_4+1,2c_4)$ and $c_4 > 4$, which can be confirmed via simple calculation.
\end{proof}

\section{Generalization and conclusion} \label{Section4}

This section considers a generalization of the characterization of canonical systems and concludes this study.

The following corollaries for systems with five and six types of coins stem from Theorems \ref{Thm:5coin} and~\ref{Crl:iff2}, respectively.

\begin{corollary} \label{5coins+}
A system $C = (1, c_2, c_3, c_4, 2c_4-c_2)$ is canonical and the subsystem $(1, c_2, c_3, c_4)$ is noncanonical if and only if $C = (1, 2, c_3, c_3 + 1, 2c_3)$ and $c_3 > 3$.
\end{corollary}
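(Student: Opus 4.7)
The plan is to derive this corollary directly from Theorem~\ref{Thm:5coin}, since the hypothesis specifies a five-coin system that is canonical while the four-coin subsystem is noncanonical --- exactly the exceptional case that Theorem~\ref{Thm:5coin} isolates in part~(b).

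For the forward direction, I would take $C = (1, c_2, c_3, c_4, 2c_4-c_2)$ canonical and $(1, c_2, c_3, c_4)$ noncanonical, and then apply Theorem~\ref{Thm:5coin}. Part~(a) of that theorem requires the four-coin subsystem to be canonical, which is ruled out by hypothesis. Hence part~(b) must apply, so $C = (1, 2, c_3', c_3'+1, 2c_3')$ with $c_3' > 3$ for some $c_3'$. Matching the two descriptions of $C$ coordinate-wise then forces $c_2 = 2$, $c_3 = c_3'$, $c_4 = c_3 + 1$, and the top coin $2c_4 - c_2 = 2(c_3+1) - 2 = 2c_3$ is automatically consistent. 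This yields $C = (1, 2, c_3, c_3+1, 2c_3)$ with $c_3 > 3$.

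For the backward direction, I would start from $C = (1, 2, c_3, c_3+1, 2c_3)$ with $c_3 > 3$. By Theorem~\ref{Thm:5coin}(b), $C$ is canonical, and the remark immediately following Theorem~\ref{Thm:5coin} records that the four-coin subsystem $(1, 2, c_3, c_3+1)$ is noncanonical. Finally, the identity $2c_4 - c_2 = 2(c_3+1) - 2 = 2c_3$ shows that $C$ fits the form $(1, c_2, c_3, c_4, 2c_4 - c_2)$ required by the corollary.

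There is essentially no obstacle here: the statement is a bookkeeping consequence of Theorem~\ref{Thm:5coin}. The only point worth writing out carefully is the algebraic verification that the family $C = (1, 2, c_3, c_3+1, 2c_3)$ coincides with the family $C = (1, c_2, c_3, c_4, 2c_4 - c_2)$ under the specialization $c_2 = 2$, $c_4 = c_3 + 1$, so that the two parameterizations agree on the top coin.
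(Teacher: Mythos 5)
Your proof is correct and follows exactly the route the paper intends: the corollary is stated to ``stem from Theorem~\ref{Thm:5coin},'' and your argument consists of applying Theorem~\ref{Thm:5coin}(b) in both directions together with the remark after that theorem and the algebraic identity $2c_4 - c_2 = 2(c_3+1) - 2 = 2c_3$.
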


\begin{corollary} \label{6coins+}
A system $C = (1, c_2, c_3, c_4, c_5, 2c_5-c_2)$ is canonical and the subsystem $(1, c_2, c_3, c_4, c_5)$ is noncanonical if and only if $C = (1, 2, 3, c_4, c_4+1, 2c_4)$ and $c_4 > 4$.
\end{corollary}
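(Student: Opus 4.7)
The plan is to deduce this corollary as a direct specialization of Theorem~\ref{Crl:iff2}. Under the assumption that $C$ is canonical while the subsystem $(1, c_2, c_3, c_4, c_5)$ is noncanonical, part~(b) of Theorem~\ref{Crl:iff2} must apply, so $C$ falls into exactly one of the three families (i), (ii), or (iii) listed there. The whole task reduces to checking which of these three families is compatible with the extra shape constraint $c_6 = 2 c_5 - c_2$.

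First, I would dispose of families (ii) and (iii) by a short algebraic check. For family~(ii), $c_5 = c_2 + c_4 - 1$ and $c_6 = 2 c_4 - 1$, so $2 c_5 - c_2 = c_2 + 2 c_4 - 2$, and equating this with $c_6$ forces $c_2 = 1$, contradicting the standing convention $c_2 \geq 2$. For family~(iii), $c_5 = c_2 + c_4$ and $c_6 = 2 c_4$, so $2 c_5 - c_2 = c_2 + 2 c_4$, and the equation forces $c_2 = 0$, again impossible. Family~(i), namely $C = (1, 2, 3, c_4, c_4 + 1, 2 c_4)$ with $c_4 > 4$, is plainly compatible, since $2 c_5 - c_2 = 2(c_4 + 1) - 2 = 2 c_4 = c_6$. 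This establishes the forward direction.

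For the converse, given $C = (1, 2, 3, c_4, c_4 + 1, 2 c_4)$ with $c_4 > 4$, the equality $c_6 = 2 c_5 - c_2$ is immediate from the same arithmetic, so the shape constraint of the corollary is satisfied. Moreover, $C$ has the form required by Theorem~\ref{Crl:iff2}(b)(i), and the accompanying equation on $\text{grd}_C(\ell c_3)$ is, as noted in the proof of that theorem, automatically satisfied in family~(i). Hence Theorem~\ref{Crl:iff2} yields that $C$ is canonical and that $(1, 2, 3, c_4, c_4 + 1)$ is noncanonical, completing the proof. There is no genuine obstacle here: the argument is a short case analysis riding entirely on Theorem~\ref{Crl:iff2}, and the only subtlety to be careful about is confirming that $c_2 = 1$ and $c_2 = 0$ are honest contradictions rather than degenerate reductions into family~(i), which follows cleanly from the convention $1 = c_1 < c_2$.
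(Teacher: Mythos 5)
Your proposal is correct and takes the same route the paper intends: the paper simply asserts that Corollary~\ref{6coins+} ``stems from'' Theorem~\ref{Crl:iff2} without elaboration, and you have supplied the short case analysis that rules out families (ii) and (iii) via the forced $c_2 = 1$ and $c_2 = 0$ contradictions, leaving exactly family (i).
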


\noindent
Based on the similarity between these corollaries, we arrive at the following theorem that extends them for a general value~$n$.
The proof proceeds in a similar manner to those of Lemmas~\ref{Lem:2c5-c2} and~\ref{Lem:g_and_m-1}.

\begin{thm} \label{Thm:general}
For $n\geq 5$, a system with $n$ types of coins $C = (1, c_2, \ldots, c_{n-1}, 2c_{n-1}-c_2)$ is canonical and the subsystem $C'=(1, c_2, \ldots, c_{n-1})$ is noncanonical if and only if $C = (1, 2, \ldots, n-3, c_{n-2}, c_{n-2}+1, 2c_{n-2})$ and $c_{n-2} > n-2$.
\end{thm}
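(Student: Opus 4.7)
The plan is to prove both directions of Theorem~\ref{Thm:general} by generalizing the arguments of Lemmas~\ref{Lem:2c5-c2} and~\ref{Lem:g_and_m-1} from small $n$ to arbitrary $n \geq 5$. For the ``only if'' direction, let $w'$ denote the minimum counterexample to $C'$. Lemma~\ref{Lem:exCor1} and Theorem~\ref{Thm:w_range} together give $c_n \leq w' < c_{n-2} + c_{n-1}$. Examining the optimal representation of $c_{n-2} + c_{n-1}$ in $C$ (which uses the two coins $c_{n-2}$ and $c_{n-1}$) against the greedy representation (which starts with a single $c_n$) forces the remainder $c_2 + c_{n-2} - c_{n-1}$ to be a coin value; being strictly positive and less than $c_2$, it must equal $1$, hence $c_{n-1} = c_{n-2} + c_2 - 1$ and $c_n = 2c_{n-2} + c_2 - 2$.

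I would then rule out $c_2 > 2$ by the same case split as in Lemma~\ref{Lem:2c5-c2}: if $(1, c_2, \ldots, c_{n-2})$ is canonical, Theorem~\ref{Thm:OPT} with $m = \lceil c_{n-1}/c_{n-2} \rceil = 2$ produces a counterexample to $C'$ at $2c_{n-2}$; if it is noncanonical, its tightness (from Lemma~\ref{Lem:prefix}) combined with Theorem~\ref{Thm:tight} yields a counterexample $c_i + c_j \leq 2c_{n-2}$ to $C'$. Either way $w' \leq 2c_{n-2} < c_n$, contradicting $c_n \leq w'$. Hence $c_2 = 2$, $c_{n-1} = c_{n-2} + 1$, and $c_n = 2c_{n-2}$. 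Next I would prove by strong induction on $i$ that $c_i = i$ for $i = 2, \ldots, n-3$: assuming $c_j = j$ for $j < i$ and $c_i > i$, the value $c_i + c_{n-2}$ satisfies $\text{opt}_{C'}(c_i + c_{n-2}) = 2$ while greedy takes $c_{n-1}$ and then needs $\lceil (c_i - 1)/(i-1) \rceil \geq 2$ coins in the prefix $(1, 2, \ldots, i-1)$, so $c_i + c_{n-2}$ is a counterexample to $C'$, forcing $w' \leq c_i + c_{n-2} < 2c_{n-2} = c_n$ and again contradicting $c_n \leq w'$. Finally $c_{n-2} > n - 2$ because $c_{n-2} = n - 2$ would make $C' = (1, 2, \ldots, n-1)$, whose consecutive-integer shape is canonical.

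For the ``if'' direction I would follow Lemma~\ref{Lem:g_and_m-1}. Noncanonicality of $C'$ is witnessed by $2c_{n-2}$, since greedy takes $c_{n-1}$ and then needs $\lceil (c_{n-2} - 1)/(n-3) \rceil \geq 2$ coins because $c_{n-2} > n - 2$. To show $C$ is tight, split $v \leq c_n$ into $v < c_{n-2}$ (reduces to the canonical subsystem $(1, 2, \ldots, n-3)$), $v \in \{c_{n-2}, c_{n-1}\}$ (trivial), and $c_{n-1} < v < c_n$. In the last band only three representations are feasible---using one $c_{n-1}$, using one $c_{n-2}$, or using only small coins---and greedy's count $1 + \lceil (v - c_{n-1})/(n-3) \rceil$ is at most each of the other two. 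The canonicality of $C$ then follows by Theorem~\ref{Thm:tight}: the only sums $c_i + c_j > c_n$ are $c_{n-2} + c_{n-1} = c_n + 1$ and $c_{n-1} + c_{n-1} = c_n + 2$, both of which admit greedy representations of length $2$ and are therefore not counterexamples.

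The main obstacle is the last comparison in the tightness check, namely the inequality $1 + \lceil (v - c_{n-1})/(n-3) \rceil \leq \lceil v/(n-3) \rceil$ for $c_{n-1} < v < c_n$. This relies on the bound $c_{n-1} \geq n > n - 3$, which ``absorbs'' one full quotient of $n-3$; handling it requires a short case analysis on the residue of $c_{n-1}$ modulo $n-3$ rather than the one-line estimate used in the six-coin case. Once this is in place, the verification of the two-coin sums and the inductive step $c_i = i$ is routine, and the overall structure closely parallels Lemma~\ref{Lem:g_and_m-1}.
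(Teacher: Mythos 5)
Your proposal reproduces the paper's proof in both directions: deriving $c_{n-1}=c_{n-2}+c_2-1$ from the two-coin representation of $c_{n-2}+c_{n-1}$, excluding $c_2>2$ by the case split on whether $(1,c_2,\ldots,c_{n-2})$ is canonical, propagating $c_i=i$ via the counterexample $c_i+c_{n-2}$ (which the paper handles for $c_3$ and then declares ``by the same argument''---your explicit induction is the right way to read that), and in the sufficiency direction establishing tightness band by band before invoking Theorem~\ref{Thm:tight} on the only two sums exceeding $c_n$. The one place you deviate is an unnecessary worry: the inequality $\lceil (v-c_{n-1})/(n-3)\rceil + 1 \le \lceil v/(n-3)\rceil$ needs no case analysis on residues, since $\lceil x\rceil+1=\lceil x+1\rceil$ gives $\lceil (v-c_{n-1})/(n-3)\rceil+1=\lceil(v-c_{n-1}+(n-3))/(n-3)\rceil$, and $v-c_{n-1}+(n-3)\le v$ holds as soon as $c_{n-1}\ge n-3$, which is automatic from monotonicity of the $c_i$; the paper's one-line chain through $\lceil(v-c_{n-2})/(n-3)\rceil+1$ therefore generalizes verbatim.
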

\begin{proof}
\emph{Necessity}.
Let $w'$ be the minimum counterexample to $C'$.
From Lemma~\ref{Lem:exCor1} and Theorem~\ref{Thm:w_range}, $c_n \leq w' < c_{n-2}+c_{n-1}$.
Thus $\text{grd}_C(c_{n-2}+c_{n-1})>1$, and clearly $\text{opt}_C(c_{n-2}+c_{n-1}) \leq 2$.
Since $C$ is canonical, $\text{grd}_C(c_{n-2}+c_{n-1})= \text{opt}_C(c_{n-2}+c_{n-1})= 2$.
As $c_n < c_{n-2}+c_{n-1}$, we have $\text{grd}_C^{c_n}(c_{n-2}+c_{n-1}) \geq 1$.
Thus there exists $i \in \{1,2,\ldots,n\}$ such that $c_{n-2}+c_{n-1} - c_n = c_{n-2}-c_{n-1}+c_2 = c_i$.
From the last equality, we have $c_2 - c_i = c_{n-1}-c_{n-2}$.
If $i \geq 2$, the equation does not hold and therefore $i=1$.
Hence, we have $c_{n-1} = c_{n-2} + c_2 -1$ and $C=(1,c_2,\ldots,c_{n-2},c_{n-2}+c_2-1,2c_{n-2}+c_2-2)$.

Assume $c_2>2$.
\begin{itemize}
\item[\textbullet] When the last three symbols of the {\tt +/-} class of $C$ are {\tt +-+}. \\
From Theorem~\ref{Thm:OPT}, if $C'$ is noncanonical, $\text{grd}_{C'}(2c_{n-2}) > \text{opt}_{C'}(2c_{n-2})$ holds.
Thus $2c_{n-2}$ is a counterexample to~$C'$, which leads to $w' \leq 2c_{n-2}$.
\item[\textbullet] When the last three symbols of the {\tt +/-} class of $C$ are {\tt --+}. \\
Since $C$ is canonical, the systems $(1,c_2,c_3)$, $(1,c_2,\ldots,c_{n-2})$, and $C'=(1,c_2,\ldots,c_{n-1})$ are tight from Lemma~\ref{Lem:prefix}.
From Theorem~\ref{Thm:pre3}, the system $(1,c_2,c_3)$ is canonical, and from the assumption on the {\tt +/-} class, $(1,c_2,\ldots,c_{n-2})$ and $C'$ are noncanonical.
Thus, from Theorem~\ref{Thm:tight}, there exists a counterexample $c_i + c_j$ to $C'$ such that $1 < i \leq j \leq n-2$.
Therefore we have $w' \leq 2c_{n-2}$.
\end{itemize}
\noindent
As above, $w' \leq 2c_4 < 2c_4+c_2 -2 = c_n$ and this contradicts $c_n \leq w'$.
Hence we have $c_2=2$.

Assume $c_3 > 3$.
Consider paying $c_3 + c_{n-2}$ in $C'$.
Clearly $\text{opt}_{C'}(c_3 + c_{n-2}) \leq 2$.
In addition, $\text{grd}_{C'}^{c_{n-1}}(c_3 + c_{n-2}) = 1$ because $c_3 + c_{n-2} - c_{n-1} = c_3 - 1$.
Since $c_3 + c_{n-2} - c_{n-1}= c_3 - 1 > 2$ and $c_2=2$, we have $\text{grd}_{C'}(c_3 + c_{n-2}) \geq 3 > \text{opt}_{C'}(c_3 + c_{n-2})$ and $c_3 + c_{n-2}$ is a counterexample to $C'$.
Therefore $w' \leq c_3 + c_{n-2} < 2 c_{n-2} = c_n$; however, we already have $c_n \leq w'$.
Hence $c_3 = 3$ holds.

Applying the same argument, we can induce that $c_j=j$ for $j=4,5,\ldots,n-3$.
Hence, $C = (1,2,\ldots,n-3,c_{n-2},c_{n-2}+1,2c_{n-2})$.

Assume $c_{n-2} = n-2$.
Then $C ' = (1,2,\ldots,n-3,n-2,n-1)$ and $C'$ is canonical from Theorem~\ref{Thm:OPT}.
Thus $C = (1,2,\ldots,n-3,c_{n-2},c_{n-2}+1,2c_{n-2})$ and $c_{n-2}>n-2$. \\

\noindent
\emph{Sufficiency}.
The value $2 c_{n-2}$ is a counterexample to $C'$ because $\text{opt}_{C'}(2c_{n-2}) = 2$ and $\text{grd}_{C'}(2c_{n-2}) > 2$, which follows from $2c_{n-2}-(c_{n-2}+1) = c_{n-2} - 1 > n-3$.
Thus $C'$ is noncanonical.

Set $c_{n-1} := c_{n-2}+1$ and $c_n := 2c_{n-2}$.
We show that $C$ is tight; that is, no counterexample to $C$ exists that is less than or equal to $c_n$.
Denote the subsystem $(1,c_2,\ldots,c_{n-3})$ of $C$ by $C_{n-3}$.
Since $C_{n-3} =(1,2,\ldots,n-3)$, $C_{n-3}$ is canonical.

Consider paying $v$ in $C$ and analyze $\text{grd}_C(v)$.
When $v < c_{n-2}$, the equality $\text{grd}_C(v)= \text{grd}_{C_{n-3}}(v)$ holds.
In addition, $\text{grd}_{C_{n-3}}(v)= \text{opt}_{C_{n-3}}(v)$ because $C_{n-3}$ is canonical.
Hence, if $v < c_{n-3}$, $\text{grd}_C(v)= \text{opt}_{C}(v)$ holds and $v$ is not a counterexample to $C$.

Suppose $c_{n-1} < v < c_n$.
Since $v- c_{n-2} < c_{n-2}$ and $v-c_{n-1} < c_{n-2}-1$, $\text{opt}_C(v)$ is equal to $\text{grd}_{C_{n-3}}(v-c_{n-1})+1$, $\text{grd}_{C_{n-3}}(v-c_{n-2})+1$, or $\text{grd}_{C_{n-3}}(v)$.
As $\text{grd}_{C_{n-3}}(v) = \lceil v/(n-3) \rceil$, $\text{grd}_{C_{n-3}}(v)$ is monotonically nondecreasing with respect to $v$.
Since $c_{n-2} > n-2$, we have $\text{grd}_{C_{n-3}}(v-c_{n-1})+1 \leq \text{grd}_{C_{n-3}}(v-c_{n-2})+1 \leq \text{grd}_{C_{n-3}}(v)$.
Hence, for $c_{n-1} < v < c_n$, $\text{opt}_C(v) = \text{grd}_{C_{n-3}}(v-c_{n-1})+1$ holds, which means the greedy algorithm is optimal for $c_{n-1} < v < c_n$.
Therefore $v$~such that $c_{n-1}< v < c_n$ is not a counterexample to $C$.
Thus $C$ is tight and accordingly $C'$ is also tight.

From Theorem~\ref{Thm:tight}, there exists a counterexample $w$ to $C$ such that $w = c_i+c_j > c_n \ (1 < i \leq j \leq n-1)$ if $C$ is noncanonical.
Such $w$ can be only $c_{n-2}+c_{n-1} = c_n+1$ or $2c_{n-2}=c_n+c_2$, but neither of them is a counterexample to~$C$ because $\text{opt}_C(c_{n-2}+c_{n-1}) = 2 = \text{grd}_C(c_n+1)$ and $\text{opt}_C(2c_{n-2}) = 2 = \text{grd}_C(c_n+c_2)$, and thus $C$ is canonical.
\end{proof}

From an argument similar to that for Lemma~\ref{Lem:c6_kouho}, if $C=(1,c_2,\ldots,c_n)$ is canonical and $C'=(1,c_2,\ldots,c_{n-1})$ is noncanonical, then $c_n$ is equal to $2c_{n-1}-c_2$, $2c_{n-1}-c_2$, $\ldots$, or $2c_{n-1}-c_{n-2}$.
Theorem~\ref{Thm:general} covers one of them, namely, $c_n=2c_{n-1}-c_2$.

In this paper, we have provided characterization of canonical systems with six types of coins for the change-making problem.
Moreover, we have proposed a partial characterization of canonical systems with more than six types of coins.
In future work, we plan to extend the characterization and theorems obtained in this study to a general~case.

%



\appendix
\section{Proof of Lemma~\ref{Lem:g_and_m-2}} \label{appendA}

This appendix describes the proof of Lemma~\ref{Lem:g_and_m-2}, which states the following proposition.
\begin{quote}
Assume $C=(1,c_2,c_3,c_4,c_5,c_6)=(1,c_2,2c_2-1,c_4,c_2+c_4-1,2c_4-1)$, $c_4 \geq 3c_2-1$, $\text{grd}_{C}(\ell c_3) = \ell c_3-c_5+1-\lfloor (\ell c_3-c_5)/c_2 \rfloor (c_2-1)$, and $\text{grd}_{C}(\ell c_3) \leq \ell$ for $\ell = \lceil c_5/c_3 \rceil$. 
Then $C$ is canonical and the subsystem $C'=(1,c_2,c_3,c_4,c_5)=(1,c_2,2c_2-1,c_4, c_2+c_4-1)$ is noncanonical.
\end{quote}
The proof is slightly similar to but more complicated than that of Lemma~\ref{Lem:g_and_m-3}, which is given in Appendix~B.

\begin{proof}
The value $2 c_4$ is a counterexample to $C'$ because $\text{opt}_{C'}(2c_4) = 2$ and $\text{grd}_{C'}(2c_4) > 2$, which is shown as follows.
Since $c_4 \geq 3c_2-1$ holds from the assumption, we have $2c_4 -c_5 = c_4-c_2+1 \geq 2c_2 > c_3$, and thus $\text{grd}_{C'}(2c_4) = 1 + \text{grd}_{C'}(2c_4 -c_5) = 1 + \text{grd}_{C'}(c_4-c_2+1) > 2$.
Hence $2 c_4$ is a counterexample to $C'$, and $C'$ is noncanonical.

We show that $C$ is tight, that is, no counterexample to $C$ exists that is less than or equal to~$c_6$.
Consider paying $v$ and analyze $\text{grd}_C(v)$.
Let $C_3$ be the subsystem of $C$ with the leading three types of coins: $C_3 = (1,c_2,c_3)=(1,c_2,2c_2-1)$.
When $v < c_4$, $\text{grd}_C(v)= \text{grd}_{C_3}(v)$ holds because $v < c_4$ and $C_3$ is canonical.
Thus, no counterexample to $C$ exists less than or equal to $c_4$. \\

Suppose $c_4 < v < c_5$.
Then $\text{grd}_{C}(v) = \text{grd}_{C_3}(v-c_4)+1$ holds.
The value $\text{opt}_C(v)$ is equal to $\text{grd}_{C_3}(v-c_4)+1$ or $\text{grd}_{C_3}(v)$.
We prove that $\text{opt}_C(v) = \text{grd}_{C_3}(v-c_4)+1 = \text{grd}_{C}(v)$ for $c_4 < v < c_5$ by showing $\text{grd}_{C_3}(v-c_4)+1 \leq \text{grd}_{C_3}(v)$.

Without loss of generality, $c_4$ can be represented as $c_4 = 2c_2 + sc_3 + t$ for $s \in \mathbb{Z}_{\geq 0}$ and $0 \leq t < c_3$.
By using this representation, $\ell$, $\ell c_3-c_5$, and $\text{grd}_{C}(\ell c_3)$ are calculated as follows:
\begin{align}
\ell & = \lceil c_5 / c_3 \rceil \notag \\
& = 
\begin{cases}
\ s+2 & (0 \leq t < c_2) \\
\ s+3 & (c_2 \leq t < c_3)
\end{cases}, \label{EQA02} \\
\ell c_3 - c_5 &= 
\begin{cases}
\ c_2-t-1  & (0 \leq t < c_2) \\
\ 3c_2-t-2 & (c_2 \leq t < c_3)
\end{cases}, \notag \\
\text{grd}_{C}(\ell c_3) &= \ell c_3-c_5+1-\lfloor (\ell c_3-c_5)/c_2 \rfloor (c_2-1) \notag \\
& = 
\begin{cases}
\ c_2-t   & (0 \leq t < c_2) \\
\ c_3-t+1 & (c_2 \leq t < c_3)
\end{cases}. \label{EQA04}
\end{align}
From \eqref{EQA02} and \eqref{EQA04}, the following relationship holds:
\begin{align}
\text{grd}_C(\ell c_3) \leq \ell \quad \Longleftrightarrow \quad
\begin{cases}
\ s+t+2 \geq c_2 & (0 \leq t < c_2) \\
\ s+t+2 \geq c_3 & (c_2 \leq t < c_3)
\end{cases}. \label{EQA05}
\end{align}
In particular, we have
\begin{align}
\text{grd}_C(\ell c_3) \leq \ell \quad \Longleftrightarrow \quad s+t+2 \geq c_2  \quad (0 \leq t < c_3). \label{EQA05-2}
\end{align}
Since $c_4 < v < c_5$, $v$ can be represented as $v = c_4+u$ where $0 \leq u < c_2-1$.
Then,
\begin{align}
\text{grd}_{C_3}(v-c_4)+1 &= u+ 1 \label{EQA06}
\end{align}
and
\begin{align}
\text{grd}_{C_3}(v)       &= \text{grd}_{C_3}(c_4+u) \notag \\
&= \text{grd}_{C_3}(2c_2+sc_3+t+u) \notag \\
&= s+1 + \text{grd}_{C_3}(t+u+1) \notag
\end{align}
hold.
Since $0 \leq t < c_3$ and $0 \leq u < c_2 -1$, $t+u+1$ is less than $c_2 + c_3 -1$.
Hence,
\begin{align}
\text{grd}_{C_3}(v)       &= s+1 + \text{grd}_{C_3}(t+u+1) \notag \\
&=
\begin{cases}
\ (s+t+1) + (u+1)    & (0 < t+u+1 < c_2) \\
\ (s+t+2-c_2)+(u+1)  & (c_2 \leq t+u+1 < c_3) \\
\ (s+t+2-c_3)+(u+1)  & (c_3 \leq t+u+1 < c_2 + c_3 - 1) 
\end{cases}. \label{EQA07}
\end{align}
If $0 < t+u+1 < c_2$,
\begin{align}
\text{grd}_{C_3}(v-c_4)+1 \leq \text{grd}_{C_3}(v) \label{EQA09}
\end{align}
holds from \eqref{EQA06} and \eqref{EQA07};
if $c_2 \leq t+u+1 < c_3$, the inequality \eqref{EQA09} holds from \eqref{EQA05-2}, \eqref{EQA06}, and~\eqref{EQA07};
if $c_3 \leq t+u+1 < c_2 + c_3 - 1$, the inequality \eqref{EQA09} holds from \eqref{EQA05}, \eqref{EQA06}, and~\eqref{EQA07} because $c_2 \leq t$ when $c_3 \leq t+u+1$.
Therefore we have $\text{grd}_{C_3}(v-c_4)+1 \leq \text{grd}_{C_3}(v)$, which implies $\text{opt}_C(v) =\text{grd}_{C_3}(v-c_4)+1 = \text{grd}_C(v)$, and thus~$v \ (c_4 < v < c_5)$ is not a counterexample to~$C$.
\\

Suppose $c_5 < v < c_6$.
Then, $\text{grd}_{C}(v) = \text{grd}_{C_3}(v-c_5)+1$ holds.
The value $\text{opt}_C(v)$ is equal to $\text{grd}_{C_3}(v-c_5)+1$, $\text{grd}_{C_3}(v-c_4)+1$, or $\text{grd}_{C_3}(v)$.
We prove that $\text{opt}_C(v) = \text{grd}_{C_3}(v-c_5)+1 = \text{grd}_{C}(v)$ for $c_5 < v < c_6$ by showing $\text{grd}_{C_3}(v-c_5)+1 \leq \text{grd}_{C_3}(v-c_4)+1$ and $\text{grd}_{C_3}(v-c_5)+1 \leq \text{grd}_{C_3}(v)$.

Firstly, we prove $\text{grd}_{C_3}(v-c_5)+1 \leq \text{grd}_{C_3}(v-c_4)+1$.
Since $c_5 < v < c_6$, we have $0 < v-c_5 < c_4 - c_2$, which can be divided into $0 < v-c_5 < c_2$, $c_2 \leq v-c_5 < c_3$, and $c_3 \leq v-c_5 < c_4-c_2$.
We consider these three cases in the following.

If $0 < v-c_5 < c_2$, which is equivalent to $c_2-1 < v-c_4 < c_3$, the following relationships hold:
\begin{align*}
\text{grd}_{C_3}(v-c_5) &= v-c_5, \\
\text{grd}_{C_3}(v-c_4) &= v-c_4-c_2+1 \\
&= v- c_5.
\end{align*}
Thus we have $\text{grd}_{C_3}(v-c_5)+1 \leq \text{grd}_{C_3}(v-c_4)+1$ when $0 < v-c_5 < c_2$.

If $c_2 \leq v-c_5 < c_3$, which is equivalent to $c_3 < v-c_4 < c_2 + c_3 - 1$, the following relationships hold:
\begin{align*}
\text{grd}_{C_3}(v-c_5) &= v - c_5 - c_2 + 1, \\
\text{grd}_{C_3}(v-c_4) &= 1 + \text{grd}_{C_3}(v-c_4-c_3) \\
                        &= v - c_4 - 2c_2 + 2 \\
                        &= v - c_5 - c_2 + 1.
\end{align*}
Hence, $\text{grd}_{C_3}(v-c_5)+1 \leq \text{grd}_{C_3}(v-c_4)+1$ holds when $c_2 \leq v-c_5 < c_3$.

If $c_3 \leq v-c_5 < c_4 - c_2$, it is equivalent to $c_2+c_3-1 \leq v-c_4 < c_4 - 1$.
Assume $v-c_5$ is equal to $pc_3 + q$ where $p \in \mathbb{Z}_{> 0}$ and $0 \leq q < c_3$.
Then, $v - c_4 = pc_3 + c_2 + q-1$ holds.
In addition, we have
\begin{align*}
\text{grd}_{C_3}(v-c_5) &= p + \text{grd}_{C_3}(q) \\
&=
\begin{cases}
\ p          & (q=0) \\
\ p+q        & (0<q<c_2) \\
\ p+q-c_2+1  & (c_2\leq q<c_3)
\end{cases},\\
\text{grd}_{C_3}(v-c_4) &= \text{grd}_{C_3}(pc_3 + c_2 + q - 1) \\
&= p + \text{grd}_{C_3}(c_2 + q - 1) \\
&= 
\begin{cases}
\ p+c_2-1    & (q=0) \\
\ p+q        & (0<q<c_2) \\
\ p+q-c_2+1  & (c_2\leq q<c_3)
\end{cases}.
\end{align*}
Hence, $\text{grd}_{C_3}(v-c_5)+1 \leq \text{grd}_{C_3}(v-c_4)+1$ holds when $c_3 \leq v-c_5 < c_4 - c_2$.
We therefore obtain $\text{grd}_{C_3}(v-c_5)+1 \leq \text{grd}_{C_3}(v-c_4)+1$ for $c_5 < v < c_6$.

Secondly, we prove $\text{grd}_{C_3}(v-c_5)+1 \leq \text{grd}_{C_3}(v)$ when $c_5 < v < c_6$.
Let $D(v)$ be $D(v) = \text{grd}_{C_3}(v) - (\text{grd}_{C_3}(v-c_5)+1)$.
To show $\min_{c_5 < v < c_6}D(v) \geq 0$, we prove (a)--(e) in order:
\begin{itemize}
\item[(a)] $\min_{c_5 < v < c_6}D(v) = \min_{c_5 < v < c_5+c_3}D(v);$
\item[(b)] for $c_5 < v < c_5 + c_2 -1$, $D(c_5+c_2-1) \leq D(v);$
\item[(c)] for $c_5 + c_2 - 1 < v < c_5 + c_3 -1$, $D(c_5+c_3-1) \leq D(v);$
\item[(d)] $D(c_5+c_2-1) \leq D(c_5+c_3-1);$
\item[(e)] $D(c_5+c_2-1) \geq 0$.
\end{itemize}

(a).
The following relationships obviously hold:
\begin{align}
\text{grd}_{C_3}(v+ c_3)      &= \text{grd}_{C_3}(v) + 1,      \label{EQB01} \\
\text{grd}_{C_3}(v-c_5 + c_3) &= \text{grd}_{C_3}(v-c_5) + 1.  \label{EQB02} 
\end{align}
Subtracting \eqref{EQB02} from \eqref{EQB01}, we have
\begin{align}
D(v) = D(v+c_3). \label{EQB03}
\end{align}
Since $c_4 \geq 3c_2-1$, we have $c_6-c_5 = c_4-c_2 \geq c_3$, that is, $c_5 + c_3 \leq c_6$.
Thus,
\begin{align}
\min_{c_5 < v < c_6}D(v) = \min_{c_5 < v < c_5+c_3}D(v) \notag
\end{align}
holds from the periodicity \eqref{EQB03}.

(b).
We show that for $c_5 < v < c_5 + c_2 -1$, $D(c_5+c_2-1) \leq D(v)$.
The value $v \ (c_5 < v < c_5 + c_2 -1)$ can be represented as $v = c_5+c_2-1 -r$ for $0 < r < c_2-1$.
Then,
\begin{align}
D(c_5+c_2-1) & = \text{grd}_{C_3}(c_5+c_2-1) - \text{grd}_{C_3}(c_2-1) - 1 \notag \\
& = \text{grd}_{C_3}(c_5+c_2-1) - c_2, \label{EQB04} \\
D(c_5+c_2-1-r) & = \text{grd}_{C_3}(c_5+c_2-1-r) - \text{grd}_{C_3}(c_2-1-r) -1  \notag \\
& = \text{grd}_{C_3}(c_5+c_2-1-r) - c_2+r. \label{EQB05}
\end{align}
Assume $c_5+c_2-1 = p c_3 + q$ where $ p \in \mathbb{Z}_{> 0}$ and $0 \leq q < c_3$.
Then $\text{grd}_{C_3}(c_5+c_2-1)$ and $\text{grd}_{C_3}(c_5+c_2-1-r)$ are calculated as follows:
\begin{align}
\text{grd}_{C_3}(c_5+c_2-1) & = p + \text{grd}_{C_3}(q) \notag \\
& = \begin{cases}
\ p + q     & (0 \leq q < c_2) \\
\ p+q-c_2+1 & (c_2 \leq q < c_3) \\
\end{cases}, \label{EQB06}\\
\text{grd}_{C_3}(c_5+c_2-1-r) & = \notag
\begin{cases}
\ p+ \text{grd}_{C_3}(q-r) & (r \leq q) \\
\ p-1 + \text{grd}_{C_3}(c_3+q-r) & (r > q) \\
\end{cases} \\
& = \begin{cases}
\ p+q-r & (r \leq q \text{ and } q-r<c_2) \\
\ p+q-r-c_2+1 & (r \leq q \text{ and } q-r \geq c_2) \\
\ p+q-r+c_2-1 & (r>q)
\end{cases}.\label{EQB07}
\end{align}
Rearranging \eqref{EQB04}--\eqref{EQB07}, we have
\begin{align}
D(c_5+c_2-1) &= 
\begin{cases}
\ p + q - c_2 &(0   \leq q < c_2) \\
\ p + q - c_3 &(c_2 \leq q < c_3) \\
\end{cases}, \notag \\
D(c_5+c_2-1-r) &= 
\begin{cases}
\ p + q - c_2 & (r \leq q \text{ and }  q-r< c_2) \\
\ p + q - c_3 & (r \leq q \text{ and }  q-r \geq c_2) \\
\ p + q - 1   & (r > q) \\
\end{cases}. \notag
\end{align}
When $D(c_5+c_2-1-r)=p+q-c_3$, we have $q-r \geq c_2$, that is, $q \geq c_2 +r$, and thus $D(c_5+c_2-1) = p + q - c_3 = D(c_5+c_2-1-r)$; otherwise, $D(c_5+c_2-1-r)$ is at least $p+q-c_2$, whereas $D(c_5+c_2-1)$ is at most $p+q-c_2$.
Hence, $D(c_5+c_2-1) \leq D(v)$ holds for $c_5 < v < c_5 + c_2 -1$.

(c).
As in (b), we show that $D(c_5+c_3-1) \leq D(v)$ for $c_5+c_2-1 < v < c_5+c_3-1$.
The value $v \ (c_5 + c_2 -1 < v < c_5+c_3-1)$ can be represented as $v = c_5+c_3-1 -r'$ for $0 < r' < c_2-1$.
Then,
\begin{align}
D(c_5+c_3-1) & = \text{grd}_{C_3}(c_5+c_3-1) - \text{grd}_{C_3}(c_3-1) - 1 \notag \\
& = \text{grd}_{C_3}(c_5+c_3-1) - c_2, \label{EQB08} \\
D(c_5+c_3-1-r') & = \text{grd}_{C_3}(c_5+c_3-1-r') - \text{grd}_{C_3}(c_3-1-r') -1  \notag \\
& = \text{grd}_{C_3}(c_5+c_3-1-r') - c_2+r'. \label{EQB09}
\end{align}
Assume $c_5+c_3-1 = p' c_3 + q'$ where $ p' \in \mathbb{Z}_{> 0}$ and $0 \leq q' < c_3$.
Then $\text{grd}_{C_3}(c_5+c_3-1)$ and $\text{grd}_{C_3}(c_5+c_3-1-r')$ are calculated as follows:
\begin{align}
\text{grd}_{C_3}(c_5+c_3-1) & = p' + \text{grd}_{C_3}(q') \notag \\
& = \begin{cases}
\ p' + q'     & (0 \leq q' < c_2) \\
\ p'+q'-c_2+1 & (c_2 \leq q' < c_3) \\
\end{cases}, \label{EQB10}\\
\text{grd}_{C_3}(c_5+c_3-1-r') & = \notag
\begin{cases}
\ p'+ \text{grd}_{C_3}(q'-r')        & (r' \leq q') \\
\ p'-1 + \text{grd}_{C_3}(c_3+q'-r') & (r' > q') \\
\end{cases} \\
& = \begin{cases}
\ p'+q'-r' & (r' \leq q' \text{ and }  q'-r'<c_2) \\
\ p'+q'-r'-c_2+1 & (r' \leq q' \text{ and }  q'-r' \geq c_2) \\
\ p'+q'-r'+c_2-1 & (r'>q')
\end{cases}.\label{EQB11}
\end{align}
Rearranging \eqref{EQB08}--\eqref{EQB11}, we have
\begin{align}
D(c_5+c_3-1) &= 
\begin{cases}
\ p' + q' - c_2 &(0   \leq q' < c_2) \\
\ p' + q' - c_3 &(c_2 \leq q' < c_3) \\
\end{cases}, \notag \\
D(c_5+c_3-1-r') &= 
\begin{cases}
\ p' + q' - c_2 & (r' \leq q' \text{ and } q'-r'< c_2) \\
\ p' + q' - c_3 & (r' \leq q' \text{ and } q'-r' \geq c_2) \\
\ p' + q' - 1   & (r' > q') \\
\end{cases}.\notag
\end{align}
When $D(c_5+c_3-1-r')=p'+q'-c_3$, we have $q'-r' \geq c_2$, that is, $q' \geq c_2 +r'$, and thus $D(c_5+c_3-1) = p' + q' - c_3 = D(c_5+c_3-1-r')$;
otherwise, $D(c_5+c_3-1-r')$ is at least $p'+q'-c_2$, whereas $D(c_5+c_3-1)$ is at most $p'+q'-c_2$.
Hence, $D(c_5+c_3-1) \leq D(v)$ holds for $c_5 + c_2 - 1< v < c_5 + c_3 -1$.

(d).
We show $D(c_5+c_2-1) \leq D(c_5+c_3-1)$.
From \eqref{EQB04} and \eqref{EQB08}, we have $D(c_5+c_2-1) = \text{grd}_{C_3}(c_5+c_2-1) - c_2$ and $D(c_5+c_3-1) = \text{grd}_{C_3}(c_5+c_3-1) - c_2$, respectively.
Assume $c_5+c_2-1 = p c_3 + q$ where $ p \in \mathbb{Z}_{> 0}$ and $0 \leq q < c_3$.
Then $\text{grd}_{C_3}(c_5+c_2-1)$ is given by \eqref{EQB06}, and by using $p$ and $q$, $\text{grd}_{C_3}(c_5+c_3-1)$ is calculated as follows:
\begin{align}
\text{grd}_{C_3}(c_5+c_2-1)
& =
\begin{cases}
\ p + q     & (0 \leq q < c_2) \\
\ p+q-c_2+1 & (c_2 \leq q < c_3) \\
\end{cases},  \label{EQB14} \\
\text{grd}_{C_3}(c_5+c_3-1) &= \text{grd}_{C_3}(pc_3+q+c_2-1) \notag \\
& =
\begin{cases}
\ p + \text{grd}_{C_3}(q+c_2-1) & (0 \leq q < c_2) \\
\ p+1 + \text{grd}_{C_3}(q-c_2)                     & (c_2 \leq q < c_3) \\
\end{cases} \notag \\
&=
\begin{cases}
\ p + c_2-1 & (q = 0) \\
\ p + q                     & (0 < q < c_2) \\
\ p+q -c_2 +1               & (c_2 \leq q < c_3) \\
\end{cases}. \label{EQB15}
\end{align}
From \eqref{EQB14} and \eqref{EQB15}, $\text{grd}_{C_3}(c_5+c_2-1) \leq \text{grd}_{C_3}(c_5+c_3-1)$ when $q=0$; otherwise $\text{grd}_{C_3}(c_5+c_2-1) = \text{grd}_{C_3}(c_5+c_3-1)$.
Therefore we obtain $D(c_5+c_2-1) \leq D(c_5+c_3-1)$.

(e).
We show $D(c_5+c_2-1) \geq 0$.
The value $c_4$ can be represented as $c_4 = 2c_2 + sc_3 + t$ for $s \in \mathbb{Z}_{\geq 0}$ and $0 \leq t < c_3$.
Using this representation, we have
\begin{align}
D(c_5+c_2-1) &= \text{grd}_{C_3}(c_5 + c_2 -1) - \text{grd}_{C_3}(c_2 -1) - 1 \notag \\
&= \text{grd}_{C_3}(c_5 + c_2 -1) - c_2 \notag \\
&= \text{grd}_{C_3}((s+2)c_3 + t) - c_2 \notag \\
&= s+2 + \text{grd}_{C_3}(t) - c_2 \notag \\
&=
\begin{cases}
\ s+t+2-c_2 & (0 \leq t < c_2)   \\
\ s+t+2-c_3 & (c_2 \leq t < c_3)
\end{cases}. \notag
\end{align}
Note that the relationship~\eqref{EQA05} holds not only for $c_4<v<c_5$ but also for $c_5<v<c_6$.
Thus, we have $D(c_5+c_2-1) \geq 0$.

From (a)--(e), we conclude that $\text{grd}_{C_3}(v-c_5)+1 \leq \text{grd}_{C_3}(v)$ and $v$ is not a counterexample to~$C$ when $c_5 < v < c_6$.
\\

From the above discussion, we conclude that $C$ is tight, which directly implies that $C'$ is also tight.
In addition, $C_3$ is canonical and $C'$ is noncanonical.
From Theorem~\ref{Thm:tight}, if $C$ is noncanonical, then there exist $i$ and $j$ such that $1 < i \leq j \leq 5$, $c_i + c_j > c_{6} = 2c_4-1$, and $c_i + c_j$ is a counterexample to~$C$.
The pairs $(i,j) = (4,4)$, $(4,5)$, and $(5,5)$ can be such ones.
The equality $\text{opt}_C(c_4+c_4) = \text{opt}_C(c_4+c_5) = \text{opt}_C(c_5+c_5) = 2$ holds because $c_4 + c_4 > c_6$, $c_4 + c_5 > c_6$, and $c_5 + c_5 > c_6$.
On the other hand, $\text{grd}_C(c_4+c_4) = \text{grd}_C(c_4+c_5) = \text{grd}_C(c_5+c_5) = 2$ holds because $c_4 + c_4 = c_6 + 1$, $c_4 + c_5 = c_2 + c_6$, and $c_5 + c_5 = c_3 + c_6$.
Thus, $c_4+c_4$, $c_4+c_5$, and $c_5+c_5$ are not counterexamples to $C$, and therefore $C$ is canonical.
\end{proof}

\section{Proof of Lemma~\ref{Lem:g_and_m-3}} \label{appendB}

This appendix describes the proof of Lemma~\ref{Lem:g_and_m-3}, which states the following proposition.
\begin{quote}
Assume $C = (1,c_2,c_3,c_4,c_5,c_6) = (1,c_2,2c_2,c_4,c_2+c_4,2c_4)$, $c_4 \geq 3c_2-1$, $c_4 \neq 3c_2$, $\text{grd}_{C}(\ell c_3) = \ell c_3-c_5+1-\lfloor (\ell c_3-c_5)/c_2 \rfloor (c_2-1)$, and $\text{grd}_{C}(\ell c_3)\leq \ell$ for $\ell = \lceil c_5/c_3 \rceil$.
Then $C$ is canonical and the subsystem $C' = (1,c_2,c_3,c_4,c_5) = (1,c_2,2c_2,c_4, c_2+c_4)$ is noncanonical.
\end{quote}

\begin{proof}
The value $c_6 = 2 c_4$ is a counterexample to $C'$ because $\text{opt}_{C'}(2c_4) = 2$ and $\text{grd}_{C'}(2c_4) > 2$, which is shown as follows.
From the assumption, $c_4 = 3c_2 -1$ or $c_4 \geq 3c_2+1$.
When $c_4 = 3 c_2 -1$, $\text{grd}_{C'}(2c_4) = \text{grd}_{C'}(2c_4 - c_5) + 1= \text{grd}_{C'}(2c_4 - (c_2+c_4)) +1 = \text{grd}_{C'}(2 c_2-1) +1 > 2$.
When $c_4 \geq 3 c_2 +1$, $\text{grd}_{C'}(2c_4) = \text{grd}_{C'}(2c_4-c_5)+1=\text{grd}_{C'}(2c_4 - (c_2+c_4)) +1= \text{grd}_{C'}(c_4-c_2)+1$.
Since $c_4-c_2 \geq 2c_2+1 = c_3+1$, $\text{grd}_{C'}(c_4-c_2) > 1$ and thus $\text{grd}_{C'}(2c_4)>2$.
Hence, $C'$ is noncanonical.

We show that $C$ is tight; that is, no counterexample to $C$ exists that is less than or equal to~$c_6$. 
Let $C_3$ be the subsystem of $C$ with the leading three types of coins.
Since $C_3 = (1,c_2,c_3)=(1,c_2,2c_2)$, $C_3$ is canonical.

Consider paying $v$ in $C$ and analyze $\text{grd}_C(v)$.
When $v < c_4$, the equality $\text{grd}_C(v)= \text{grd}_{C_3}(v)$ holds because $v < c_4$.
In addition, $\text{grd}_{C_3}(v)= \text{opt}_{C_3}(v)$ because $C_3$ is canonical.
Hence, if $v < c_4$, $\text{grd}_C(v)= \text{opt}_{C}(v)$ holds and $v$ is not a counterexample to $C$.
\\

Suppose $c_4 < v < c_5$.
Then, $\text{grd}_{C}(v) = \text{grd}_{C_3}(v-c_4)+1$ holds because $v-c_4 < c_2$.
The value $\text{opt}_C(v)$ is equal to $\text{grd}_{C_3}(v-c_4)+1$ or $\text{grd}_{C_3}(v)$.
We prove that $\text{opt}_C(v) = \text{grd}_{C_3}(v-c_4)+1 = \text{grd}_{C}(v)$ and $v$ is not a counterexample to $C$ by showing $\text{grd}_{C_3}(v-c_4)+1 \leq \text{grd}_{C_3}(v)$.

Assume that $v\ (c_4 < v < c_5)$ is the minimum counterexample to $C$.
From Lemma~\ref{Lem:coin1}, we can set $v=pc_2+qc_4$ where $p \in \mathbb{Z}_{> 0}$ and $q \in \{0,1\}$.
Since $c_4 < v < c_5=c_2+c_4$, $q$ is equal to zero when $c_4 < v < c_5$.
Thus we have $v=pc_2$~and
\begin{align}
\text{grd}_{C_3}(v) &= \text{grd}_{C_3}(pc_2) \notag \\
                    &=
\begin{cases}
\ p/2 & (p \text{ is even}) \\
\ (p-1)/2+1 & (p \text{ is odd})
\end{cases}.\label{EQ1}
\end{align}
Without loss of generality, we may assume $c_4 = sc_2 - t$ where $s \in \mathbb{Z}_{>0}$ and $0 \leq t < c_2$.
When $c_4 < v < c_5$, we have $c_4 < v=pc_2 < c_5 = c_4 + c_2$ and therefore $s=p$ holds.
Hence,
\begin{align}
\text{grd}_{C_3}(v-c_4) &= \text{grd}_{C_3}(pc_2-sc_2+t) \notag \\
                        &= \text{grd}_{C_3}(t) \notag \\
                        &= t. \label{EQ2}
\end{align}
In addition, since $\ell = \lceil c_5/c_3 \rceil = \lceil (c_2+c_4)/2c_2\rceil = \lceil (sc_2 + c_2 - t)/2c_2\rceil = \lceil (pc_2 + c_2 - t)/2c_2\rceil$,
\begin{align}
\ell = \begin{cases}
\ p/2+1     & (p \text{ is even})    \\
\ (p-1)/2+1 & (p \text{ is odd}) 
\end{cases}.\label{EQ3}
\end{align}
Summarizing $\text{grd}_C(\ell c_3) = \ell c_3-c_5+1-\lfloor (\ell c_3-c_5)/c_2 \rfloor (c_2-1)$, $c_5 = c_2+c_4$, $c_4 = sc_2-t = pc_2-t$, and $c_3=2c_2$, we~have
\begin{align}
\text{grd}_{C}(\ell c_3) = \begin{cases}
\ t+2 & (p \text{ is even}) \\
\ t+1 & (p \text{ is odd})
\end{cases}.\label{EQ4}
\end{align}
From \eqref{EQ1}--\eqref{EQ4},
\begin{align*}
\text{grd}_{C_3}(v) - (\text{grd}_{C_3}(v-c_4) + 1) &= \begin{cases}
\ (p/2)-(t+1)       & (p \text{ is even}) \\
\ ((p-1)/2+1)-(t+1) & (p \text{ is odd})
\end{cases}\\
&= \ell-\text{grd}_C(\ell c_3).
\end{align*}
Since $\text{grd}_C(\ell c_3) \leq \ell$ holds, we have $\text{grd}_{C_3}(v) - (\text{grd}_{C_3}(v-c_4) + 1) \geq 0$.

As mentioned before, for $c_4 < v < c_5$, $\text{opt}_C(v)$ is equal to $\text{grd}_{C_3}(v-c_4)+1$ or $\text{grd}_{C_3}(v)$.
Now we have $\text{grd}_{C_3}(v-c_4)+1 \leq \text{grd}_{C_3}(v)$, which implies $\text{opt}_C(v) =\text{grd}_{C_3}(v-c_4)+1 = \text{grd}_C(v)$, and thus $v \ (c_4 < v < c_5)$ is not a counterexample~to~$C$.
\\

Suppose $c_5 < v < c_6$.
Since $v-c_5 < c_4-c_2$, $\text{grd}_{C}(v) = \text{grd}_{C_3}(v-c_5)+1$ holds.
In addition, we have $v-c_4 < c_4$, and thus $\text{opt}_C(v)$ is equal to $\text{grd}_{C_3}(v-c_5)+1$, $\text{grd}_{C_3}(v-c_4)+1$, or $\text{grd}_{C_3}(v)$.
We prove that $\text{opt}_C(v) = \text{grd}_{C_3}(v-c_5)+1 = \text{grd}_{C}(v)$ for $c_5 < v < c_6$ by showing $\text{grd}_{C_3}(v-c_5)+1 \leq \text{grd}_{C_3}(v)$ and $\text{grd}_{C_3}(v-c_5)+1 \leq \text{grd}_{C_3}(v-c_4)+1$.

Assume that $v\ (c_5 < v < c_6)$ is the minimum counterexample to $C$.
From Lemma~\ref{Lem:coin1}, we can set $v=pc_2+qc_4$ where $p \in \mathbb{Z}_{> 0}$ and $q \in \{0,1\}$.
In the following, we first consider the case of $q=0$ and then that of $q=1$.

Suppose $q=0$, that is, $v = pc_2$.
In addition, let $c_4 = sc_2 - t$ where $s \in \mathbb{Z}_{>0}$ and $0 \leq t < c_2$.
Since $v-c_4 > c_5-c_4 = c_2$, we have $p \geq s+1$.
The values of $\text{grd}_{C_3}(v-c_4)$ and $\text{grd}_{C_3}(v-c_5)$ depend on the parity of $p-s\/$:
\begin{align}
\text{grd}_{C_3}(v-c_4) &= \text{grd}_{C_3}((p-s)c_2+t) \notag \\
&=
\begin{cases}
\ (p-s)/2+t      & (p-s \text{ is even}) \\
\ (p-s-1)/2+1+t  & (p-s \text{ is odd})
\end{cases}, \label{EQ5} \\
\text{grd}_{C_3}(v-c_5) &=  \text{grd}_{C_3}((p-s-1)c_2+t) \notag \\
&=
\begin{cases}
\ (p-s-2)/2+1+t      & (p-s \text{ is even}) \\
\ (p-s-1)/2+t    & (p-s \text{ is odd})
\end{cases}. \label{EQ6} 
\end{align}
From \eqref{EQ5} and \eqref{EQ6}, $\text{grd}_{C_3}(v-c_5) \leq \text{grd}_{C_3}(v-c_4)$ holds.

As for $c_4 < v < c_5$, the values of $\text{grd}_{C_3}(v)$, $\ell$, and $\text{grd}_{C}(\ell c_3)$ are given as follows:
\begin{align}
\text{grd}_{C_3}(v) &= 
\begin{cases}
\   p/2       & (p \text{ is even}) \\
\  (p-1)/2+1  &  (p \text{ is odd})
\end{cases}, \label{EQ7}
\\
\ell &=
\begin{cases}
\ s/2+1      & (s \text{ is even}) \\
\ (s-1)/2+1  & (s \text{ is odd})
\end{cases}, \label{EQ8}
\\
\text{grd}_{C}(\ell c_3) &=
\begin{cases}
\ t+2      & (s \text{ is even}) \\
\ t+1      & (s \text{ is odd})
\end{cases}. \label{EQ9}
\end{align}
Rearranging \eqref{EQ6}--\eqref{EQ9}, we have
\begin{align}
\text{grd}_{C_3}(v) - (\text{grd}_{C_3}(v-c_5)+1) 
&= \begin{cases}
\ p/2-((p-s)/2+t+1)               & (p \text{ and } s \text{ are even}) \\
\ p/2-((p-s-1)/2+t+1)             & (p \text{ is even, } s \text{ is odd})  \\
\ (p+1)/2-((p-s-1)/2+t+1)         & (p \text{ is odd, } s \text{ is even})  \\
\ (p+1)/2 -((p-s)/2 + t+ 1)       & (p \text{ and } s \text{ are odd})  
\end{cases} \notag\\
&= \begin{cases}
\ s/2-(t+1)              & (p \text{ and } s \text{ are even}) \\
\ (s+1)/2-(t+1)          & (p \text{ is even, } s \text{ is odd})  \\
\ (s+2)/2-(t+1)          & (p \text{ is odd, } s \text{ is even})  \\
\ (s+1)/2 -(t+1)         & (p \text{ and } s \text{ are odd})  
\end{cases} \notag\\
&= \begin{cases}
\ \ell - \text{grd}_C(\ell c_3)+1        & (p \text{ is odd, } s \text{ is even})  \\
\ \ell - \text{grd}_C(\ell c_3)          & (\text{otherwise})  
\end{cases}. \notag
\end{align}
Since $\text{grd}_C(\ell c_3) \leq \ell$, $\text{grd}_{C_3}(v-c_5) +1\leq \text{grd}_{C_3}(v)$ holds for $c_5 < v < c_6$ and $v=pc_2$.

Suppose $q=1$, that is, $v = pc_2 + c_4$.
Since $v > c_5=c_2+c_4$, we have $p \geq 2$.
The values of $\text{grd}_{C_3}(v-c_4)$ and $\text{grd}_{C_3}(v-c_5)$ depend on the parity of $p\/$:
\begin{align}
\text{grd}_{C_3}(v-c_4) &= \text{grd}_{C_3}((pc_2+c_4)-c_4) \notag \\
                        &= \text{grd}_{C_3}(pc_2)  \notag \\
                        &= 
\begin{cases}
\  p/2      & (p \text{ is even}) \\
\ (p-1)/2+1   & (p \text{ is odd})
\end{cases}, \label{EQ10} \\
\text{grd}_{C_3}(v-c_5) &=  \text{grd}_{C_3}((pc_2+c_4)-(c_2+c_4)) \notag \\
                        &= \text{grd}_{C_3}((p-1)c_2) \notag \\
                        &=
\begin{cases}
\ p/2      & (p \text{ is even}) \\
\ (p-1)/2    & (p \text{ is odd})
\end{cases}. \label{EQ11} 
\end{align}
From \eqref{EQ10} and \eqref{EQ11}, we obtain $\text{grd}_{C_3}(v-c_5) +1 \leq \text{grd}_{C_3}(v-c_4) +1$.

Let $c_4 = s'c_2 + t'$ where $s' \in \mathbb{Z}_{>0}$ and $0 \leq t' < c_2$.
Then, from the assumption $c_4 \geq 3c_2-1$, we have $s' \geq 2$.
The value of $\text{grd}_{C_3}(v)$ is given as follows:
\begin{align}
\text{grd}_{C_3}(v) &= \text{grd}_{C_3}(pc_2+c_4) \notag \\
                    &= \text{grd}_{C_3}((p+s')c_2+t') \notag \\
                    &= 
\begin{cases}
\  (p+s')/2 +t'     & (p+s' \text{ is even}) \\
\ (p+s'-1)/2+1+t'   & (p+s' \text{ is odd})
\end{cases}.  \notag
\end{align}
Thus, $\text{grd}_{C_3}(v)$ is at least $(p+s')/2 +t'$ where $s'\geq 2$ and $t' \geq 0$.
From~\eqref{EQ11}, the value of $\text{grd}_{C_3}(v-c_5)+1$ is at most $p/2+1$.
Therefore $\text{grd}_{C_3}(v-c_5)+1 \leq \text{grd}_{C_3}(v)$ holds. \\

From the above discussion, we conclude that $C$ is tight, which directly implies that $C'=(1,c_2,c_3,c_4,c_5)$ is also tight.
In addition, $C_3$ is canonical and $C'$ is noncanonical.
From Theorem~\ref{Thm:tight}, if $C$ is noncanonical, then there exist $i$ and $j$ such that $1 < i \leq j \leq 5$, $c_i + c_j > c_{6} = 2c_4$, and $c_i + c_j$ is a counterexample to~$C$.
Only $(i,j) =(4,5)$ and $(5,5)$ can be such pairs, but $\text{opt}_C(c_4+c_5) > 1$ and $\text{opt}_C(c_5+c_5) > 1$ because $c_4 + c_5 > c_6$ and $c_5 + c_5 > c_6$, respectively.
On the other hand, $\text{grd}_C(c_4+c_5) = 2$ because $c_4 + c_5 = c_6 + c_2$, and $\text{grd}_C(c_5+c_5) = 2$ because $c_5 + c_5 = c_6 + c_3$.
Thus $c_4+c_5$ and $c_5+c_5$ are not counterexamples to $C$, and $C$ is canonical.
\end{proof}





\end{document}